\def\dOi{12(2:8)2016}
\subjclass{F.4.1 "Mathematical Logic"}
\title[Two-variable Logic with Counting and a Linear Order]{Two-variable Logic with~Counting~and~a~Linear~Order}
\thanks{Research supported by NCN Grant no.  2011/03/B/ST6/00346}
\author[W.~Charatonik]{Witold Charatonik}
\address{Institute of Computer Science, University of Wroclaw, Poland}
\email{Witold.Charatonik@cs.uni.wroc.pl, pwit@pwit.info}
\author[P.~Witkowski]{Piotr Witkowski}
\address{\vspace{-18 pt}}
\keywords{Two-variable logic, counting quantifiers, linear order, satisfiability, complexity}
\begin{document}
\newcommand{\Allow}{\mathit{Allowed}\xspace}
\newcommand{\Visit}{\mathit{Visited}\xspace}

\newcommand{\MPending}{\overrightarrow{\mathsf{Processed}}\xspace}
\newcommand{\UPending}{\overrightarrow{\mathsf{Cut}}\xspace}

\newcommand{\UpArrow}[1]{{#1}^<}
\newcommand{\DownArrow}[1]{{#1}^>}

\newcommand{\Kings}{K\xspace}
\newcommand{\Nat}{\mathbb{N}\xspace}
\newcommand{\Zet}{\mathbb{Z}\xspace}
\newcommand{\ST}{\mathrm{ST}\xspace}
\newcommand{\Partial}[1]{\mathit{partial}(#1)\xspace}
\newcommand{\tpf}[1]{\mathrm{tp}_1(#1)\xspace}
\newcommand{\tps}[1]{\mathrm{tp}_2(#1)\xspace}
\newcommand{\Guess}{\textbf{guess}\xspace}
\newcommand{\with}{\textbf{with}\xspace}
\newcommand{\Reject}{\textbf{Reject}\xspace}

\newcommand{\inc}{\textbf{inc}\xspace}
\newcommand{\dec}{\textbf{dec}\xspace}

\newcommand{\IIf}{\textbf{if}\xspace}
\newcommand{\IThen}{\textbf{then}\xspace}
\newcommand{\IElse}{\textbf{else}\xspace}
\newcommand{\IEndIf}{\textbf{endif}\xspace}

\newcommand{\IWhile}{\textbf{while}\xspace}
\newcommand{\IDo}{\textbf{do}\xspace}
\newcommand{\IEndWhile}{\textbf{endwhile}\xspace}

\newcommand{\IForAll}{\textbf{for all}\xspace}
\newcommand{\IEndFor}{\textbf{end for}\xspace}

\newcommand{\State}{\textbf{State}\xspace}
\newcommand{\StState}{\textbf{Initial State}\xspace}
\newcommand{\AccState}{\textbf{Accepting State}\xspace}
\newcommand{\Transition}{\textbf{Transition}\xspace}

\newcommand{\Frame}{{\mathcal{F}}}

\newcommand{\ClassO}{{\mathcal{O}}(\Sigma,\Lin,\Succ)}

\newcommand{\NEXPTIME}{\textsc{NExpTime}}
\newcommand{\EXPTIME}{\textsc{ExpTime}}
\newcommand{\EXPSPACE}{\textsc{ExpSpace}}

\newcommand{\FOTwo}{$\mathrm{FO}^2$\xspace}

\newcommand{\CTwo}{$\mathrm{C}^2$\xspace}

\newcommand{\Succ}{{+\!\!1}}
\newcommand{\Pred}{-\!1}
\newcommand{\Lin}{<}
\newcommand{\LinEq}{\leq}

\newcommand{\true}{\mathbf{true}}
\newcommand{\false}{\mathbf{false}}
\newcommand{\oor}{\mathbf{or}}
\newcommand{\aand}{\mathbf{and}}

\newcommand{\Var}{\mathrm{V}_{\mathit{fin}}}
\newcommand{\Vect}{\mathrm{Vec}_{\Nat}}
\newcommand{\Type}{\mathrm{Type}}
\newcommand{\Expr}{\mathit{Expr}}
\newcommand{\Test}{\mathit{Test}}

\newcommand{\sem}[3]{\left({#1},{#2}\right)\leadsto{#3}}

\newcommand{\CTwoGen}[3]{\mbox{$\mathrm{C}^2\left(#1,#2,#3\right)$}\xspace}

\newcommand{\CTwoGenSimple}[2]{\mbox{$\mathrm{C}^2\left(#1,#2\right)$}\xspace}

\newcommand{\CTwoLin}[1]{\mbox{$\mathrm{C}^2\left(#1\right)$}\xspace}

\newcommand{\FOTwoLin}[1]{\mbox{$\mathrm{FO}^2\left(#1\right)$}\xspace}

\newcommand{\fSucc}{\Succ_{1}}
\newcommand{\fLin}{<_{1}}
\newcommand{\sSucc}{\Succ_{2}}
\newcommand{\sLin}{<_{2}}

\newcommand{\CTwoOneLin}{\CTwoLin{*,*,\{\Lin\}}}

\newcommand{\CTwoOneLinPlus}{\CTwoLin{*,*,\{\Lin\}}}

\newcommand{\CTwoOneLinS}{\CTwoLin{*,*,\{\Lin,\Succ\}}}

\newcommand{\CTwoLinUndecid}{\CTwoLin{*,2,\{\fLin,\sLin\}}}

\newcommand{\CTwoLinUndecidS}{\CTwoLin{*,0,\{\fLin,\fSucc,\sLin,\sSucc\}}}

\newcommand{\CTwoLinUndecidA}{\CTwoLin{*,*,\{\fLin,\sLin\}}}

\newcommand{\FOTwoLinUndecidS}{\FOTwoLin{*,0, \{\fLin,\fSucc,\sLin,\sSucc\}}}

\newcommand{\CTwoOnePrec}{\CTwoLin{*,*,\{\prec\}}}
\newcommand{\CTwoTwoPrec}{\CTwoLin{*,2,\{\prec\}}}

\newcommand{\CTwoLinUnknownA}{\CTwoLin{*,0,\{\fLin,\sLin\}}}

\newcommand{\CTwoLinUnknownB}{\CTwoLin{*,1,\{\fLin,\sLin\}}}

\newcommand{\CTwoLinSucc}{\CTwoLin{*,0,\{\Lin,\Succ\}}}

\newcommand{\FOTwoLinSucc}{\FOTwoLin{*,0,\{\Lin,\Succ\}}}

\newcommand{\CTwoSBin}{\CTwoLin{*,1,\{\Lin,\Succ\}}}

\newcommand{\CTwoLinOne}{\CTwoLin{*,1,\{\Lin\}}}

\newcommand{\ie}{i.\,e.,\xspace}
\newcommand{\Ie}{I.\,e.,\xspace}
\newcommand{\eg}{e.\,g.,\xspace}
\newcommand{\Eg}{E.\,g.,\xspace}
\newcommand{\cf}{cf.\xspace}
\newcommand{\Cf}{Cf.\xspace}
\newcommand{\wrt}{w.r.t.\ }
\newcommand{\resp}{resp.\xspace}
\newcommand{\Wrt}{W.r.t.\xspace}
\newcommand{\wlg}{w.l.o.g.\xspace}
\newcommand{\Wlg}{W.l.o.g.\xspace}

\newcommand{\tuple}[1]{\langle#1\rangle} 
\newcommand{\tpu}{\mathrm{tp}^{\str A}}
\newcommand{\tp}[1]{\mathrm{tp}^{#1}}

\newcommand{\forwardtp}[1]{\mathrm{tp}_{>}^{#1}}
\newcommand{\backwardtp}[1]{\mathrm{tp}_{<}^{#1}}

\newcommand{\otp}[1]{\mathrm{tp}_{#1}}
\newcommand{\stp}[1]{\mathrm{st}^{#1}}

\newcommand{\K}{{\mathcal{K}}}
\newcommand{\M}{{\mathcal{M}}}
\newcommand{\sigmacphigraph}{G_{\Sigma,c}^\varphi}
\newcommand{\sigmacgraph}{G_{\Sigma,c}}

\newcommand{\str}[1]{{\mathcal{#1}}}

\theoremstyle{definition}
\newtheorem{proposition}[thm]{Proposition}
\newtheorem{definition}[thm]{Definition}
\newtheorem{remark}[thm]{Remark}
\theoremstyle{plain}
\newtheorem{theorem}[thm]{Theorem}
\newtheorem{lemma}[thm]{Lemma}
\newtheorem{corollary}[thm]{Corollary}

\newcommand{\tsucc}{\theta_{\Succ}}
\newcommand{\tprec}{\theta_{\Pred}}
\newcommand{\tless}{\theta_{<}}
\newcommand{\tgreat}{\theta_{>}}
\newcommand{\teq}{\theta_{=}}
\newcommand{\trans}{\mathit{trans}}
\newcommand{\conf}{\mathit{conf}}

\begin{abstract}
  We study the finite satisfiability problem for the two-variable
  fragment of first-order logic extended with counting quantifiers
  (\CTwo) and interpreted over linearly ordered structures. We show
  that the problem is undecidable in the case of two linear orders (in
  the presence of two other binary symbols). In the case of one linear
  order it is \NEXPTIME-complete, even in the presence of the
  successor relation. Surprisingly, the complexity of the problem
  explodes when we add one binary symbol more: \CTwo with one linear
  order and in the presence of other binary predicate symbols is
  equivalent, under elementary reductions, to the emptiness problem
  for multicounter automata.
\end{abstract}

\maketitle

\section{Introduction}
Since 1930s, when Alonzo Church and Alan Turing proved that the
satisfiability problem for first-order logic is undecidable, much
effort was put to find decidable subclasses of this logic. One of the
most prominent decidable cases is the two-variable fragment
\FOTwo.  \FOTwo is particularly important in computer science because
of its decidability and connections with other formalisms like modal,
temporal or description logics or applications in XML or ontology
reasoning. The satisfiability of \FOTwo was proved to be decidable in
\cite{Scott1962,Mortimer75} and \NEXPTIME-complete in
\cite{GrKoVa97}.

All decidable fragments of first-order logic have limited expressive
power and a lot of effort is being put to extend them beyond
first-order logic while preserving decidability.  Many extensions of
\FOTwo, in particular with transitive closure or least fixed-point
operators, quickly lead to undecidability
\cite{GraedelOttoRosen99,IRRSY-CSL04}. Extensions that go beyond first
order logic, but their (finite) satisfiability problem remains
decidable, include \FOTwo over restricted classes of structures where
one \cite{KieronskiOttoLics05} or two relation symbols
\cite{KieronskiT09} are interpreted as equivalence relations (but
there are no other binary symbols); where one \cite{Otto2001} or two
relations are interpreted as linear orders~\cite{SchwentickZ10}; where
two relations are interpreted as successors of two linear
orders~\cite{Manuel10, Figueira2012, CharatonikWitkowski-lics13};
where one relation is interpreted as linear order, one as its
successor and another one as equivalence \cite{BojanczykDMSS11}; where
one relation is transitive~\cite{szwastT13}; where an equivalence
closure can be applied to two binary
predicates~\cite{KieroMPT-lics12}; where deterministic transitive
closure can be applied to one binary
relation~\cite{CharatonikKM14}. It is known that the finite
satisfiability problem is undecidable for \FOTwo with two transitive
relations~\cite{Kieronski-csl12}, with three equivalence
relations~\cite{KieronskiOttoLics05}, with one transitive and one
equivalence relation~\cite{KieronskiT09}, with three linear
orders~\cite{Kieronski-csl11}, with two linear orders and their two
corresponding successors~\cite{Manuel10}. A summary of complexity
results for extensions of \FOTwo with binary predicates being the
order relations (in the absence of other binary relations) can be
found in~\cite{ManuelZ13}.

When studying extensions of \FOTwo it is enough to consider relational
signatures with symbols of arity at most $2$~\cite{GrKoVa97}. Some of
the above mentioned decidability results, \eg~\cite{BojanczykDMSS11,
  CharatonikKM14, Figueira2012, Manuel10, SchwentickZ10}, hold under
the assumption that there are no other binary predicates in the
signature; some, like~\cite{CharatonikWitkowski-lics13,
  KieroMPT-lics12, KieronskiOttoLics05, KieronskiT09, Otto2001,
  szwastT13} are valid in the general setting. Similarly, for
undecidability results additional binary symbols are required \eg
in~\cite{GraedelOttoRosen99,IRRSY-CSL04,KieronskiT09}, but not
required in~\cite{Kieronski-csl11, Kieronski-csl12,
  KieronskiOttoLics05, Manuel10}.

The two-variable fragment with counting quantifiers (\CTwo) extends
\FOTwo by allowing counting quantifiers of the form $\exists^{<k}$,
$\exists^{\leq k}$, $\exists^{=k}$, $\exists^{\geq k}$ and
$\exists^{>k}$, for all natural numbers $k$.  The two problems of
satisfiability and finite satisfiability for \CTwo (which are two
different problems as \CTwo does not have a finite model property)
were proved to be decidable in~\cite{GradelORLics97}. Another
decidability proof together with a~\NEXPTIME-completeness result under
unary encoding of numbers in counting quantifiers can be found in
\cite{PacholskiSTLics97}. Pratt-Hartmann in~\cite{IPH05} established
\NEXPTIME-completeness of both satisfiability and finite
satisfiability under binary encoding of numbers in counting
quantifiers. All these algorithms are quite sophisticated, a
significant simplification can be found in~\cite{IPH10}.  There are
not many known decidable extensions of \CTwo. In
\cite{CharatonikWitkowski-lics13} it is shown that finite
satisfiability for \CTwo interpreted over structures where two binary
relations are interpreted as forests of finite trees (which subsumes
the case of two successor relations on two linear orders) is
\NEXPTIME-complete. \cite{Pratt-lics14} shows that the satisfiability
and finite satisfiability problems for \CTwo with one equivalence
relation are both \NEXPTIME-complete. All extensions of \CTwo
mentioned above allow arbitrary number of other binary
relations.\enlargethispage{\baselineskip}

In this paper we study extensions of \CTwo with linear orders.  We
show that the finite satisfiability problem for \CTwo with two linear
orders, in the presence of other binary predicate symbols, is
undecidable.  For \CTwo with one linear order, even if the successor
of this linear order is present, in the absence of other binary
predicate symbols, it is decidable and \NEXPTIME-complete. A
surprising result is that when we add one more binary predicate
symbol, the complexity of the problem explodes: the finite
satisfiability problem for \CTwo with one linear order and in the
presence of one more binary predicate symbol is equivalent, under
elementary reductions, to the emptiness problem for multicounter
automata. Thus it is decidable, but as complex as the
reachability problem for vector addition systems.

Multicounter automata (MCA) are a~very simple formalism equivalent to
Petri Nets and vector addition systems (VAS)~\cite{Nash73}, which are
used \eg to describe distributed, concurrent systems and
chemical/biological processes.  One of the main reasoning tasks for
VAS is to determine reachability of a given vector.  It is known that
this problem is decidable~\cite{Kosaraju82,Mayr84,Leroux09} and
\EXPSPACE-hard~\cite{Lipton76}, but precise complexity is not known,
and after over 40 years of research it is even not known if the
problem is elementary. We present a reduction from the emptiness
problem of MCA (which is equivalent to the reachability for VAS) to
finite satisfiability of \CTwo with one linear order
and one more binary predicate symbol.  Although we show that \CTwo
with one linear order and its successor, in the presence of
arbitrary number of binary predicate symbols is decidable, it is very
unlikely that it has an elementary decision algorithm since existence
of such an algorithm implies existence of an elementary algorithm for
VAS reachability.

The current paper is a~revised and extended version
of~\cite{CharatonikWitkowski-csl15}.

\section{Preliminaries} 
We will consider finite satisfiability problems for the two-variable
logic with counting (\CTwo for short) over finite structures, where
some distinguished binary symbols are interpreted as linear orders or
successors of linear orders. We will be interested in largest
antireflexive relations $\Lin$ contained in linear orders $\leq$; for
a linear order $\leq$ we write $a \Lin b$ iff $a\leq b$ and $a\neq
b$. In the rest of the paper by a~linear order we mean the relation
$\Lin$ rather than~$\leq$.  We use symbols $\Lin$, $\fLin$, $\sLin$ to
denote linear orders and $\Succ$, $\fSucc$, $\sSucc$ to denote their
respective successors. Given a finite signature $\Sigma$ we write
$\ClassO$ to denote the class of finite structures over $\Sigma$,
where $\Lin$ is interpreted as a~linear order and $\Succ(x,y)$ means
that $y$ is a~successor of $x$ in this order. We also adopt a similar
notation for other classes of structures. Logics we consider will be
denoted by $\CTwoGen{\#_u}{\#_b}{I}$, where $\#_u, \#_b\in
\Nat\cup\{*\}$ denote the number of unary \resp binary symbols allowed
in formulas, and $I$ is the signature of distinguished binary
symbols. We use * to denote that $\#_u$ \resp $\#_b$ are unbounded and
we do not count the size of $I$ in $\#_b$.

Specifically, we will be interested in the following logics:
\CTwoOneLin, \CTwoOneLinS and \CTwoLinUndecid. By \CTwoOneLin we mean
the logic \CTwo over a~signature that contains an arbitrary number of
unary and binary predicates, and $\Lin$ is interpreted as a linear
order. The definition of \CTwoOneLinS is similar, with the exception
that $\Succ$ is interpreted as the successor of $\Lin$. The logic
\CTwoLinUndecid allows an arbitrary number of unary and at most $4$
binary symbols, two of them are interpreted as linear orders. Notice
that we do not allow constant symbols in signatures, but this does not
cause loss of generality since constants can be simulated by unary
predicates and counting quantifiers.

\section{Two linear orders}\label{sec:twoorders}
We start with the observation that the induced successor relation of
a linear order can be expressed in \CTwoOneLin. More precisely, let
$s$ be a non-distinguished binary predicate. The following lemma says
that $s$ can be defined to mean the successor of $\Lin$ in
$\CTwoLinOne$.  Intuitively, it is enough to state that $s$ is
a~subrelation of $\Lin$ such that each element (with the exception of the
least and the greatest one) has exactly one $s$-successor and exactly
one $s$-predecessor.

\begin{lemma}\label{lem:succ}
  There exists a formula $\varphi_{s}$ of $\CTwoLinOne$ such that for
  every finite structure~$\str M$, we have $\str M \models
  \varphi_{s}$ if and only if $s^{\str M}$ is the induced successor
  relation of $\Lin^{\str M}$. 
\end{lemma}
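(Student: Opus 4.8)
The plan is to define $\varphi_s$ as the conjunction of three sentences asserting, respectively, that $s$ is a subrelation of $\Lin$, that every non-maximal element has exactly one $s$-successor, and that every non-minimal element has exactly one $s$-predecessor. Concretely, $\varphi_s$ is
\begin{gather*}
  \forall x\, \forall y\,\bigl(s(x,y) \rightarrow x \Lin y\bigr)\\
  {}\wedge\ \forall x\,\bigl((\exists y\; x \Lin y) \rightarrow \exists^{=1} y\; s(x,y)\bigr)\\
  {}\wedge\ \forall x\,\bigl((\exists y\; y \Lin x) \rightarrow \exists^{=1} y\; s(y,x)\bigr).
\end{gather*}
Each conjunct uses only the two variables $x,y$, the counting quantifier $\exists^{=1}$, and the symbols $s$ and $\Lin$, so $\varphi_s$ is a sentence of $\CTwoLinOne$. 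Since the first conjunct forces $s^{\str M}\subseteq{\Lin^{\str M}}$, the $\Lin$-greatest element automatically has no $s$-successor and the $\Lin$-least element automatically has no $s$-predecessor; hence these endpoint conditions need not be stated separately.

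The soundness direction is immediate: if $s^{\str M}$ is the induced successor relation of $\Lin^{\str M}$ then $s^{\str M}\subseteq{\Lin^{\str M}}$, and in a finite linear order every element other than the greatest has a unique immediate successor while every element other than the least has a unique immediate predecessor, so $\str M \models \varphi_s$.

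For the converse I would assume $\str M \models \varphi_s$ and show $s^{\str M}$ coincides with the induced successor relation; the case of at most one element is trivial, so assume $\str M$ has $n \geq 2$ elements. The second conjunct makes the $s$-successor into a well-defined function $f$ on the non-maximal elements, and the first conjunct gives $x \Lin f(x)$. The third conjunct says each non-minimal element is the $s$-successor of exactly one element, which makes $f$ injective and identifies its image with exactly the set of non-minimal elements; thus $f$ is a bijection from the non-maximal onto the non-minimal elements with $x \Lin f(x)$ throughout. Enumerating the domain as $x_1 \Lin x_2 \Lin \cdots \Lin x_n$, the element $x_{n-1}$ can only map to $x_n$, and then a downward induction using injectivity forces $f(x_i)=x_{i+1}$ for every $i$, since all strictly larger elements are already used as images. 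Hence $s^{\str M}$ is exactly the induced successor relation.

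The hard part will be this last uniqueness argument: the counting constraints are purely \emph{local}, forbidding only two successors or two predecessors at a single point, yet one must rule out globally any $s$-edge that \emph{skips} an element. The descending induction anchored at the $\Lin$-greatest element is what delivers this, and it is precisely the step where finiteness of $\str M$ is indispensable---over $\Zet$, for instance, the same three conditions are satisfied by the shift $x \mapsto x+2$, which is not the induced successor relation.
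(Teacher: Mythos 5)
Your proposal is correct, and your formula $\varphi_s$ is, up to trivial rephrasing, exactly the one in the paper (the paper writes ``$x$ is greatest'' as $\forall y.\, y \Lin x \vee y = x$ and uses it as a disjunct, where you write ``$x$ is non-maximal'' as a hypothesis of an implication). Where you genuinely diverge is the converse direction. The paper argues on chains: starting from the $\Lin$-least element it follows $s$-edges to build a sequence that must end at the greatest element, and then shows that any element left outside this sequence would lie on a second, disjoint sequence in which every element has both an $s$-successor and an $s$-predecessor; by finiteness that second sequence closes into an $s$-cycle, contradicting $s^{\str M} \subseteq \Lin^{\str M}$. You instead read the counting conjuncts as saying that $s^{\str M}$ is the graph of an injective, strictly increasing function $f$ from the non-maximal elements onto the non-minimal ones, and you pin $f$ down by a downward induction anchored at the top element. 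Both arguments are sound, and both use finiteness essentially (your $\Zet$ example with $x \mapsto x+2$ is a good sanity check that this is unavoidable). One thing the paper's route buys that yours does not: the cycle contradiction needs only acyclicity of $\Lin^{\str M}$, not totality, which is precisely what the paper exploits in Remark~\ref{rem:succ} to transfer the construction to an arbitrary acyclic relation (and thence to the hardness corollary for \CTwoTwoPrec). Your argument uses totality twice --- in the enumeration $x_1 \Lin \cdots \Lin x_n$ and in the claim that $x_{n-1}$ has a unique strictly larger element --- so it would not extend to that setting without modification.
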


\begin{proof}
  Define $\varphi_{s}$ as conjunction of the following three formulas.
\begin{gather}
  \forall{x}\forall{y}. s(x,y) \rightarrow x \Lin y \label{succ:first} \\
  \forall{x}.\left(\forall{y}. y < x \vee y = x\right) \vee \exists^{=1}{y}. s(x,y)\label{succ:second}\\
  \forall{y}.\left(\forall{x}. y < x \vee y = x\right) \vee
  \exists^{=1}{x}. s(x,y)\label{succ:third}
\end{gather}
Conjunct~(\ref{succ:first}) of $\varphi_{s}$ says that $s^{\str M}$ is
a~subrelation of $\Lin^{\str M}$. Conjuncts~(\ref{succ:second})
and~(\ref{succ:third}) state that every non-least (respectively,
non-greatest) element \wrt $\Lin^{\str M}$ has precisely one
$s$-successor (respectively, $s$-predecessor).

Let $\str M$ be a finite model of $\varphi_{s}$ and let $e_1$ be the
least \wrt $\Lin^{\str M}$ element of $\str M$. By a simple inductive
argument we can construct a~sequence $e_1,\ldots e_k$ of elements of
$\str M$ such that $s^{\str M}(e_i,e_{i+1})$ holds for all
$i\in\{1,\ldots, k-1\}$ and $e_k$ is the greatest element in $\str M$.
By conjuncts~(\ref{succ:second}) and~(\ref{succ:third}) no element
occurs more than once in this sequence.

Observe that if the sequence $e_1,\ldots e_k$ contains all elements of
$\str M$ then $s^{\str M}$ is the induced successor relation of
$\Lin^{\str M}$. Now, seeking for contradiction, assume that $e$ is an
element of $\M$ that does not appear in the sequence. Then, again by
induction, we can construct another finite sequence that contains $e$,
is disjoint with $e_1,\ldots e_k$ (so it contains neither the least
not the greatest element in~$\str M$), and where every element has an
$s$-successor and an $s$-predecessor. It follows that the second
sequence is an $s$-cycle, which contradicts
conjunct~(\ref{succ:first}).

For the other direction, if $\str M$ is such that $s^{\str M}$ is the
successor relation of linear order~$\Lin^{\str M}$, then it can be
checked by inspection that $\str M\models \varphi_{s}$.
\end{proof}

\begin{corollary}\ Finite satisfiability of \CTwoOneLinS is reducible
  in linear time to finite satisfiability of \CTwoOneLin.  Finite
  satisfiability of \CTwoLinUndecidS is reducible in linear time to
  finite satisfiability of \CTwoLinUndecid.
\end{corollary}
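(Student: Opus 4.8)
The plan is to treat both reductions as direct applications of Lemma~\ref{lem:succ}. In each case I replace every distinguished successor symbol in the input formula by a fresh non-distinguished binary predicate and conjoin the result with the defining formula $\varphi_{s}$ supplied by the lemma, which pins that predicate down to be exactly the induced successor of the corresponding linear order.

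For the first reduction, given a formula $\varphi$ of \CTwoOneLinS, I would pick a binary symbol $s$ not occurring in $\varphi$, form $\varphi'$ by substituting $s$ for every occurrence of $\Succ$, and output $\psi = \varphi' \wedge \varphi_{s}$. Since $\psi$ mentions only $\Lin$ among the distinguished symbols, it is a formula of \CTwoOneLin. Correctness is argued in both directions. If $\str M \models \varphi$ with $\Succ^{\str M}$ the successor of $\Lin^{\str M}$, then extending $\str M$ by setting $s^{\str M} = \Succ^{\str M}$ yields a model of $\psi$: it satisfies $\varphi'$ because $s$ and $\Succ$ now have the same interpretation, and it satisfies $\varphi_{s}$ by Lemma~\ref{lem:succ}. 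Conversely, any finite model $\str N$ of $\psi$ satisfies $\varphi_{s}$, so by the lemma $s^{\str N}$ is the induced successor of $\Lin^{\str N}$; reinterpreting $\Succ$ as $s^{\str N}$ and discarding $s$ then produces a structure in $\ClassO$ that models $\varphi$. The transformation appends only the fixed-size conjunct $\varphi_{s}$ and performs a symbol renaming, so it runs in linear time.

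For the second reduction I would apply the same idea once per order. Starting from $\varphi$ in \CTwoLinUndecidS, I introduce two fresh binary symbols $s_1, s_2$, replace $\fSucc$ by $s_1$ and $\sSucc$ by $s_2$, and conjoin the result with the two instances of $\varphi_{s}$ obtained by specialising the lemma to the pairs $(s_1, \fLin)$ and $(s_2, \sLin)$. Correctness follows from two independent applications of Lemma~\ref{lem:succ}, one for each order, exactly as before. The one point that genuinely requires care is the binary-symbol budget: \CTwoLinUndecidS permits no non-distinguished binary predicates, whereas \CTwoLinUndecid permits exactly two. Because each instance of $\varphi_{s}$ uses only a linear order together with the single successor symbol it defines, and introduces no auxiliary binary predicates, the two fresh symbols $s_1, s_2$ fill the available budget precisely, so the output is a legitimate \CTwoLinUndecid formula. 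This accounting is the only subtlety in an otherwise routine argument; there is no substantial obstacle once Lemma~\ref{lem:succ} is in hand.
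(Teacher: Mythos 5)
Your proposal is correct and follows exactly the route the paper intends: the corollary is stated without proof as an immediate consequence of Lemma~\ref{lem:succ}, namely replacing each distinguished successor symbol by a fresh binary predicate constrained by (a renamed copy of) $\varphi_{s}$. Your explicit accounting of the two-symbol budget for \CTwoLinUndecid is precisely the right check, and both directions of the equivalence are argued as the lemma's ``if and only if'' requires.
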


Since \FOTwoLinUndecidS, \ie the two-variable logic with two linear
orders and their corresponding successors, is
undecidable~\cite{Manuel10}, we have the following conclusion.
\begin{corollary}
  Finite satisfiability problem of \CTwoLinUndecidA is undecidable.
  This remains true even for \CTwoLinUndecid.
\end{corollary}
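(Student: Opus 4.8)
The plan is to obtain undecidability by transferring a known first-order result and then eliminating the two distinguished successor symbols by means of the reduction already recorded in the preceding corollary. First I would invoke Manuel~\cite{Manuel10}, which establishes that the finite satisfiability problem for \FOTwoLinUndecidS is undecidable, \ie for two-variable first-order logic over structures equipped with two linear orders $\fLin, \sLin$ together with their induced successors $\fSucc, \sSucc$ and no further binary predicates. Since every \FOTwo formula is in particular a \CTwo formula, undecidability transfers verbatim to the counting logic over the same signature, so finite satisfiability of \CTwoLinUndecidS is undecidable as well.

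The second step removes the two successor relations from the signature. By the preceding corollary, finite satisfiability of \CTwoLinUndecidS reduces in linear time to finite satisfiability of \CTwoLinUndecid: one replaces $\fSucc$ and $\sSucc$ by two fresh non-distinguished binary predicates and conjoins, for each of them, the defining formula furnished by Lemma~\ref{lem:succ}, the first instance forcing one new predicate to be the induced successor of $\fLin$ and the second forcing the other to be the induced successor of $\sLin$. On any finite model the two predicates are thereby pinned to the genuine successor relations, so the transformation preserves finite satisfiability. Crucially it consumes exactly two binary symbols beyond the two orders, which is precisely what the bound $2$ in \CTwoLinUndecid permits.

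Chaining the two steps yields undecidability of finite satisfiability for \CTwoLinUndecid, and since \CTwoLinUndecidA differs only in allowing arbitrarily many further binary predicates, its finite satisfiability problem is undecidable a fortiori; this proves both assertions of the corollary. I do not expect a genuine obstacle here: the substance lies in \cite{Manuel10} and in the definability of successors established in Lemma~\ref{lem:succ}, and the only point demanding attention is the bookkeeping of binary symbols---checking that encoding both successors costs exactly two fresh predicates, so that the sharper claim about \CTwoLinUndecid is obtained and not merely the weaker statement for \CTwoLinUndecidA.
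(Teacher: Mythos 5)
Your proposal is correct and follows exactly the paper's intended argument: undecidability of \FOTwoLinUndecidS from~\cite{Manuel10}, inclusion of \FOTwo in \CTwo, and then the linear-time elimination of the two successor symbols via the preceding corollary (\ie two applications of Lemma~\ref{lem:succ}, consuming exactly the two binary predicates allowed in \CTwoLinUndecid). Your attention to the symbol-counting detail is precisely the point that makes the sharper claim for \CTwoLinUndecid, rather than only \CTwoLinUndecidA, go through.
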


\begin{remark}\label{rem:succ}
  Observe that the proof of Lemma~\ref{lem:succ} works also in
  a~setting where the symbol $\Lin$ is interpreted as an arbitrary
  acyclic relation (a relation is acyclic if its diagram is an acyclic
  graph). Actually, assuming that $\Lin^{\str M}$ is an acyclic
  relation, the formula $\varphi_s$ forces it to be the linear order
  induced by $s^{\str M}$. 
\end{remark}

\section{\texorpdfstring{\CTwoLinSucc}{C2[...]} is \texorpdfstring{\NEXPTIME}{NEXPTIME}-complete}
\label{sec:linsucc}

We will show that finite satisfiability problem for \CTwoLinSucc is
\NEXPTIME-complete. Since the lower bound follows from the complexity
of \FOTwo with only unary predicates~\cite[Theorem 11]{EtessamiVW02},
we will concentrate on proving the upper bound. The proof presented
here is inspired by a corresponding result~\cite{CKM13} on \FOTwo on
finite trees: we bring the input formula to a~normal form and then
check its local and global consistency, using a~notion of types that
is similar to the one introduced in~\cite{CKM13} .

We will be interested in \CTwoLinSucc formulas $\varphi$ in normal
form
\begin{equation}
  \label{nf1}
\varphi = \forall{x}\forall{y}.\chi(x,y)\wedge \bigwedge_{h=1}^m
\forall{x}\exists^{\lessdot_h C_h}{y}.\chi_h(x,y),
\end{equation}
where $\chi, \chi_1,\ldots,\chi_{m}$ are quantifier-free formulas.
Here symbols $\lessdot_h$, for $h=\{1,\ldots,m\}$, denote either
$\leq$ or $\geq$, and $C_1,\ldots, C_m$ are positive integers encoded
in binary. We refer to the number $c=\max\{C_h\mid h\in \{1,\ldots,m
\}\}$ as the \emph{height} of the formula $\varphi$.
It is well known~\cite[Theorem 2.2]{GradelO99} that by adding
additional unary predicates each $C^2$ formula $\varphi$ can be
transformed in polynomial time to a formula in normal form that is
equisatisfiable with~$\varphi$ over models of cardinality at least
$c$.

Observe that \CTwoLinSucc may be seen as a fragment of the weak
monadic second-order logic with one successor WS1S, where unary
relations are simulated by second-order existential quantifiers and
counting quantifiers by first-order ones (e.g., a formula of the form
$\exists^{\leq k}x.\chi(x)$ can be replaced by an equivalent formula
with $k+1$ universal quantifiers). However, this view leads to
formulas with three alternations of quantifiers that can be checked
for satisfiability in $4\EXPTIME$, which is not a~desired complexity
bound.

Because an element of a model of $\varphi$ may require up to $c$
witnesses for satisfaction, we will be interested in multisets
counting these witnesses.  Let $\Nat_{c} = \{n\in\Nat\mid n\leq c\}
\cup \{\infty\}$. For $k,k'\in \Nat_{c}$ define $\mathrm{cut}_c(k) =
k$ if $k \leq c$ and $\mathrm{cut}_c(k) = \infty$ if $k>c$.  Define
$k\oplus_c{k'} = \mathrm{cut}_c(k+k')$. A $c$-multiset of elements
from a given set $A$ is any function $f:A\to \Nat_c$. For a given
element $a$ in $A$, the singleton $\{a\}$ is the multiset defined by
$\{a\}(x)=1$ if $x=a$ and $\{a\}(x)=0$ for $x\neq a$.  The union of
two multisets $f$ and $g$ is a function denoted $f\cup g$ such that
$(f\cup g)(x)=f(x)\oplus_cg(x)$. We say that $f$ is a
\emph{submultiset} of $g$, written $f \subseteq g$, if $f(a) \leq
g(a)$ for all $a\in A$. The empty multiset, denoted $\emptyset$, is
the constant function equal $0$ for all arguments. The following fact
is obvious.

\begin{fact}\label{fact:chainSize}
Every ascending (\resp descending) \wrt $\subseteq$ chain of $c$-multisets
consists of at most $|A|*(c+2)$ distinct elements.
\end{fact}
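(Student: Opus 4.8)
The plan is to find a monotone integer-valued potential on $c$-multisets that strictly increases along every ascending chain and whose range is small enough to give the stated bound; the descending case will then follow by symmetry. First I would linearise the order on $\Nat_c$ by a rank function $r:\Nat_c\to\{0,1,\ldots,c+1\}$, setting $r(n)=n$ for $n\leq c$ and $r(\infty)=c+1$. Since the elements of $\Nat_c$ are exactly $0<1<\cdots<c<\infty$, this $r$ is \emph{strictly} monotone: $k<k'$ in $\Nat_c$ forces $r(k)<r(k')$, and in particular $r(k')\geq r(k)+1$.

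Next I would define the potential $\Phi(f)=\sum_{a\in A}r(f(a))$ for each $c$-multiset $f$. As every summand lies in $\{0,\ldots,c+1\}$, we have $0\leq\Phi(f)\leq |A|*(c+1)$, so $\Phi$ takes at most $|A|*(c+1)+1$ distinct integer values. The key step is to verify that $\Phi$ strictly increases across each proper inclusion. If $f\subseteq g$ and $f\neq g$, then $f(a)\leq g(a)$ for all $a$, hence $r(f(a))\leq r(g(a))$ by monotonicity of $r$; moreover some $b$ has $f(b)<g(b)$, so $r(f(b))\leq r(g(b))-1$. Summing over $A$ yields $\Phi(g)\geq\Phi(f)+1$.

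It then follows that along any strictly ascending chain $f_1\subsetneq f_2\subsetneq\cdots$ the values $\Phi(f_1)<\Phi(f_2)<\cdots$ form a strictly increasing sequence of integers confined to $\{0,\ldots,|A|*(c+1)\}$, so the chain has at most $|A|*(c+1)+1\leq |A|*(c+2)$ distinct elements, as required (the last inequality holds since $A$ is nonempty). I do not expect a genuine obstacle here: the only points needing care are the correct treatment of the top value $\infty$, which the rank $r$ absorbs, and the harmless slack between the exact bound $|A|*(c+1)+1$ and the stated $|A|*(c+2)$.
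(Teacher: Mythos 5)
Your proof is correct. There is nothing in the paper to compare it against: the paper states Fact~\ref{fact:chainSize} without proof, declaring it obvious, and your rank-and-potential argument ($r(\infty)=c+1$, $\Phi(f)=\sum_{a\in A}r(f(a))$, strict increase of $\Phi$ across proper inclusions) is precisely the natural formalization of that omitted argument, in fact yielding the slightly sharper bound $|A|*(c+1)+1$. Your caveat that $A$ must be nonempty is a genuine (if harmless) implicit assumption of the paper rather than a gap in your proof; in the paper's application $A=\Pi(\Sigma)$, which is never empty.
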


Let us call maximal consistent formulas specifying the relative
position of a pair of elements in a structure in $\ClassO$ \emph{order
  formulas}.  There are five possible order formulas: $x {=} y\wedge
\neg\Succ(y,x)\wedge\neg\Succ(x,y)\wedge y{\not<}x \wedge x{\not <}y$,
\mbox{$x{\neq}y \wedge\Succ(y,x)\wedge\neg\Succ(x,y)\wedge y{<}x
  \wedge x{\not <}y$}, $x{\neq}y
\wedge\Succ(x,y)\wedge\neg\Succ(y,x)\wedge x{<}y \wedge y{\not <}x$,
$x{\neq}y \wedge\neg\Succ(x,y)\wedge\neg\Succ(y,x)\wedge x{<}y \wedge
y{\not <}x$, and \mbox{$x{\neq}y
\wedge\neg\Succ(x,y)\wedge\neg\Succ(y,x)\wedge y{<}x \wedge
x{\not <}y$}.  They are denoted, respectively, as: $\teq$, $\tprec$,
$\tsucc$, $\tless$, $\tgreat$. Let $\Theta$ be the set of these five
formulas.

A \emph{1-type} over the signature $\Sigma$ is a~maximal consistent
conjunction of atomic and negated atomic formulas over $\Sigma$
involving only the variable $x$.  The set of all 1-types over $\Sigma$
will be denoted $\Pi(\Sigma)$.  The family of all $c$-multisets of 1-types
over the signature $\Sigma$ is denoted~$\Nat_c^{\Pi(\Sigma)}$.

To be able to check local consistency of a~formula (see
Definition~\ref{def:compatible} below) we introduce the notion of
a~\emph{full type}. Intuitively, a~full type of an element $e$ in a~structure
contains enough information to check that the formula is (locally)
true in $e$.  The information stored in the full type of $e$ tells us
about 1-types of all other elements $e'$ in the structure, divided
into five multisets depending on relative positions of $e$ and $e'$.

\begin{definition}[Full type over $\Sigma$ \wrt $c$]
  A \emph{full type} over $\Sigma$ \wrt $c$ is a function
  $\sigma:\Theta \to \Nat_c^{\Pi(\Sigma)}$, such that
  $\sigma(\tprec)$ and $\sigma(\tsucc)$ are singletons or
  empty, and $\sigma(\teq)$ is a singleton. 
\end{definition}

\begin{definition}[Full type in $\str A$ \wrt $c$]
  Let $\str A$ be a structure over a signature $\Sigma$ and let $a$
  be an element of $\str A$.  The \emph{full type} of $a$ in $\str A$,
  denoted $\mathrm{ft}^{\str A}(a)$ is the function $\sigma:\Theta\to
  \Nat_c^{\Pi(\Sigma)}$ such that
  \begin{itemize}
  \item $\sigma(\teq)$ is the singleton of the 1-type of $a$ in $\str A$,
  \item $\sigma(\tprec)$ is the singleton of the 1-type of the
    predecessor of $a$ (if $a$ has a predecessor) or empty multiset
    (if $a$ has no predecessor),
  \item $\sigma(\tsucc)$ is the singleton of the 1-type of the
    successor of $a$ (if $a$ has a successor) or empty multiset
    (if $a$ has no successor),
  \item $\sigma(\tless)$ is the $c$-multiset of 1-types of elements
    strictly smaller than $a$ in $\str A$, excluding the predecessor
    (if it exists), and
  \item $\sigma(\tgreat)$ is the $c$-multiset of 1-types of elements
    strictly greater than $a$ in $\str A$, excluding the successor (if
    it exists).
  \end{itemize}
  A structure $\str A$ is said to \emph{realise} a full type $\sigma$ if
  $\mathrm{ft}^{\str A}(a)=\sigma$ for some $a\in \str A$. In the
  following we identify a full type $\sigma$, which is a
  function, with the tuple $\tuple{ \sigma(\tprec), \sigma(\teq),
    \sigma(\tsucc), \sigma(\tless), \sigma(\tgreat) }$.
\end{definition}

Let $\sigma$ be a full type \wrt $c$ such that $\sigma(\teq)=\{\pi\}$ and let
$\forall{x}\exists^{\lessdot_h C_h}{y}.\chi_h(x,y)$ be a conjunct
in~$\varphi$. The following five functions are used to count witnesses
\wrt this conjunct for elements of full type $\sigma$.
  \begin{eqnarray*}
    W_{=}^{\chi_h}(\sigma)&=&
    \begin{cases}
      1 & \text{if }  \pi(x)\models \chi_h(x,x) \\
      0 & \text{otherwise}
    \end{cases}\\
    W_{\Pred}^{\chi_h}(\sigma)&=&
    \begin{cases}
      1 & \text{if }  \sigma(\tprec)=\{\pi'\}\text{ and }\pi(x)\wedge\pi'(y)\wedge \tprec(x,y)\models \chi_h(x,y) \\
      0 & \text{otherwise}
    \end{cases}\\
    W_{\Succ}^{\chi_h}(\sigma)&=&
    \begin{cases}
      1 & \text{if } \sigma(\tsucc)=\{\pi'\}\text{ and }\pi(x)\wedge\pi'(y)\wedge \tsucc(x,y)\models \chi_h(x,y) \\
      0 & \text{otherwise}
    \end{cases}
  \end{eqnarray*}
  \begin{eqnarray*}
    W_{<}^{\chi_h}(\sigma)&=&
\mathrm{cut}_c\Big(\sum_{\pi':\pi(x)\wedge\pi'(y) \wedge \tless(y,x)\models \chi_h(x,y)}
(\sigma(\tless))(\pi')\Big)\\
    W_{>}^{\chi_h}(\sigma)&=&
\mathrm{cut}_c\Big(\sum_{\pi':\pi(x)\wedge\pi'(y) \wedge \tgreat(y,x)\models \chi_h(x,y)}
(\sigma(\tgreat))(\pi')\Big)\\
  \end{eqnarray*}
  Note that in the definition above $(\sigma(\tgreat))(\pi')$ is
  simply the number of occurrences of the 1-type $\pi'$ in the
  multiset $\sigma(\tgreat)$.

  The following definition and lemma formalise local consistency of
  a~formula.
  \begin{definition}[Compatible full types] \label{def:compatible} Let
    $\sigma$ be a full type \wrt $c$ such that $\sigma(\teq)=\{\pi\}$
    and let $\varphi$ be a formula of height $c$ in normal
    form~(\ref{nf1}).  We say that $\sigma$ is \emph{compatible}
    with~$\varphi$ if the following conditions are satisfied.
  \begin{itemize}
  \item $\pi(x)\models \chi(x,x)$,
  \item $\pi(x)\wedge\pi'(y)\wedge\theta(x,y)\models \chi$ for all
    $\theta\in\{\tprec,\tsucc,\tless,\tgreat\}$ and all
    $\pi'\in\sigma(\theta)$, and
  \item  for each conjunct $\forall{x}\exists^{\lessdot_h C_h}{y}.\chi_h(x,y)$ of $\varphi$ we have 
\[
W_{=}^{\chi_h}(\sigma)+W_{\Pred}^{\chi_h}(\sigma)+W_{\Succ}^{\chi_h}(\sigma)+W_{<}^{\chi_h}(\sigma)+W_{>}^{\chi_h}(\sigma)\lessdot_h C_h
\]
  \end{itemize}  \smallskip
\end{definition}

\noindent It is quite obvious that whenever $\str A\models \varphi$, all full
types realised in $\str A$ are compatible with~$\varphi$.  It is not
difficult to see that the converse is also true, as the following
lemma says.

\begin{lemma}\label{lem:compatible}
  For any ordered structure $\str A$ and any \CTwoLinSucc formula $\varphi$ in
  normal form, if all full types realised in $\str A$ are compatible
  with~$\varphi$ then $\str A\models \varphi$.
\end{lemma}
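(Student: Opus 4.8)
The plan is to verify the two kinds of conjunct of $\varphi$ separately, fixing throughout an arbitrary element $a\in\str A$, writing $\sigma=\mathrm{ft}^{\str A}(a)$ and $\sigma(\teq)=\{\pi\}$; by hypothesis $\sigma$ is compatible with $\varphi$. The single observation driving everything is that the only binary symbols available are $\Lin$ and $\Succ$, so for any quantifier-free $\psi(x,y)$ and any $b\in\str A$ the truth value of $\psi(a,b)$ is determined entirely by $\pi$, by the $1$-type $\pi'$ of $b$, and by the unique order formula $\theta\in\Theta$ with $\str A\models\theta(a,b)$. This is exactly the data recorded in $\sigma$.

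For the conjunct $\forall x\forall y.\chi$ I would take an arbitrary $b$ and split on $\theta(a,b)$. If $\theta=\teq$ then $b=a$ and the first bullet of compatibility, $\pi(x)\models\chi(x,x)$, gives $\str A\models\chi(a,b)$. Otherwise $\theta\in\{\tprec,\tsucc,\tless,\tgreat\}$ and the $1$-type $\pi'$ of $b$ lies in the support of $\sigma(\theta)$: for $\tprec,\tsucc$ because $b$ is the unique predecessor resp.\ successor of $a$, and for $\tless,\tgreat$ because $b$ is then counted with multiplicity at least one among the elements strictly below resp.\ above $a$ other than the immediate neighbour. The second bullet of compatibility then yields $\str A\models\chi(a,b)$. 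As $a,b$ are arbitrary, $\str A\models\forall x\forall y.\chi$.

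For a counting conjunct $\forall x\exists^{\lessdot_h C_h}y.\chi_h$ I would count the witnesses $N=|\{b\in\str A\mid \str A\models\chi_h(a,b)\}|$ by partitioning $\str A$ into the five classes determined by $\theta(a,\cdot)$. The classes $\teq$, $\tprec$, $\tsucc$ contain at most one element each, so by the observation above their contributions to $N$ are exactly $W_{=}^{\chi_h}(\sigma)$, $W_{\Pred}^{\chi_h}(\sigma)$, $W_{\Succ}^{\chi_h}(\sigma)$. Writing $N_<$ and $N_>$ for the exact numbers of witnesses in the classes $\tless$ (strictly below $a$, excluding its predecessor) and $\tgreat$, the fact that $\sigma(\tless)(\pi')$ is the truncated count of $\pi'$-elements strictly below $a$ (minus the predecessor), together with the elementary identity $\mathrm{cut}_c\big(\sum_i\mathrm{cut}_c(x_i)\big)=\mathrm{cut}_c\big(\sum_i x_i\big)$ (associativity of $\oplus_c$) and the fact that whether a $1$-type contributes is again a function of $(\pi,\pi',\tless)$ resp.\ $(\pi,\pi',\tgreat)$, gives $W_{<}^{\chi_h}(\sigma)=\mathrm{cut}_c(N_<)$ and $W_{>}^{\chi_h}(\sigma)=\mathrm{cut}_c(N_>)$. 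Hence $N=P+N_<+N_>$ and the compatibility sum is $S=P+\mathrm{cut}_c(N_<)+\mathrm{cut}_c(N_>)$, where $P=W_{=}^{\chi_h}(\sigma)+W_{\Pred}^{\chi_h}(\sigma)+W_{\Succ}^{\chi_h}(\sigma)$.

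It remains to transfer the compatibility inequality $S\lessdot_h C_h$ to the true count $N$, and this truncation step is the only delicate point. Here I would use crucially that $C_h\le c$, since $c$ is the height of $\varphi$, and argue by cases. If $N_<\le c$ and $N_>\le c$ then $\mathrm{cut}_c$ acts as the identity on both and $S=N$, so the two inequalities coincide. Otherwise $S=\infty$ while $N\ge N_<>c\ge C_h$; then $S\ge C_h$ and $N\ge C_h$ both hold and $S\le C_h$ and $N\le C_h$ both fail. In every case $N\lessdot_h C_h\iff S\lessdot_h C_h$, so from compatibility we obtain $N\lessdot_h C_h$. Since $a$ and $h$ were arbitrary, every counting conjunct holds, and with the universal conjunct this gives $\str A\models\varphi$. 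The main obstacle, as this shows, is exactly the bookkeeping around $\mathrm{cut}_c$: one must check that truncating each class count at $c$ and only then summing still faithfully decides the comparison against $C_h$, which works precisely because every $C_h$ is bounded by the height $c$.
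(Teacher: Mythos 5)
Your proof is correct and takes essentially the same approach as the paper's: the universal conjunct is verified from the first two compatibility items, and each counting conjunct by partitioning the potential witnesses into the five order classes counted (up to $c$) by the five $W$-functions, with $C_h \le c$ justifying that truncation at $c$ does not affect the comparison. You merely make explicit two steps the paper leaves implicit --- that 2-types in this signature are determined by the two 1-types together with the order formula, and the $\mathrm{cut}_c$ bookkeeping --- so no substantive difference.
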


\begin{proof}
  Take arbitrary two elements of the structure $\str A$. If the two
  elements are equal then the first item of
  Definition~\ref{def:compatible} guarantees that they satisfy the
  conjunct $\chi$ of $\varphi$. If they are different, $\chi$ is
  satisfied by the second item. In any case, the conjunct $\chi$ is
  satisfied. Similarly, take any element $e$ of the structure and
  consider any conjunct $\forall{x}\exists^{\lessdot_h
    C_h}{y}.\chi_h(x,y)$ of $\varphi$. Each element $e'$ such that
  $\str A\models \chi(e,e')$ belongs to exactly one of the five sets:
  the singleton of $e$, the singleton of the predecessor of $e$, the
  singleton of the successor of~$e$, elements smaller than the
  predecessor of $e$ and the elements greater than the successor
  of~$e$. The five functions used in the third item of
  Definition~\ref{def:compatible} correctly (up to $c$) count the
  number of such elements $e'$ in these five sets. Since the constant
  $C_h$ does not exceed $c$, the the condition in the third item
  guarantees that the conjunct is satisfied.
\end{proof}

To be able to check global consistency of a~formula (see
Lemma~\ref{lem:graphreduction} below) we introduce the notion of the
graph $\sigmacgraph$. It is the graph $\tuple{V,E}$ where the set $V$
of nodes is the set of full types over $\Sigma$ \wrt $c$ and the set
$E$ of edges is defined as follows.
\begin{eqnarray*}
 {\tuple{\tuple{\Pi_{\Pred}, \{\pi\}, \{\pi_{\Succ}\},\Pi_<, \Pi_>},\tuple{\{\pi\}, \{\pi_{\Succ}\},  \Pi'_{\Succ},\Pi'_<, \Pi'_>}}\in E} 
 & \mathrm{iff}& \Pi'_<=\Pi_<\cup \Pi_{\Pred} \mathrm{~~and~~}\\ && \Pi_>=\Pi'_>\cup \Pi'_{\Succ}
\end{eqnarray*}
Intuitively, a path in this graph describes a possible evolution of
full types in a structure. Each edge in the graph describes the change
in a full type when we move from an element in the structure to its
successor: (the singleton containing the 1-type of) the successor is
moved to the position of current element; the current element is moved to
the predecessor position; the predecessor is added to (the multiset of
the 1-types of the) strictly smaller elements; and finally the multiset
of strictly greater elements is split into the new multiset of strictly
greater elements and the new successor element.

We define the graph $\sigmacphigraph$ as the subgraph of
$\sigmacgraph$ consisting of nodes compatible with~$\varphi$. The
nodes of the form $\tuple{\emptyset,\ldots, \ldots,\emptyset, \ldots}$
are called \emph{source} nodes; the nodes of the form $\tuple{\ldots,
  \ldots, \emptyset, \ldots,\emptyset}$ are called \emph{target}
nodes.  Intuitively, a source node corresponds to a full type of the
least element in some model of~$\varphi$ while a target node
corresponds to the greatest element in some model.

\begin{lemma}\label{lem:graphreduction}
  Let $\varphi$ be a \CTwoLinSucc formula of height $c$ in normal
  form, over signature~$\Sigma$. Then~$\varphi$ is finitely
  satisfiable if and only if there exists a path from a source node to
  a target node in the graph $\sigmacphigraph$.
\end{lemma}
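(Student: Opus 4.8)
The plan is to prove both directions of the equivalence, each of which corresponds to one direction of the correspondence between finite models of $\varphi$ and source-to-target paths in $\sigmacphigraph$.

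For the forward direction, suppose $\str A$ is a finite model of $\varphi$ with elements $a_1 \Lin a_2 \Lin \cdots \Lin a_n$ listed in increasing order. First I would observe that since $\str A \models \varphi$, every full type realised in $\str A$ is compatible with $\varphi$, so each $\mathrm{ft}^{\str A}(a_i)$ is a node of $\sigmacphigraph$. The claim is that $\mathrm{ft}^{\str A}(a_1), \ldots, \mathrm{ft}^{\str A}(a_n)$ is the desired path. The least element $a_1$ has no predecessor, so $\mathrm{ft}^{\str A}(a_1)(\tprec) = \emptyset$, and moreover $\sigma(\tless) = \emptyset$ since nothing is smaller; hence it is a source node. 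Symmetrically $a_n$ is a target node. The heart of this direction is to verify that $\langle \mathrm{ft}^{\str A}(a_i), \mathrm{ft}^{\str A}(a_{i+1})\rangle$ is an edge, i.e. that the two defining equations $\Pi'_< = \Pi_< \cup \Pi_{\Pred}$ and $\Pi_> = \Pi'_> \cup \Pi'_{\Succ}$ hold. This is exactly the bookkeeping described in the intuition following the definition of the edges: passing from $a_i$ to $a_{i+1}$, the $c$-multiset of elements strictly below $a_{i+1}$ (excluding its predecessor $a_i$) is obtained from those strictly below $a_i$ by adjoining the 1-type of $a_i$'s predecessor, and dually on the greater side. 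I would check this equation componentwise against the five clauses defining $\mathrm{ft}^{\str A}$, being careful about the $\oplus_c$ truncation: since both sides are formed by the same cut-off union, the truncated counts agree.

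For the backward direction, suppose there is a path $\sigma_1, \ldots, \sigma_n$ from a source node $\sigma_1$ to a target node $\sigma_n$ in $\sigmacphigraph$. I would build a finite ordered structure $\str A$ on the domain $\{a_1, \ldots, a_n\}$ ordered by $a_1 \Lin \cdots \Lin a_n$ with $\Succ$ the induced successor, and assign to $a_i$ the 1-type $\pi_i$ where $\sigma_i(\teq) = \{\pi_i\}$; this fixes all unary predicates. The edge conditions guarantee the local coherence of the chain of types: the $\teq$ component of $\sigma_{i+1}$ equals the $\tsucc$ component of $\sigma_i$ and the $\teq$ component of $\sigma_i$ equals the $\tprec$ component of $\sigma_{i+1}$, so adjacent 1-types match up consistently. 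I then need to fix the remaining binary relations on each pair $a_i, a_j$ with $i < j$; the order formula between them is forced to be $\tprec/\tsucc$ when $j = i+1$ and $\tless/\tgreat$ otherwise, and I would choose an order formula from $\Theta$ together with a consistent extension so that each pair realises a 2-type compatible with the relevant 1-types. The key point to verify is that, with this assignment, $\mathrm{ft}^{\str A}(a_i) = \sigma_i$ for every $i$; this follows by an induction along the path using the two edge equations, which precisely propagate the $\tless$ and $\tgreat$ multisets correctly (and the source/target conditions anchor the induction at the endpoints, where the smaller/larger multisets are empty). Once $\mathrm{ft}^{\str A}(a_i) = \sigma_i$ is established, every full type realised in $\str A$ is one of the $\sigma_i$, hence compatible with $\varphi$, and Lemma~\ref{lem:compatible} immediately yields $\str A \models \varphi$.

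I expect the main obstacle to be the backward direction, specifically showing that the global multiset information can be \emph{realised} by an actual structure whose full types recover exactly the $\sigma_i$ along the path. The edge equations only relate \emph{consecutive} nodes, so one must argue that iterating them reconstructs the full $c$-multisets $\sigma_i(\tless)$ and $\sigma_i(\tgreat)$ from the chain of singleton $\teq$-components, and here the truncation $\mathrm{cut}_c$ is the delicate feature: because a multiset entry saturates at $\infty$ once it exceeds $c$, I cannot simply read off exact counts, but must check that the truncated sums telescope correctly so that the realised full type matches $\sigma_i$ up to the intended cut-off. A secondary subtlety is the consistent assignment of the non-distinguished binary predicates across all pairs so that the induced 1-types and order formulas are exactly those prescribed; since $\CTwo$ imposes no constraints linking distinct pairs beyond their endpoints' 1-types and relative order, these choices can be made independently for each pair, so this part is routine once the type-recovery induction is in place.
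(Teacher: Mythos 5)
Your proof is correct and follows essentially the same route as the paper's: the forward direction reads off the path as the sequence of full types of the ordered elements, and the backward direction builds the chain structure from the $\teq$-components, recovers $\mathrm{ft}^{\str A}(a_i)=\sigma_i$ by induction on the two edge equations (anchored at the source for $\tless$ and at the target for $\tgreat$), and invokes Lemma~\ref{lem:compatible}.

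One correction to your ``secondary subtlety'': in \CTwoLinSucc, that is \CTwoLin{*,0,\{\Lin,\Succ\}}, the signature contains \emph{no} non-distinguished binary predicates, so there are no ``remaining binary relations'' to assign --- the structure is completely determined by the unary predicates and the chain order, and this step of your argument is vacuous. This is not incidental: if extra binary predicates were present, the truth of $\chi$ and $\chi_h$ on a pair would no longer be determined by the two 1-types and their order formula, the entailments in Definition~\ref{def:compatible} would generally fail, and your claim that the per-pair choices ``can be made independently'' would also be wrong for the counting conjuncts, whose witness counts are determined jointly by the choices across all pairs. The absence of such predicates is precisely what makes the full-type machinery (and hence this lemma) work; with even one extra binary predicate the paper must switch to the much heavier apparatus of Sections~\ref{sec:hard} and~\ref{sec:upperc2linS}, where the problem becomes equivalent to VAS reachability.
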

\begin{proof}
  First, assume that $\varphi$ is finitely satisfiable, and let $\str
  A$ be a model of $\varphi$. Let $a_1,\ldots,a_k$ be all elements of
  $\str A$ ordered \wrt $\Lin^{\str A}$. Then $\mathrm{ft}^{\str
    A}(a_1),\ldots,\mathrm{ft}^{\str A}(a_k)$ is a path from a source
  node to a target node in $\sigmacphigraph$.

  Second, assume that there exists a path $\sigma_1,\ldots,\sigma_k$
  from a source node to a target node in $\sigmacphigraph$.
  We will construct a structure $\str A$ over elements $a_1,\ldots
  a_k$ such that $\str A\models \varphi$. Let the universe of $\str A$
  consist of $k$ distinct elements $a_1,\ldots a_k$. Define unary
  predicates in $\str A$ in such a~way that for all $i\in
  \{1,\ldots,k\}$ the 1-type of $a_i$ coincides with
  $\sigma_i(\teq)$. Then define the relation $\Succ^{\str A}$ as
  $\{\tuple{a_i,a_{i+1}}\mid 1\leq i\leq k-1\}$ and $\Lin^{\str A}$ as
  the transitive closure of $\Succ^{\str A}$.

  The definition of $\Succ^{\str A}$ guarantees that for
  $\theta\in\{\tprec,\tsucc\}$ and for all $i$ we have
  $\sigma_i(\theta)=\mathrm{ft}^{\str A}(a_{i})(\theta)$. A~simple
  inductive proof shows that $\sigma_i$ coincides with
  $\mathrm{ft}^{\str A}(a_{i})$ on $\tless$ and $\tgreat$, too. Since
  $\sigma_1$ is a~source node, $\sigma_1(\tless)$ is the empty
  multiset, which coincides with the value of $\mathrm{ft}^{\str
    A}(a_{1})$ on $\tless$. In the inductive step, assuming that
  $\sigma_i(\tless)$ and $\mathrm{ft}^{\str A}(a_{1})(\tless)$
  coincide, we show the same for $\sigma_{i+1}$ and $\mathrm{ft}^{\str
    A}(a_{i+1})$: both the edge relation $E$ and the definition of
  full types require that the values of $\sigma_{i+1}$ and
  $\mathrm{ft}^{\str A}(a_{i+1})$ on $\tless$ are the $c$-multiset
  union of the value of $\sigma_i$ on $\tless$ with the singleton of
  the 1-type of the predecessor of $a_i$. The case of the value on
  $\tgreat$ is symmetric: the induction starts in $k$ with empty
  multisets and goes down to $1$ adding in each step respective
  successors.  This shows that $\mathrm{ft}^{\str A}(a_{i}) =
  \sigma_i$ for all $i\in \{1,\ldots, k\}$. Therefore all full types
  realised in $\str A$ are compatible with $\varphi$ and by
  Lemma~\ref{lem:compatible} we have $\str A\models \varphi$.
\end{proof}

Lemma~\ref{lem:graphreduction} leads us directly to the main theorem
of this section. To check satisfiability of a formula in \CTwoLinSucc
it is enough to guess an appropriate path in
$\sigmacphigraph$. Moreover, it is enough to use only
exponentially many different full types in the guessed path.
\begin{theorem}\label{thm:linsucc}
   The finite satisfiability problem for \CTwoLinSucc is
\NEXPTIME-complete.
\end{theorem}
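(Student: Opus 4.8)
The lower bound is already inherited from the \NEXPTIME-hardness of finite satisfiability for \FOTwo with only unary predicates \cite[Theorem 11]{EtessamiVW02}, a sub-problem in which the order and its successor play no role; so the work lies entirely in the \NEXPTIME\ upper bound, and the plan is to read Lemma~\ref{lem:graphreduction} algorithmically. I would first bring the input formula to the normal form~(\ref{nf1}) in polynomial time \cite[Theorem 2.2]{GradelO99} and write $\varphi$ for the result. Since $\varphi$ is equisatisfiable with the input over models of cardinality at least $c$, and $c$ is at most exponential in the input size, the finitely many smaller models can be guessed explicitly and tested against the input within \NEXPTIME; it then remains to decide whether $\sigmacphigraph$ contains a source-to-target path --- of length at least $c$, to respect that cardinality threshold. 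The apparent obstacle is that $\sigmacphigraph$ has \emph{doubly} exponentially many nodes, because a full type records the two $c$-multisets $\sigma(\tless)$ and $\sigma(\tgreat)$ over the exponentially large set $\Pi(\Sigma)$, so the graph can be neither constructed nor searched explicitly.

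The observation that removes this obstacle, and which I regard as the crux of the argument, is that $\sigma(\tless)$ and $\sigma(\tgreat)$ evolve monotonically along every path. The edge relation of $\sigmacgraph$ forces $\Pi'_< = \Pi_< \cup \Pi_{\Pred}$ and $\Pi_> = \Pi'_> \cup \Pi'_{\Succ}$, so along a path traversed from source toward target the values of $\sigma(\tless)$ form a $\subseteq$-ascending chain and those of $\sigma(\tgreat)$ a $\subseteq$-descending chain. By Fact~\ref{fact:chainSize} each chain has at most $|\Pi(\Sigma)|\cdot(c+2)$ distinct values, so at most $2\,|\Pi(\Sigma)|\cdot(c+2)$ distinct pairs $(\sigma(\tless),\sigma(\tgreat))$ occur on the entire path. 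For each fixed such pair the remaining coordinates $\sigma(\tprec),\sigma(\teq),\sigma(\tsucc)$ are singletons or empty, contributing at most $(|\Pi(\Sigma)|+1)^3$ possibilities. Hence any path in $\sigmacphigraph$ visits at most $2\,|\Pi(\Sigma)|\cdot(c+2)\cdot(|\Pi(\Sigma)|+1)^3$ distinct full types, a bound that is only \emph{singly} exponential in $|\varphi|$ because $|\Pi(\Sigma)|$ and $c$ each are.

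With this bound the algorithm is routine. If a source-to-target path exists then, by splicing out the segment between any two occurrences of the same full type (loop elimination preserves the edge relation), a \emph{simple} one exists; by the previous paragraph its length is at most $B := 2\,|\Pi(\Sigma)|\cdot(c+2)\cdot(|\Pi(\Sigma)|+1)^3$, which is exponential. As each full type has an exponential-size description, such a path is a certificate of exponential size. The \NEXPTIME\ procedure guesses a sequence $\sigma_1,\ldots,\sigma_k$ of full types with $k\le B$ and verifies, in deterministic exponential time, that $\sigma_1$ is a source node, that $\sigma_k$ is a target node, that every consecutive pair satisfies the edge condition of $\sigmacgraph$, and that every $\sigma_i$ is compatible with $\varphi$ in the sense of Definition~\ref{def:compatible} --- the last check amounting to computing the five functions $W^{\chi_h}$ from $\sigma_i$ (summing over the exponentially many $1$-types and cutting at $c$) and comparing with $C_h$. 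To enforce length at least $c$ without enlarging the certificate one notes that any cycle of $\sigmacphigraph$ leaves $\sigma(\tless)$ and $\sigma(\tgreat)$ unchanged and is therefore short, so a model of size $\ge c$ is certified either by a simple path of length in $[c,B]$ or by a simple path decorated with one pumpable cycle. Correctness follows from Lemmas~\ref{lem:compatible} and~\ref{lem:graphreduction}.

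The single non-trivial ingredient is the combinatorial bound of the second paragraph: the monotonicity of $\sigma(\tless)$ and $\sigma(\tgreat)$, combined with Fact~\ref{fact:chainSize}, collapses a search over a doubly exponential graph to a guess of singly exponential size. Everything else --- normalization, the separate treatment of models smaller than $c$, loop elimination together with the pumping device for the cardinality threshold, and the verification of the guessed path --- is standard bookkeeping, and in particular avoids the $4\EXPTIME$ blow-up of the naive encoding into WS1S mentioned before the statement.
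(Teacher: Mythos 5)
Your proposal is correct and follows essentially the same route as the paper: lower bound from \FOTwo with unary predicates, polynomial normalization, a separate guess-and-check for models of size below $c$, and the crucial use of the monotonicity of $\sigma(\tless)$ and $\sigma(\tgreat)$ together with Fact~\ref{fact:chainSize} to shrink the doubly exponential graph $\sigmacphigraph$ to an exponentially bounded certificate verified against Definition~\ref{def:compatible} and Lemma~\ref{lem:graphreduction}. Your explicit handling of the cardinality-$\geq c$ threshold via a pumpable cycle is a refinement the paper leaves implicit, but it does not change the argument in any essential way.
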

\begin{proof}
  The lower bound follows from the complexity of the finite
  satisfiability problem for \FOTwo with only unary predicates. For
  the upper bound, an algorithm for deciding finite satisfiability of
  \CTwoLinSucc works as follows. It takes a \CTwoLinSucc formula
  $\psi$ and converts it to a normal form $\varphi$ (in polynomial
  time).  It also checks (by nondeterministic guessing) if there
  exists a~model of $\psi$ of cardinality at most $c$, where $c$ is
  the height of $\varphi$.  Then the algorithm guesses a path from a
  source node to a target node in $\sigmacphigraph$ where $\Sigma$ is
  the signature of $\varphi$. This requires in particular
  verification of the fact that all nodes are compatible with
  $\varphi$.

  All this can be accomplished in time polynomial in the size of the
  graph. This size is potentially doubly exponential in $|\varphi|$:
  the number of all 1-types over $\Sigma$ is exponential in
  $|\varphi|$, so the number of sets of 1-types, and, in consequence,
  the number of full types, is doubly exponential. The potential
  2\NEXPTIME\ complexity of the algorithm can be lowered to
  \NEXPTIME\ using the observation that the $\tless$ and $\tgreat$
  components of full types behave in a~monotone way along any path
  connecting any source node with any target node. The $\tless$
  component may only increase and $\tgreat$ only decrease along any
  such path. Thus multisets on a path form an ascending \resp
  descending chain, and by Fact~\ref{fact:chainSize} there are only
  $O(|\Pi(\Sigma)|*c)$ such multisets occurring along the
  path. Therefore it is enough to guess only exponentially many (in
  $|\varphi|$) different full types.
\end{proof}

\section{Hardness of \texorpdfstring{\CTwoLinOne}{C2[...]}}\label{sec:hard}
In this section we show that the finite satisfiability problem for
\CTwoLinOne, with a binary relation $s$, is at least as hard as
non-emptiness of multicounter automata.  Similar reductions from
emptiness of multicounter automata in this context can be found
e.g. in~\cite{BojanczykDMSS11} and~\cite{ManuelZ13}.  Here, for a
given multicounter automaton $M$, we first construct a \CTwoSBin
formula~$\varphi_{M}$ which has a finite model if and only if $M$ is
non-empty. Then, using the idea from Section~\ref{sec:twoorders}, we
argue that the reduction can be modified to work for \CTwoLinOne.

We adopt a notion of {multicounter automata} (MCA for short)
similar to one in~\cite{BojanczykDMSS11} or~\cite{ManuelZ13}, but with
empty input alphabet and simplified counter manipulation. The
exposition below closely follows the one from the long version
of~\cite{ManuelZ13}. Intuitively, an MCA is a finite state automaton
without input but equipped with a finite set of counters which can be
incremented and decremented, but not tested for zero.  More formally,
a multicounter automaton $M$ is a tuple $\tuple{Q, C, R, \delta, q_I,
  F}$, where the set $Q$ of states, the initial state $q_I \in Q$ and
the set $F \subseteq Q$ of final states are as in usual finite state
automata, $C$ is a finite set (the \emph{counters}) and $R$ is a
subset of $C$. The transition relation $\delta$ is a subset of
\[
Q \times \{inc(c), dec(c), skip\mid c \in C\} \times Q.
\] 
An MCA is called \emph{reduced} if it does not have skip transitions
and $R=C$ (in this case we just omit the $R$ component of tuple $M$).

A \emph{configuration} of a multicounter automaton $M$ is a pair
$\tuple{p, \vec{n}}$ where $p$ is a state and $\vec{n} \in \Nat^C$
gives a value $\vec{n}(c)$ for each counter $c$ in $C$. Transitions
with $inc(c)$ and $skip$ can always be applied, whereas transitions
with $dec(c)$ can only be applied to configurations with $\vec{n}(c) >
0$. Applying a transition $\tuple{p, inc(c), q}$ to a configuration
$\tuple{p, \vec{n}}$ yields a configuration $\tuple{q, \vec{n}_0 }$
where $\vec{n}_0$ is obtained from $\vec{n}$ by incrementing its
$c$-th component and keeping values of all other components
unchanged. Analogously, applying (an applicable) transition $\tuple{p,
  dec(c), q}$ to a configuration $\tuple{p, \vec{n}}$ yields a
configuration $\tuple{q, \vec{n}_0}$ where $\vec{n}_0$ is obtained
from $\vec{n}$ by decrementing its $c$-th component. Transitions with
$skip$ do not change value of any counter in $C$. A \emph{run} is an
interleaving sequence of configurations and transitions
$\conf_1,\trans_1,\ldots,\trans_{k-1},\conf_k$ such that $\trans_i$
applied to $\conf_i$ gives $\conf_{i+1}$, for $1\leq i<k$. A run is
\emph{accepting}, if it starts in configuration $\tuple{q_I ,\vec{0}}$
and ends in some configuration $\tuple{q_F , \vec{n}_F}$ with $q_F \in
F$ and $\vec{n}_F(c) = 0$ for every $c\in R$ (note that for counters
$c\in {C\setminus R}$ the value $\vec{n}_F(c)$ may be arbitrary). The
emptiness problem for multicounter automata is the question whether a
given automaton $M$ has an accepting run. It is well known that this
problem (for both MCA and reduced MCA) is decidable, as it is
polynomial-time equivalent to the reachability problem in Vector Addition
Systems/Petri Nets\cite{Kosaraju82, Mayr84}.

\begin{definition}\label{def:hardFormula}
  Let $M = \tuple{Q, C, \delta, q_I, F}$ be a {reduced} MCA. Let
  $\Sigma = \{q\mid q\in Q\} \cup \{inc_c, dec_c\mid c\in C\} \cup
  \{\min,\max, 
  {\Lin}, \Succ, s\}$ where predicates $q, inc_c, dec_c, \min$ and
  $\max$ are unary and $\Lin, \Succ$ and $s$ are binary. Define
  $\varphi_M$ as the conjunction of the following $\Sigma$-formulas.
\begin{enumerate}[widest=10]
\item $\exists^{=1}{x}.\min(x) \wedge
  \exists^{=1}{x}.\max(x)$\label{def:hardFormula:1}
\item $\forall{x}\forall{y}. \left(\min(x) \rightarrow
  \left({x}\Lin{y} \vee x = y \right)\right) \wedge \left(\max(x)
  \rightarrow \left(y\Lin{x} \vee y = x
  \right)\right)$\label{def:hardFormula:2}
\item $\forall{x}.\left(\bigvee_{q\in Q} q(x)\right) \wedge
  \bigwedge_{q\in Q} \left(q(x)\rightarrow \bigwedge_{q'\in
    Q\setminus\{q\}} \neg q'(x)\right)$\label{def:hardFormula:3}
\item $\forall{x}. \left(\min(x) \rightarrow q_I(x)\right) \wedge
  \left(\max(x) \rightarrow \bigvee_{q_F\in F}
  q_F(x)\right)$\label{def:hardFormula:4}
\item $\!\!\!\begin{array}[t]{ll} \lefteqn{\forall{x}\forall{y}. {\Succ(x,y)}
  \rightarrow }\mbox{~~~}\\ &\bigvee_{\tuple{q,inc(c),q'}\in \delta}
  \left(q(x) \wedge inc_c(x) \wedge q'(y)\right)\; \vee\;
  \bigvee_{\tuple{q,dec(c),q'}\in \delta} \left(q(x) \wedge dec_c(x)
  \wedge q'(y)\right)
  \end{array}$
\label{def:hardFormula:5}
\item $\forall{x}.  \left(\neg\max(x)\right)\rightarrow \bigvee_{c\in C}\left(inc_c(x)\vee dec_c(x)\right)$\label{def:hardFormula:6}
\item $\forall{x}.\bigwedge_{c\in C} \left(inc_c(x)\rightarrow \neg dec_c(x)
    \wedge \bigwedge_{c'\in C\setminus\{c\}} \left(\neg dec_{c'}(x)\wedge\neg inc_{c'}(x)\right)\right)$\label{def:hardFormula:7}
\item $\forall{x}.\bigwedge_{c\in C} \left(dec_c(x)\rightarrow \neg inc_c(x)
    \wedge \bigwedge_{c'\in C\setminus\{c\}} \left(\neg inc_{c'}(x)\wedge\neg dec_{c'}(x)\right)\right)$\label{def:hardFormula:8}
\item $\forall{x}. \max(x)\rightarrow \bigwedge_{c\in C}\left(\neg
    inc_c(x)\wedge \neg dec_c(x)\right)$\label{def:hardFormula:9}
\item $\forall{x}\forall{y}. s(x,y) \rightarrow \bigvee_{c\in C}\left(inc_c(x) \wedge dec_c(y)\right)$\label{def:hardFormula:10}

\item $\forall{x}\forall{y}. s(x,y) \rightarrow {x}\Lin{y}$\label{def:hardFormula:11}

\item $\forall{x}.\left(\max(x) \vee \exists^{=1}{y}.\left(s(x,y) \vee s(y,x)\right)\right)$\label{def:hardFormula:12}

\end{enumerate}
\end{definition}

We will interpret $\varphi_M$ as a \CTwoSBin formula. Models of
$\varphi_M$ encode accepting runs of MCA $M$. The first two conjuncts
of $\varphi_M$ define the meaning of the auxiliary predicates $\min$
and $\max$; they hold for the least (\resp the greatest) element of a
model. Each element of the model corresponds to precisely one state
$q\in Q$, as specified by Conjunct~\ref{def:hardFormula:3}. Thus the
model is just a sequence of states. The first of them must be the
starting state $q_I$ and the last must be a final state $q_F\in F$, as
defined by Conjunct~\ref{def:hardFormula:4}. Every two consecutive
elements of the model form a transition. A state in which the
transition is fired is marked by predicate of the form $inc_c$ or
$dec_c$ denoting a counter to increment or decrement; this is
specified by Conjunct~\ref{def:hardFormula:5}. Every state, with the
exception of the last one, must be labelled by precisely one predicate
of the form $inc_c$ or $dec_c$, as expressed by
Conjuncts~\ref{def:hardFormula:6}--\ref{def:hardFormula:8}. The last
element is not labelled by any of these predicates
(Conjunct~\ref{def:hardFormula:9}), as no transition is fired
there. Since the values of all counters in starting and final state is
$0$ and no counter may fall below $0$, each incrementation of a
counter $c$ must be eventually followed by its decrementation, and
conversely, each decrementation of $c$ must be preceded by its
incrementation. We use the relation $s$ to match these increments and
decrements, as stated in
Conjunct~\ref{def:hardFormula:10}. Conjunct~\ref{def:hardFormula:11}
states that decrementation of a counter indeed follows its
incrementation. Since each state, except the final one, is a starting
state of some transition, it either corresponds to incrementation or
decrementation of some counter. Therefore it emits or accepts
precisely one edge labelled $s$, as stated by
Conjunct~\ref{def:hardFormula:12} of $\varphi_M$.  Formally, we have
the following lemma and a corollary that results from it.
\begin{lemma}\label{lem:phim}
  Let $M = \tuple{Q, C, \delta, q_I, F}$ be a reduced multicounter
  automaton and let $\varphi_M$ be the \CTwoSBin formula constructed in
  Definition~\ref{def:hardFormula}.  Formula $\varphi_M$ is finitely
  satisfiable if and only if $M$ is non-empty.
\end{lemma}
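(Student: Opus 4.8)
The plan is to prove the two directions of the biconditional separately, establishing a correspondence between accepting runs of $M$ and finite models of $\varphi_M$.

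For the forward direction, suppose $M$ is non-empty, so it has an accepting run $\conf_1,\trans_1,\ldots,\trans_{k-1},\conf_k$ with $\conf_1 = \tuple{q_I,\vec 0}$ and $\conf_k = \tuple{q_F,\vec 0}$ for some $q_F\in F$ (recall $R=C$ for a reduced MCA, so all final counter values are $0$). I would build a model $\str M$ whose universe is $\{a_1,\ldots,a_k\}$, with $\Lin$ the natural order $a_1\Lin\cdots\Lin a_k$ and $\Succ$ its induced successor; set $\min(a_1)$, $\max(a_k)$. Label each $a_i$ by its state $q$ from $\conf_i$ (making Conjunct~\ref{def:hardFormula:3} and~\ref{def:hardFormula:4} hold), and for $i<k$ label $a_i$ with $inc_c$ or $dec_c$ according to whether $\trans_i$ is $inc(c)$ or $dec(c)$ (giving Conjuncts~\ref{def:hardFormula:5}--\ref{def:hardFormula:9}). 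The crux is defining $s$: I must pair up each incrementing position with the decrementing position that cancels it. Since each counter starts and ends at $0$ and never goes negative, for each counter $c$ the $inc_c$-positions and $dec_c$-positions form a balanced sequence, and I can use a standard bracket-matching argument (think of $inc_c$ as an opening and $dec_c$ as a closing parenthesis) to define a bijection pairing each $inc_c$ with a strictly later $dec_c$. Letting $s(a_i,a_j)$ hold exactly for these matched pairs yields Conjuncts~\ref{def:hardFormula:10}--\ref{def:hardFormula:12}, since every non-$\max$ element carries exactly one $inc_c$ or $dec_c$ label and is therefore matched exactly once.

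For the backward direction, let $\str M\models\varphi_M$. By Conjuncts~\ref{def:hardFormula:1} and~\ref{def:hardFormula:2}, $\str M$ has a unique $\min$ and a unique $\max$ element that are the $\Lin$-least and $\Lin$-greatest, so the elements $a_1,\ldots,a_k$ listed in $\Lin$-order form a sequence starting with $q_I$ and ending in some $q_F\in F$. I read off a candidate run by taking the state label of each $a_i$ (unique by Conjunct~\ref{def:hardFormula:3}) and, for $i<k$, the transition $\trans_i$ witnessed by the unique $inc_c/dec_c$ label on $a_i$ together with Conjunct~\ref{def:hardFormula:5}, which guarantees $\tuple{q_i,\trans_i,q_{i+1}}\in\delta$. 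It remains to check this is a genuine accepting run, i.e.\ that no $dec(c)$ is applied when counter $c$ is zero and that all counters return to $0$. Here I use the relation $s$: by Conjunct~\ref{def:hardFormula:12} every non-$\max$ element emits or receives exactly one $s$-edge, by~\ref{def:hardFormula:10} each $s$-edge links an $inc_c$ to a $dec_c$ of the same counter, and by~\ref{def:hardFormula:11} the $inc$ strictly precedes the $dec$. Thus $s$ restricted to counter $c$ is a bijection between $inc_c$-elements and $dec_c$-elements respecting $\Lin$, which forces the numbers of increments and decrements of $c$ to be equal (final value $0$) and, by a standard argument on such order-respecting matchings, forces every prefix to contain at least as many $inc_c$ as $dec_c$ (so the counter never goes negative).

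The main obstacle is this last combinatorial point in the backward direction: deducing from the mere existence of an injective, strictly order-increasing matching $s$ between $inc_c$- and $dec_c$-labelled elements that the counter stays nonnegative throughout. I expect to argue it by contradiction: if some prefix $a_1,\ldots,a_i$ had strictly more $dec_c$ than $inc_c$ occurrences, then since every $dec_c$ in the prefix is $s$-matched to an $inc_c$ that strictly precedes it (hence also lies in the prefix), the prefix would contain at least as many $inc_c$ as $dec_c$, a contradiction. Symmetrically, balance of the whole sequence follows because $s$ is a total bijection between the $inc_c$ and $dec_c$ elements. Everything else is a routine verification that the syntactic conjuncts transcribe exactly the semantic conditions on runs, so the heart of the proof is setting up and exploiting the bracket-matching encoded by $s$.
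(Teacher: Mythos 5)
Your proposal is correct and follows essentially the same route as the paper: the forward direction's bracket matching is exactly the paper's explicit definition of $s^{\str A}$ (pairing each increment with the first later decrement returning the counter to its pre-increment value), and the backward direction extracts the run from the labels and uses the order-respecting, injective $s$-matching to justify applicability of decrements and zero final values. The only (immaterial) difference is bookkeeping in the backward direction: the paper defines the counter values $\vec{n}_i(c)$ directly as the number of ``open'' $s$-edges crossing position $i$ and verifies the run inductively, while you define them by the transition dynamics and prove nonnegativity by the prefix-counting argument — the same underlying combinatorial fact.
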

\begin{proof}
  First, assume that $M$ is non-empty. Then $M$ has an accepting run
  of the form $ \conf_1,\trans_1,\ldots,\trans_{k-1},\conf_k$, where
  $\conf_i = \tuple{q_i,\vec{n}_i}$ for $i\in \{1,\ldots,k\}$, and 
  $\trans_i =$ $ \tuple{q_i,inc(c_i),q_{i+1}}$ or $\trans_i =
  \tuple{q_i,dec(c_i),q_{i+1}}$, for $i\in \{1,\ldots,k-1\}$.  We
  will construct a structure $\str A$ on elements $e_1,\ldots,e_k$ and
  show that $\str A$ models $\varphi_M$. Define $(\Lin)^{\str A} =
  \{\tuple{e_i,e_j}\mid i,j\in \{1,\ldots,k\}, i<j\}$ and
  $(\Succ)^{\str A} = \{\tuple{e_i,e_{i+1}}\mid i\in
  \{1,\ldots,k-1\}\}$. Label $e_1$ by predicate $\min$ and
  $e_k$ by $\max$. This leads to satisfaction of
  Conjuncts~\ref{def:hardFormula:1} and~\ref{def:hardFormula:2} of
  $\varphi_M$. Label each element $e_i$ by $q_i$, for $i\in
  \{1,\ldots,k\}$. This leads to satisfaction of
  Conjunct~\ref{def:hardFormula:3}. Conjunct~\ref{def:hardFormula:4} 
  says that $e_1$ should be labelled by $q_I$ and $e_k$ by
  some $q_F$ with $q_F\in F$. Since run $r$ is accepting,
  indeed we have $q_I = q_1$ and $q_k\in F$ and thus
  Conjunct~\ref{def:hardFormula:4} is satisfied by $\str A$. For each
  $i\in \{1,\ldots,k-1\}$ label the element $e_i$ by $inc_{c_i}$ if
  $\trans_i = \tuple{q_i,inc(c_i),q_{i+1}}$ or by $dec_{c_i}$
  if $\trans_i = \tuple{q_i,dec(c_i),q_{i+1}}$. This leads to
  satisfaction of
  Conjuncts~\ref{def:hardFormula:5}--\ref{def:hardFormula:9}.  Finally
  we define relation $s^{\str A}$ in such a way that it
  connects $e_i$ with $e_j$ if $\trans_j$ decrements the same counter
  that is incremented by $\trans_i$ and the value of the counter after
  firing $\trans_j$ is for the first time equal to the value before
  firing $\trans_i$. Formally,
  \begin{eqnarray*}
    s^{\str A}(e_i,e_j)& \mathrm{iff}& i<j,\\
&& \trans_i=\tuple{q_i,inc(c_i), q_{i+1}},\\
&& \trans_j=\tuple{q_j,dec(c_j), q_{j+1}},\\
&& c_i=c_j, \mathrm{ and}\\
&& j=\mathrm{min}\{l\mid l>i\wedge \vec{n}_{l+1}(c_i)=\vec{n}_i(c_i)\}.
  \end{eqnarray*}
  Conjuncts~\ref{def:hardFormula:10} and~\ref{def:hardFormula:11} of
  $\varphi_M$ are satisfied by construction. In the last configuration
  all counters have value $0$, so each incrementation of a counter has
  a later matching decrementation of the same counter. Similarly, in
  the first configuration all counters have value $0$, so each
  decrementation of a counter has an earlier matching incrementation of
  the same counter. Therefore Conjunct~\ref{def:hardFormula:12} is
  also satisfied.

  Second, assume that $\varphi_M$ is finitely satisfiable and let
  $\str A$ be a model of $\varphi_M$. We will show that MCA $M$ is
  non-empty. Assume that elements of $\str A$ sequenced in order
  $(\Lin)^{\str A}$ are $e_1,\ldots,e_k$. By
  Conjunct~\ref{def:hardFormula:3} of $\varphi_M$ there exist a
  sequence of $M$ states $q_1,\ldots,q_k$ such that $q_i^{\str
    A}(e_i)$ hold for every $i\in \{1,\ldots,k\}$. By
  Conjuncts~\ref{def:hardFormula:6}--\ref{def:hardFormula:9} each
  $e_i$ with $i<k$ is labelled by precisely one predicate of the form
  $inc_c^\str{A}(e_i)$ or $dec_c^\str{A}(e_i)$. For $i\in
  \{1,\ldots,k\}$ let
\[\conf_i = \tuple{q_i,\vec{n}_i},\text{ where for every $c\in C$}\]
\[\vec{n}_i(c) = |\{e_j\in \str A\mid\text{ $j<i$ and relation $s^{\str A}(e_j,e_l)$ holds for some $l \geq i$}\}|.\]
For $i\in \{1,\ldots,k-1\}$ let
\[\trans_i = \tuple{q_i,inc(c),q_{i+1}}\text{ provided that $inc_c^\str{A}(e_i)$ holds, or}\]
\[\trans_i = \tuple{q_i,dec(c),q_{i+1}}\text{ provided that $dec_c^\str{A}(e_i)$ holds for some $c\in C$}.\]
The accepting run $r$ of $M$ is
\[\conf_1,\trans_1,\ldots,\trans_{k-1},\conf_k.\]
We will show that $r$ is an accepting run. First we prove that it is a
run of $M$. Assume that for some $i<k$ sequence
$\conf_1,\trans_1,\ldots,\trans_{i-1},\conf_i$ forms a run. We will show
that $\trans_i$ is applicable to configuration $\conf_i$. This is clear
when $\trans_i = \tuple{q_i,inc(c),q_{i+1}}$, for some $c\in
C$. Otherwise $\trans_i = \tuple{q_i,dec(c),q_{i+1}}$. Therefore
$dec_c^{\str A}(e_i)$ holds. Since $i<k$, by
Conjunct~\ref{def:hardFormula:12} of $\varphi_M$, element $e_i$ emits
or accepts precisely one $s$ edge. Because it corresponds
to decrementation of $c$, by Conjunct~\ref{def:hardFormula:10} it must
accept $s$ edge from some element $e_j$. By
Conjunct~\ref{def:hardFormula:11} we have $j<i$. Therefore, by
definition of $\vec{n}_i$ we have $\vec{n}_i(c)>0$. Thus $\trans_i$ is
applicable to $\conf_i$ and
$\conf_1,\trans_1,\ldots,\trans_{i-1},\conf_i,\trans_i,\conf_{i+1}$ indeed
forms a run.

Clearly, $q_1 = q_I$ and $q_k\in F$, by
Conjunct~\ref{def:hardFormula:4}.  Moreover, $\vec{n}_1 =\vec{0}$ and
$\vec{n}_k =\vec{0}$, thus $\conf_1$ and $\conf_k$ are starting and
\resp final configuration. Thus run $r$ is accepting and $M$ is a
non-empty MCA.
\end{proof}

\begin{corollary}
The finite satisfiability problem for \CTwoSBin is at least as hard as
the emptiness problem for multicounter automata.
\end{corollary}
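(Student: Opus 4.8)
The plan is to read the corollary off directly from Lemma~\ref{lem:phim}, checking only that the map $M\mapsto\varphi_M$ is a genuine polynomial-time (indeed elementary) reduction and that it applies to arbitrary, not merely reduced, automata. First I would verify that $\varphi_M$ has size polynomial in $|M|$. Inspecting Definition~\ref{def:hardFormula}, Conjuncts~\ref{def:hardFormula:1}--\ref{def:hardFormula:4}, \ref{def:hardFormula:9}, \ref{def:hardFormula:11} and~\ref{def:hardFormula:12} have size linear in $|Q|$ or $|C|$; Conjunct~\ref{def:hardFormula:5} is a disjunction over $\delta$; and Conjuncts~\ref{def:hardFormula:6}--\ref{def:hardFormula:8} and~\ref{def:hardFormula:10} are conjunctions or disjunctions over $C$. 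Hence $|\varphi_M|$ is bounded by a polynomial in $|Q|+|C|+|\delta|$, and both the signature $\Sigma$ and the formula $\varphi_M$ are computable from $M$ in polynomial time.

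Next I would remove the restriction to reduced automata. As recalled in the preliminaries, the emptiness problems for general MCA and for reduced MCA are polynomial-time equivalent (both being equivalent to reachability in vector addition systems). Thus, given an arbitrary MCA, I would first convert it to an equivalent reduced MCA $M$ in polynomial time, and then apply the construction $M\mapsto\varphi_M$. Composing the two polynomial maps yields a single polynomial-time map sending an arbitrary MCA to a \CTwoSBin formula.

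Combining these observations with Lemma~\ref{lem:phim} finishes the proof: the resulting \CTwoSBin formula is finitely satisfiable if and only if the original automaton is non-empty, and it is produced in polynomial time. Consequently, any decision procedure for finite satisfiability of \CTwoSBin yields a decision procedure for MCA emptiness with only polynomial overhead, so the former problem is at least as hard as the latter. I do not expect any genuine obstacle here: all of the semantic work has already been carried out in Lemma~\ref{lem:phim}, and what remains---bounding the size of $\varphi_M$ and passing from general to reduced automata---is entirely routine.
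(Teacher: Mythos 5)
Your proposal is correct and matches the paper's (implicit) argument: the corollary is stated as an immediate consequence of Lemma~\ref{lem:phim}, with exactly the routine observations you spell out — that $\varphi_M$ is computable in polynomial time from $M$, and that emptiness for general MCA reduces in polynomial time to emptiness for reduced MCA — left tacit. Your write-up simply makes these standard details explicit, which is a faithful completion of the same approach.
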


The result above can be strengthened by observing that $\Succ$ symbol
is not necessary. Without this symbol in the signature we may still
encode accepting runs of multicounter automata in \CTwoLinOne by
complicating slightly the definition of the predicate $s$. In the new
encoding its purpose is not only to provide correspondence between
increase and decrease of a counter but also to encode the successor
relationship between elements of a structure. This, however, requires
a slight change in a way we encode accepting runs. In
Definition~\ref{def:hardFormula} a run
$\conf_1,\trans_1,\ldots,\trans_{i-1},\conf_i,\trans_{k-1},\conf_k$ is
represented as a structure with $k$ elements, where the $i$-th element
(for $i<k$) represents both the state of $\conf_i$ and the counter
manipulation defined by transition $\trans_i$. In the modified
encoding we separate elements representing states and transitions. In
a \CTwoLinOne formula we say that every element representing a state
(with the exception of first and last one) has precisely one $s$ edge
to- and from- an element representing transition. Similarly, every
transition element has precisely one $s$ edge to- and from- a state
element. Since we still require that $s(x,y)$ implies $x \Lin y$, it
follows that a state element $x$ (\resp a transition element) is
connected to a transition element $y$ (\resp a state element) by an
$s$ edge if and only if $y$ is the successor of $x$ \wrt $\Lin$ (we
have seen a formula that specifies a similar property in
Section~\ref{sec:twoorders}). The formula does not constrain in any
way $s$ edges between two transition elements. Thus we may expand the
formula with conjuncts that use $s$ edges to match an increase of a
counter in a transition with a decrease of the same counter in some
following transition. This leads to the following corollary.

\begin{theorem}
The finite satisfiability problem for \CTwoLinOne is at least as hard as
the emptiness problem for multicounter automata.
\end{theorem}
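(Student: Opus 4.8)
The plan is to reduce the emptiness problem for multicounter automata to finite satisfiability of \CTwoLinOne, thereby re-proving Lemma~\ref{lem:phim} but over a signature in which the successor symbol $\Succ$ is no longer available: a \CTwoLinOne signature may contain only the distinguished order $\Lin$, arbitrarily many unary predicates, and a single additional binary predicate $s$. Given a reduced MCA $\tuple{Q,C,\delta,q_I,F}$ I would build a formula $\psi_M$ over such a signature that is finitely satisfiable iff $M$ is non-empty. The difficulty compared with Definition~\ref{def:hardFormula} is that the formula there uses $\Succ$ both to read off consecutive states (Conjunct~\ref{def:hardFormula:5}) and, separately, uses $s$ to match increments with decrements; now the single relation $s$ must do both jobs. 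I would resolve this by making $s$ encode the successor relation itself, following the idea of Section~\ref{sec:twoorders}.

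First I would change the shape of the encoding so that states and transitions live on \emph{different} elements. Introduce unary markers $\mathit{St}$ and $\mathit{Tr}$ that partition the universe; let state elements carry exactly one $q\in Q$ and transition elements carry exactly one label $\tau\in\delta$ (whence the derived markers $inc_c,dec_c$), and keep $\min,\max$ for the endpoints, both of which are states. The intended model is the alternating sequence state, transition, state, $\ldots$, transition, state, whose $\Lin$-order realises an accepting run. To pin this down I impose $s\subseteq{\Lin}$ together with the following \emph{skeleton} constraints, all expressible by guarded counting quantifiers such as $\exists^{=1}y.(s(x,y)\wedge\mathit{Tr}(y))$: every non-$\max$ state has exactly one $s$-edge to a $\mathit{Tr}$-element and every non-$\min$ state exactly one $s$-edge from a $\mathit{Tr}$-element; every transition element has exactly one $s$-edge to a $\mathit{St}$-element and exactly one from a $\mathit{St}$-element; and no $s$-edge joins two state elements. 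The key step is then to argue, exactly as in Lemma~\ref{lem:succ} and Remark~\ref{rem:succ}, that these conditions force the $\mathit{St}$--$\mathit{Tr}$ edges of $s$ to be the induced successor of $\Lin$: in the subgraph of $s$-edges between states and transitions every vertex has in- and out-degree one except the $\min$-state (a source) and the $\max$-state (a sink), so it decomposes into one path plus cycles, and since $s\subseteq{\Lin}$ is acyclic there are no cycles; the single remaining path is $\Lin$-increasing and covers all elements, hence coincides with the $\Lin$-order, which therefore alternates $\mathit{St},\mathit{Tr},\mathit{St},\ldots$ and makes each $\mathit{St}$--$\mathit{Tr}$ $s$-edge connect an element to its immediate $\Lin$-successor.

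With successor thus available as ``$s$ restricted to $\mathit{St}$--$\mathit{Tr}$ pairs'', I would re-express the remaining conjuncts of Definition~\ref{def:hardFormula}. Conjunct~\ref{def:hardFormula:5} becomes two guarded implications stating that a transition element labelled $\tau=\tuple{q,a,q'}$ has its $s$-predecessor state labelled $q$ and its $s$-successor state labelled $q'$; the initial and final conditions are placed on the $\min$- and $\max$-states as in Conjunct~\ref{def:hardFormula:4}. The increment/decrement matching reuses $s$ on the remaining edges, namely those between two transition elements: I require $\forall x\forall y.\,(\mathit{Tr}(x)\wedge\mathit{Tr}(y)\wedge s(x,y))\rightarrow\bigvee_{c\in C}(inc_c(x)\wedge dec_c(y))$, and since $s\subseteq{\Lin}$ this forces each matched decrement to follow its increment; finally each $inc_c$ transition is required to have exactly one $s$-edge to a $\mathit{Tr}$-element and each $dec_c$ transition exactly one $s$-edge from a $\mathit{Tr}$-element, which yields a bijection between increments and decrements of each counter and hence forces every counter to start and end at $0$ and never drop below it. That $\psi_M$ is finitely satisfiable iff $M$ is non-empty is then proved exactly as in Lemma~\ref{lem:phim}: from an accepting run one lays out the alternating model and defines the transition--transition part of $s$ by the nearest-matching-decrement rule used there, while from a model one reads states and transitions along the skeleton path and sets $\vec{n}_i(c)$ to the number of so-far-unmatched $inc_c$ transitions before position $i$.

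The main obstacle is the skeleton argument of the second paragraph: one must be sure that the guarded-counting degree conditions, together with $s\subseteq{\Lin}$, genuinely force the $\mathit{St}$--$\mathit{Tr}$ $s$-edges to be the successor relation and the two element-kinds to strictly alternate, ruling out spurious long $s$-edges or additional disjoint increasing chains. This is precisely the place where acyclicity of $\Lin$ is indispensable, and it is the exact analogue of the cycle-exclusion step in the proof of Lemma~\ref{lem:succ}; once it is in place, everything else is a routine re-run of Lemma~\ref{lem:phim} with ``successor'' replaced by its $s$-definition.
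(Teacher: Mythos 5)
Your proposal is correct and takes essentially the same route as the paper: it separates state elements from transition elements, lets the single predicate $s$ double as the forced $\Lin$-successor on state--transition pairs (via the degree-plus-acyclicity argument of Lemma~\ref{lem:succ}) and as the increment--decrement matching on transition--transition pairs, and re-runs the correctness argument of Lemma~\ref{lem:phim}. Your write-up in fact supplies the details of the skeleton-forcing step that the paper only sketches informally.
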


Note that by Remark~\ref{rem:succ} we have the
following corollary.
\begin{corollary}
  The finite satisfiability problem for \CTwoTwoPrec, where $\prec$ is
  interpreted as an acyclic relation, is at least as hard as the
  emptiness problem for multicounter automata.
\end{corollary}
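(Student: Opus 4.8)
The plan is to combine the reduction behind the preceding theorem with Remark~\ref{rem:succ}. The preceding theorem already produces, for every reduced MCA $M$, a \CTwoLinOne-formula $\psi_M$ — built over the distinguished linear order together with a single non-distinguished binary symbol — that is finitely satisfiable if and only if $M$ is non-empty. The only thing standing in the way of reusing $\psi_M$ in the class \CTwoTwoPrec is that there the distinguished symbol $\prec$ is merely required to be acyclic rather than to be a linear order. So the entire task reduces to manufacturing a genuine linear order out of $\prec$ by means of the available logic, and this is exactly what Remark~\ref{rem:succ} supplies.

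Concretely, I would introduce a fresh binary symbol $t$ and form $\vartheta$, the formula $\varphi_s$ of Lemma~\ref{lem:succ} read with $\prec$ in the role of $\Lin$ and $t$ in the role of $s$. By Remark~\ref{rem:succ}, over any finite structure in which $\prec$ is acyclic the formula $\vartheta$ forces $\prec$ to coincide with the linear order induced by $t$; in particular it forces $\prec$ to be a total linear order. I would then set $\psi'_M$ to be $\psi_M$ with its distinguished linear order replaced by $\prec$ and with its non-distinguished binary symbol renamed to a second fresh symbol $s$, chosen disjoint from $t$, and take as the final reduction the conjunction $\vartheta \wedge \psi'_M$. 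This formula lives in \CTwoTwoPrec: it uses an arbitrary number of unary predicates, exactly the two non-distinguished binary symbols $s$ and $t$, and the distinguished acyclic symbol $\prec$; and it has size polynomial in $|M|$.

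It then remains to verify that $\vartheta \wedge \psi'_M$ is finitely satisfiable if and only if $M$ is non-empty. For the forward direction, any finite model of the conjunction has $\prec$ acyclic by the ambient class, hence by the previous paragraph $\prec$ is a linear order in that model; thus $\psi'_M$ is interpreted over a genuinely linearly ordered structure and, being a literal copy of $\psi_M$ up to the renaming of $\prec$ and $s$, it forces $M$ to be non-empty. For the converse, if $M$ is non-empty then the preceding theorem yields a finite model of $\psi_M$ on elements $e_1,\ldots,e_k$; I would keep the same universe, set $\prec$ equal to that linear order (so $\prec$ is acyclic, as the class demands), and additionally interpret $t$ as the induced successor of $\prec$. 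This augmentation satisfies $\vartheta$ by the easy direction of Lemma~\ref{lem:succ}, while it leaves the truth of $\psi'_M$ unchanged, so the conjunction has a finite model.

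I do not expect a genuine mathematical obstacle here, since the remark does all the heavy lifting of resurrecting a linear order from an acyclic relation; the construction is elementary and the reduction is clearly polynomial-time. The only points that require care are bookkeeping the binary-symbol budget — two non-distinguished symbols suffice, one ($t$) to carry the successor that turns $\prec$ into an order and one ($s$) to play the matching relation of $\psi_M$ — and checking that the two conjuncts constrain disjoint binary symbols, so that a model of $\psi_M$ can always be extended by the successor relation without disturbing $\psi'_M$.
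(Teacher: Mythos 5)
Your proposal is correct and is essentially the paper's own proof: the paper justifies this corollary by a single appeal to Remark~\ref{rem:succ}, and what you spell out --- conjoining the successor axiomatization $\varphi_s$ of Lemma~\ref{lem:succ}, instantiated on a fresh second binary symbol $t$ with $\prec$ in the role of $\Lin$, to the \CTwoLinOne reduction formula of the preceding theorem (its order symbol replaced by $\prec$, its matching relation kept as $s$), and then invoking the remark to force $\prec$ to be a linear order --- is exactly the construction that one-line appeal is shorthand for. Your verification of both directions (reading off non-emptiness of $M$ from any finite model in which $\prec$ is acyclic, and conversely expanding a model of $\psi_M$ by interpreting $t$ as the induced successor) and your bookkeeping that the two non-distinguished binary symbols $s,t$ stay within the budget of \CTwoTwoPrec match the paper's intent.
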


\section{Satisfiability of \texorpdfstring{\CTwoOneLinS}{C2[...]}} 
\label{sec:upperc2linS}
In this section we show that the finite satisfiability problem of
\CTwoOneLinS is decidable. Here again the algorithm consists of three
parts: we bring the input formula to a normal form and then we check
its local and global consistency. Local consistency is checked using
\emph{frames} that store information about possible types in a
structure. Global consistency is checked using multicounter
automata. Compared to the previous section, both types and automata
are much more complicated.

\subsection{Normal form of \texorpdfstring{\CTwo}{C2} formulas}
For a natural number $n$ denote by $\underline{n}$ the set
$\{1,\ldots,n\}$.  We will assume that input \CTwoOneLinS formula
$\varphi$ is in a normal form
\begin{eqnarray}\label{nf}
  \varphi = & \forall{x}\forall{y}.(\alpha(x,y) \vee x=y)\wedge 
  & \bigwedge_{h\in \underline{m}}
  \forall{x}\exists^{=1}{y}.(f_h(x,y)\wedge x\neq y)
\end{eqnarray}
where $\alpha$ is a~quantifier-free formula with unary and binary
predicate symbols and $f_1,\ldots, f_{m}$ are binary predicates. By a
routine adaptation of transformation in~\cite{GradelO99} we may
convert each \CTwo formula to an exponentially larger \CTwo formula
$\varphi'$ in normal form, such that $\varphi$ and~$\varphi'$ are
equisatisfiable (on structures of cardinality $>1$).  
\begin{remark}
  In Section~\ref{sec:linsucc} we use a notion of normal forms for
  \CTwo formulas with only polynomial blowup. Here, for simplicity of
  presentation, we decided to employ the one with exponential
  blowup. Since there is no elementary upper bound on the complexity
  of the problem to which we reduce our logic, the construction in the
  present section would not benefit from the usage of a more succinct
  normal form.
\end{remark}

\subsection{2-types and message types}
To be able to reason about local consistency of a formula we introduce
several (standard in finite-model theory) notions of types, often
following notations from~\cite{IPH10}.  Intuitively, a 2-type defined
below is a generalisation of an order formula from
Section~\ref{sec:linsucc}. It contains enough information to check
whether the conjunct $\alpha$ from a formula in normal form (\ref{nf})
is locally true and if a given element is a witness for another
element \wrt a conjunct $\forall{x}\exists^{=1}{y}.(f_h(x,y)\wedge
x\neq y)$. Star types defined in Section~\ref{sec:startypes}
generalise full types from Section~\ref{sec:linsucc} and contain
enough information to check if an element has the right number of
witnesses.

Fix a finite signature $\Sigma$.  A \emph{2-type} is a~maximal
consistent conjunction of atomic and negated atomic formulas over
$\Sigma$ involving only the variables $x$ and $y$ and satisfying three
additional restrictions: first, it contains $\neg (x=y)$; second,
whenever it contains $\Succ(x,y)$ or $\Succ(y,x)$, it also contains
respectively $x\Lin y$ or $y\Lin x$; and third, it contains either
$x\Lin y$ or $y\Lin x$, but not both.  We will identify a 1-type $\pi$
(\resp\ a 2-type $\tau$) with the set of positive atomic formulas occurring
in $\pi$ (\resp\ in $\tau$). Each 2-type $\tau(x,y)$ uniquely determines two
1-types of $x$ and $y$, respectively, that we denote $\tpf{\tau}$ and
$\tps{\tau}$. For a 2-type $\tau$, the 2-type obtained by swapping the
variables $x$ and $y$ is denoted $\tau^{-1}$.  The symbol
$\mathcal{T}(\Sigma)$ denotes the set of of 2-types over $\Sigma$.

For a~structure $\str A$ over the signature $\Sigma$ and an element
$e\in \str A$, $\tpu (e)$ denotes the~unique 1-type $\pi\in \Pi(\Sigma)$
such that $\str A\models\pi(e)$. Similarly, for $e_1,e_2\in \str A$,
$\tpu(e_1,e_2)$ is the unique 2-type $\tau\in \mathcal{T}(\Sigma)$
such that $\str A\models\tau(e_1,e_2)$. If $\str
A\models\tau(e_1,e_2)$, we say that $e_1$ \emph{emits} the type $\tau$
and $e_2$ \emph{accepts} it and that $\tau$ \emph{originates} in
$e_1$. A 1-type $\pi$ (\resp 2-type $\tau$) is \emph{realised} in
$\str A$ if $\pi = \tp{\str A}(e)$ (\resp $\tau = \tp{\str
  A}(e_1,e_2)$) for some $e\in \str A$ (\resp $e_1,e_2\in \str A$,
with $e_1\neq e_2$).  The symbols $\Pi(\str A)$ and $\mathcal{T}(\str A)$
denote respectively the set of 1-types and the set of 2-types over
$\Sigma$ realised in $\str A$.
A 1-type $\kappa \in \Pi(\Sigma)$ that has only one realisation in a
structure $\str A$ is said to be a \emph{royal 1-type} in $\str A$.  If
an element $e$ of $\str A$ realises a {royal 1-type} then it is said to
be a \emph{king} in $\str A$. Any structure may have multiple kings.

If $\Sigma$ is a relational signature and $\overline{f} =
f_1,\ldots,f_m$ is a sequence of distinct binary predicates in
$\Sigma$, then the pair $\tuple{\Sigma,\overline{f}}$ is called a
\emph{classified signature}.  Let $\tuple{\Sigma,\bar{f}}$ be a
classified signature and let $\tau(x,y)$ be a 2-type over $\Sigma$. We
say that $\tau$ is a~\emph{message type} over $\tuple{\Sigma,\bar{f}}$
if $f(x,y)\in \tau(x,y)$ for some predicate $f$ in
$\bar{f}$. Predicates in $\bar{f}$ will be called \emph{message}
predicates. Given a structure $\str A$ over the signature
$\tuple{\Sigma,\bar{f}}$ and an element $a\in \str A$, we want to
capture message types connecting $a$ to other elements of $\str A$ and
all 2-types connecting $a$ to kings of $\str A$.  We first define the
set of all these 2-types. If $\Kings$ is a set of royal 1-types from
$\str A$, then denote by $\tau(\Kings,\Sigma,\bar{f})$ the set of all
2-types $\mu$, such that $\mu$ is a message type over
$\tuple{\Sigma,\bar{f}}$ or $\otp{2}(\mu)\in \Kings$. A 2-type from
$\tau(\Kings,\Sigma,\bar{f})$ is called an \emph{essential type}. If
$\tau$ is an essential type (\resp a message type) such that
$\tau^{-1}$ is also an essential type (\resp a message type) then we
say that $\tau$ is an \emph{invertible essential type} (\resp
\emph{invertible message type}).  On the other hand, if $\tau$ is a
2-type such that neither $\tau$ nor $\tau^{-1}$ is an essential type,
then we say that $\tau$ is a \emph{silent type}.

Given a structure $\str A$ over a~classified signature
$\tuple{\Sigma,\overline{f}}$ and a message type $\tau$, if $\str A\models
\tau(e_1,e_2)$ then $e_2$ is called a \emph{witness for $e_1$} in
$\str A$. It follows that if $\tau$ is an invertible message type then
also $e_1$ is a witness for $e_2$. This is because $\str A\models
\tau^{-1}(e_2,e_1)$ and $\tau^{-1}$ is an (invertible) message type.
From now on we assume that the classified signature of $\varphi$ is
$\tuple{\Sigma,\bar{f}}$, that is, $\Sigma$ contains unary and binary
predicates from $\varphi$ and $\bar{f} = f_1,\ldots f_m$ is the
sequence of binary predicates occurring under existential quantifiers
in $\varphi$.

Since we consider predicates of arity at most 2, a structure $\str A$
can be seen as a complete directed graph, where nodes are labelled by
1-types and edges are labelled by 2-types. Thus to define such a
structure it is enough to define 1-types of its elements and 2-types
of all pairs of elements provided that the projections of 2-types onto
1-types coincide with these 1-types and that for each pair
$\tuple{e_1,e_2}$ of elements connected by a 2-type $\mu$ the pair
$\tuple{e_2,e_1}$ is connected by the 2-type $\mu^{-1}$. In the rest
of the paper we use the notions of nodes and elements interchangeably.

\subsection{Normal structures}
Now, to simplify the reasoning, we restrict the class of models that
we consider. Intuitively, besides some simple sanity conditions, we
want to avoid non-royal 1-types with small number of occurrences. In
other words, each 1-type either should be royal or it should have many
occurrences in a structure. Here ``many'' means that we can always
find a representative of this 1-type that is not a witness in any
conjunct of the form $ \forall{x}\exists^{=1}{y}\ldots$ and thus can
be connected to any other non-royal element with a silent 2-type.
\begin{definition}\label{def:normal-structure}
  A finite structure $\str A\in \ClassO$ over a classified signature
  $\tuple{\Sigma,\bar{f}}$ is \emph{normal} if
\begin{enumerate}
\item both the smallest and the largest elements \wrt $\Lin^{\str A}$ are
  kings in $\str A$,\label{def:normal-first}
\item for every two non-royal elements $e_1, e_2 \in \str A$ satisfying
  $e_1 \Lin^{\str A} e_2$ there exist two elements $e'_1, e'_2 \in
  \str A$ such that $e'_1 \Lin^{\str A} e'_2$, $\tp{\str A}(e_1) =
  \tp{\str A}(e'_1)$, $\tp{\str A}(e_2) = \tp{\str A}(e'_2)$, and
  $\tp{\str A}(e'_1,e'_2)$ is a silent 2-type, \label{def:normal-second}
\item for every node $e\in \str A$ and $f\in\bar{f}$ we have $|\{e'\in
  \str A\mid \str A \models f(e,e')\}| = 1$,
  and \label{def:normal-third}
\item for every $e_1,e_2\in \str A$ if $\Succ^{\str A}(e_1,e_2)$ then
  $\tp{\str A}(e_1,e_2)$ is an invertible essential type.\label{def:normal-fourth}
\end{enumerate}
\end{definition}
Lemma~\ref{lem:normalstr} below says that when dealing with models of
\CTwoOneLinS formulas, we may restrict to normal structures. To prove
it we will need some technical lemmas.  If $\str A$ is a model of a
formula in \CTwoOneLinS and $S$ is a~set containing some elements of
$\str A$, then by $\min_{\str A}(S)$ we denote the smallest element of
$S$ \wrt the linear order $\Lin^{\str A}$. Similarly, by $\max_{\str
  A}(S)$ we denote the largest element of $S$ \wrt $\Lin^{\str A}$.

Let $\str A$ be a structure over $\tuple{\Sigma,\bar{f}}$ and $\pi$ be
a 1-type realised in $\str A$.  Let $\{e_i\}_{i=1}^k$ be the sequence
of all elements of $\str A$ that realise $\pi$, and such that $e_i
\Lin^{\str A} e_{i+1}$ for $i\in \{1,\ldots,k-1\}$. Define $S(\str
A,\pi) = \{e_i \mid i\in \{1,\ldots, \min(k,2m+1)\}\}$ and $L(\str
A,\pi) = \{e_i \mid i\in \{\max(1,k + 1 - (2m+1)),\ldots,
k\}$. Intuitively, $S(\str A,\pi)$ consists of $\min(k,2m+1)$ smallest
(\wrt $\Lin^{\str A}$) nodes that realise 1-type $\pi$, and $L(\str
A,\pi)$ consists of $\min(k,2m+1)$ largest nodes that
realise~$\pi$. Recall that here $m$ is the number of predicates in
$\bar{f}$. If $\pi,\pi'$ are two 1-types realised in $\str A$ then we
say that a pair $\tuple{\pi,\pi'}$ is \emph{correct} \wrt structure
$\str A$ if $|S(\str A,\pi)| = 2m+1$, $|L(\str A,\pi')| = 2m+1$ and $
\max_{\str A}(S(\str A,\pi)) <^{\str A} \min_{\str A}(L(\str
A,\pi'))$. A pair $\tuple{\pi,\pi'}$ is \emph{incorrect} \wrt $\str A$
if it is not correct \wrt $\str A$.

\begin{lemma}\label{lem:silent-types-exist}
  Let $\str A$ be a structure over $\tuple{\Sigma,\bar{f}}$. Let $\pi$
  and $\pi'$ be 1-types such that the pair $\tuple{\pi,\pi'}$ is correct
  \wrt $\str A$. Then, for some distinct elements $e, e'\in \str A$
  the 2-type $\tau$ connecting $e$ to $e'$ is silent, $\tpf{\tau} =
  \pi$, $\tps{\tau} = \pi'$ and $(x \Lin y) \in \tau$.
\end{lemma}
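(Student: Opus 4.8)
The plan is to fix the two candidate sets $S = S(\str A,\pi)$ and $L = L(\str A,\pi')$ and to locate the required pair $e,e'$ inside $S\times L$ by a counting argument. First I would record what correctness buys us. Since $|S(\str A,\pi)| = |L(\str A,\pi')| = 2m+1$, both $\pi$ and $\pi'$ are realised at least $2m+1$ times; assuming $m\geq 1$ this is at least $3$, so neither $\pi$ nor $\pi'$ is royal (the case $m=0$ is trivial, as then there are no message predicates and every 2-type joining two non-royal elements is silent). Moreover $\max_{\str A}(S(\str A,\pi)) \Lin \min_{\str A}(L(\str A,\pi'))$ forces $e\Lin e'$ for every $e\in S$ and $e'\in L$; in particular any such $e,e'$ are distinct, and the connecting 2-type $\tau$ then satisfies $(x\Lin y)\in\tau$, $\tpf{\tau}=\pi$ and $\tps{\tau}=\pi'$ automatically. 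The only property left to arrange is silentness.

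Next I would reduce silentness to the absence of message edges. A 2-type $\tau$ from $e$ to $e'$ fails to be silent exactly when $\tau$ or $\tau^{-1}$ is essential, that is, when $\tau$ or $\tau^{-1}$ is a message type or when one of $\tpf{\tau},\tps{\tau}$ is royal. As both $\pi$ and $\pi'$ are non-royal, the royalty alternatives are vacuous, so $\tau$ is silent if and only if there is no message edge from $e$ to $e'$ and none from $e'$ to $e$. Thus it suffices to find a pair $(e,e')\in S\times L$ joined by no message edge in either direction.

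For the counting I use the single structural property that every element emits at most $m$ message edges: for each message predicate $f_h$ the existential conjunct of the normal form supplies an element with exactly one $f_h$-witness distinct from itself, and a message edge must join two distinct elements. Call a pair $(e,e')\in S\times L$ \emph{bad} if some message edge joins $e$ and $e'$. Each $e\in S$ emits at most $m$ message edges, so at most $(2m+1)m$ pairs carry a forward edge $e\to e'$; symmetrically, since each $e'\in L$ emits at most $m$ message edges, at most $(2m+1)m$ pairs carry a backward edge $e'\to e$. Hence there are at most $2m(2m+1)$ bad pairs, which is strictly less than $|S\times L| = (2m+1)^2$ because $2m < 2m+1$. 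Therefore some pair $(e,e')$ is not bad, and the 2-type connecting this $e$ to this $e'$ is silent with all the required properties.

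The main obstacle is not the arithmetic but getting the reduction right: one must observe that silentness is spoiled by a message edge in \emph{either} direction, and therefore count the out-edges emanating from $S$ and from $L$ separately; bounding only one direction would leave the estimate open. The slack $2m(2m+1) < (2m+1)^2$ is precisely what pins down the choice of the constant $2m+1$ in the definitions of $S(\str A,\pi)$ and $L(\str A,\pi')$.
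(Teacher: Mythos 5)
Your proof is correct and takes essentially the same route as the paper's: both fix the $2m+1$ smallest realisations of $\pi$ and the $2m+1$ largest realisations of $\pi'$ guaranteed by correctness, and observe that since each element emits at most $m$ message edges, at most $2m(2m+1) < (2m+1)^2$ of the $(2m+1)^2$ ordered pairs can be spoiled by a message edge in either direction, so some pair is joined by a silent 2-type with the required order and 1-types. Your write-up is in fact slightly more explicit than the paper's (which asserts non-royalty of $\pi,\pi'$ without the $2m+1\geq 3$ justification and folds both edge directions into the single bound $2(2m+1)m$); the only loose point is your $m=0$ aside, where correctness alone does not rule out royalty of $\pi$ or $\pi'$, but this corner case never arises in the lemma's application, since there the signature has already been extended by two message predicates.
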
 
\begin{proof}
  Since the pair $\tuple{\pi,\pi'}$ is correct, there exist two sequences
  of $\str A$'s elements $\{e_i\}^{2m+1}_{i=1}$ and $\{e'_i\}_{i
    =1}^{2m+1}$ such that $\max_{\str A}(\{e_i\}_{i=1}^{2m+1})
  \Lin^{\str A} \min_{\str A}(\{e'_i\}_{i=1}^{2m+1})$, $\tp{A}(e_i) =
  \pi $ and $\tp{A}(e'_i) = \pi'$ for $i\in{1,\ldots,(2m+1)}$.

  Consider the substructure of $\str A$ generated by (directed) edges
  connecting elements of $\{e_i\}_{i=1}^{2m+1}$ to elements of
  $\{e'_i\}_{i=1}^{2m+1}$. The number of these edges is $(2m+1)^2$.
  Since each element needs at most $m$ witnesses (and none of the
  types $\pi, \pi'$ is royal), there are at most $2(2m+1)m$ edges
  among them that are labelled with essential types.  Since $2(2m+1)m <
  (2m+1)^2$, at least one of these edges is labelled with a~silent
  type. Take such an edge. It connects some element $e\in
  \{e_i\}_{i=1}^{2m+1}$ to some $e'\in \{e'_i\}_{i=1}^{2m+1}$. Let
  $\tau = \tp{\str A}(e,e')$. Clearly, $\tau$ is a silent 2-type,
  $\tpf{\tau} = \pi$, $\tps{\tau} = \pi'$ and $(x \Lin y) \in \tau$.
\end{proof}

In every finite structure, which we are dealing with, the largest
(\wrt $\Lin$) node is always present. Thus predicate $\Succ$ cannot be
among message predicates of any normal structure as it would
violate Condition~\ref{def:normal-third} of
Definition~\ref{def:normal-structure}. For similar reasons $\Lin$
cannot be one of these predicates. Therefore, to satisfy
Conditions~\ref{def:normal-third} and~\ref{def:normal-fourth}, we need
the following lemma.
\begin{lemma}\label{lem:normalstr4}
  Let $\str A$ be a structure over a classified signature
  $\tuple{\Sigma,\bar{f}}$. By interpreting two additional binary
  message-predicates we can expand $\str A$ to a~structure $\str B$
  such that for every $e_1,e_2\in \str B$ if $\Succ^{\str B}(e_1,e_2)$
  then $\tp{\str B}(e_1,e_2)$ is an invertible essential type.
\end{lemma}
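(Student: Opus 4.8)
The plan is to add two new binary message predicates — call them $g_1$ and $g_2$ — and use them to ``paint'' the successor edges so that every successor pair carries a message type in both directions, thereby becoming invertible and essential. The issue with the successor relation $\Succ$ itself (and with $\Lin$) is that it cannot be a message predicate, because in a finite structure the greatest element has no $\Succ$-successor while message predicates are required (by normality) to be functional with exactly one outgoing edge. So instead of using $\Succ$ directly, I introduce auxiliary predicates that replicate the successor information but are arranged to avoid the ``missing edge at the top'' problem.

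Concretely, I would proceed as follows. First, for each pair $e_1, e_2$ with $\Succ^{\str A}(e_1,e_2)$, put $g_1(e_1,e_2)$ into $\tp{\str B}(e_1,e_2)$; this makes every non-greatest element emit exactly one $g_1$-edge (to its successor), which is precisely the functionality required of a message predicate, since the greatest element is the only one lacking a successor and it simply emits no $g_1$-edge. Symmetrically, I use $g_2$ for the reverse direction: whenever $\Succ^{\str A}(e_1,e_2)$, put $g_2(e_2,e_1)$ into $\tp{\str B}(e_2,e_1)$, so that every non-least element emits exactly one $g_2$-edge (to its predecessor). The key point is that with this assignment, for a successor pair $\tuple{e_1,e_2}$ the forward 2-type $\tp{\str B}(e_1,e_2)$ contains $g_1(x,y)$, making it a message type, and its inverse $\tp{\str B}(e_2,e_1)$ contains $g_2(x,y)$, making it a message type as well. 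Hence $\tp{\str B}(e_1,e_2)$ is an \emph{invertible} message type, and in particular an invertible essential type, which is exactly Condition~\ref{def:normal-fourth}.

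I would then verify that this expansion is well-defined as a structure: the only edges whose labels change are the successor edges and their inverses, and on each such pair we have added mutually inverse positive atoms ($g_1$ forward matched by $g_1^{-1}$ backward, and similarly for $g_2$), so the consistency requirement that $\tuple{e_2,e_1}$ carry $\mu^{-1}$ whenever $\tuple{e_1,e_2}$ carries $\mu$ is preserved. On all non-successor pairs I set both $g_1$ and $g_2$ to false, so no spurious message edges are created and functionality is not violated. The main obstacle — really the only delicate point — is confirming the functionality constraint for the new predicates at the boundary elements: the greatest element must emit no $g_1$-edge and the least element must emit no $g_2$-edge, which holds precisely because those elements lack a successor and a predecessor respectively. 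Everything else is bookkeeping, and the claimed property of $\str B$ then follows directly from the definitions of message type and invertible essential type.
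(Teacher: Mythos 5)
Your construction does make every successor pair carry an invertible message type, so it satisfies the letter of the statement; but it omits the one idea the paper's proof actually hinges on, and the omission hits exactly the point you yourself call ``the only delicate point''. You correctly observe at the outset that $\Succ$ cannot serve as a message predicate because the greatest element emits no $\Succ$-edge, while Condition~\ref{def:normal-third} of Definition~\ref{def:normal-structure} demands exactly one outgoing edge per message predicate from \emph{every} element. But your $g_1$ is, as a relation, identical to $\Succ$ (and $g_2$ to its converse), so it inherits exactly the same defect: the greatest element emits no $g_1$-edge and the least element emits no $g_2$-edge. Your closing claim that the functionality constraint ``holds precisely because those elements lack a successor and a predecessor'' asserts, in effect, that the constraint is satisfied by being violated at these two elements --- which contradicts your own (correct) diagnosis of why $\Succ$ itself was unusable.

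This matters because the lemma is not free-standing: the paper introduces it expressly to secure Conditions~\ref{def:normal-third} and~\ref{def:normal-fourth}, and in the proof of Lemma~\ref{lem:normalstr} the expanded structure must end up normal with respect to the \emph{enlarged} classified signature, i.e., Condition~\ref{def:normal-third} must hold for the new predicates too (the normal form of $\varphi$ guarantees it only for the original ones). With your $g_1,g_2$ the expanded structure can never be normal, so the downstream argument collapses. The paper's proof fixes this with a small twist you are missing: it closes the successor chain into a cycle, setting $f_{\mathit{forth}}^{\str B} = \{\tuple{e_i,e_{i+1}}\mid i<k\}\cup\{\tuple{e_k,e_1}\}$ and $f_{\mathit{back}}^{\str B}$ symmetrically, so that the greatest element sends its $f_{\mathit{forth}}$-edge back to the least element and the least element sends its $f_{\mathit{back}}$-edge to the greatest one. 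Then every element emits exactly one edge of each new predicate, successor pairs are still joined by invertible essential types, and the extra wrap-around 2-type between $e_k$ and $e_1$ is itself an invertible message type, so it causes no harm. Adding these two wrap-around edges to your construction repairs it.
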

\begin{proof}
  Let $e_1,\ldots,e_k$ be all nodes of $\str A$ sequenced in order
  $<^{\str A}$.  Let $f_{\mathit{back}}$ and $f_{\mathit{forth}}$ be
  two fresh binary predicates. We add them to $\bar{f}$ and expand the
  structure $\str A$ by putting $f^{\str B}_{\mathit{forth}} =
  \{\tuple{e_i,e_{i+1}}\mid i<k\} \cup \{\tuple{e_k,e_1}\}$ and
  $f^{\str B}_{\mathit{back}} = \{\tuple{e_{i+1},e_{i}}\mid i<k\} \cup
  \{\tuple{e_1,e_k}\}$. Then for all $i\in \underline{k-1}$ we have
  $\tp{\str B}(e_i,e_{i+1})$ is an invertible essential type.
\end{proof}

\begin{lemma}\label{lem:normalstr}
  Let $\varphi$ be a \CTwoOneLinS formula in normal form, over a
  classified signature $\tuple{\Sigma,\bar{f}}$. If $\varphi$ is finitely
  satisfiable then there exists a signature
  $\tuple{\Sigma',\bar{f'}}$ such that $\Sigma \subseteq \Sigma'$ and
  $\bar{f}\subseteq\bar{f'}$ and a finite normal
  $\tuple{\Sigma',\bar{f'}}$-structure $\str B$ such that $\str B
  \models \varphi$. Moreover, $|\Sigma'|$ is polynomial in $|\Sigma|$
  and $|\bar{f'}|=|\bar{f}|+2$.
\end{lemma}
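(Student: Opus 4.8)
The plan is to start from any finite model $\str A\models\varphi$ and massage it into a normal model $\str B$ over a slightly larger signature, treating the four clauses of Definition~\ref{def:normal-structure} almost independently. Three of them are cheap. For the self-loop part of Condition~\ref{def:normal-third} I first note that I may assume no message predicate has a self-loop: setting $f_h(e,e)$ to false everywhere preserves $\varphi$, since the universal conjunct $\alpha(x,y)\vee x{=}y$ is satisfied by $x{=}y$ on the diagonal and the existential conjuncts ask for a witness $y\neq x$; after this the normal form already gives exactly one $f_h$-successor per element. Condition~\ref{def:normal-first} I arrange by marking the $\Lin$-least and $\Lin$-greatest elements with fresh unary predicates, which makes their $1$-types unique without touching $\varphi$. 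Condition~\ref{def:normal-fourth} is delivered verbatim by Lemma~\ref{lem:normalstr4}, whose two fresh binary predicates $f_{\mathit{back}},f_{\mathit{forth}}$ account for the ``$+2$'' in $|\bar{f'}|=|\bar f|+2$; I apply it last, after the order of $\str B$ is fixed, and observe that since they form a fixpoint-free cyclic permutation they add no self-loops and preserve every $1$-type's royalty.

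The real work is Condition~\ref{def:normal-second}, which I secure by a \emph{king/court} blow-up: keep every royal element single, and replace every non-royal element by $k$ consecutive copies (for $k$ polynomial in $m$), ordering the resulting blocks as in $\str A$. This makes every non-royal $1$-type abundant, and because the blocks inherit the $\Lin$-order of $\str A$, any pair of non-royal $1$-types $\tuple{\pi,\pi'}$ realised in increasing order somewhere in $\str A$ becomes \emph{correct} (in the sense of the paragraph preceding Lemma~\ref{lem:silent-types-exist}): the $2m{+}1$ smallest $\pi$-copies lie in the lowest $\pi$-block and the $2m{+}1$ largest $\pi'$-copies in the highest $\pi'$-block, and these blocks are $\Lin$-separated. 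Lemma~\ref{lem:silent-types-exist} then supplies, for each such pair, a silent $2$-type with the right projections and $(x\Lin y)\in\tau$, which I use to connect distinct copies. Witnesses are assigned \emph{within the copy layer} (copy $i$ of $e$ takes copy $i$ of each original witness), so every element keeps exactly $m$ message edges and Condition~\ref{def:normal-third} holds, while all $2$-types to kings are inherited from $\str A$ and are essential but harmless.

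The main obstacle is that the blow-up can fail for \emph{forced} pairs: pairs $\tuple{\pi,\pi'}$ for which $\alpha$, together with the functional $\exists^{=1}$-witnesses, forces every increasing edge from a $\pi$-element to a $\pi'$-element to carry a message predicate. For such a pair no silent connecting $2$-type exists (equivalently, by the counting in Lemma~\ref{lem:silent-types-exist}, it can never be correct in any model), so a naive blow-up would both violate the witness bound and break Condition~\ref{def:normal-second}. I expect this to be the hard point, and I would resolve it as follows. A short argument using $\exists^{=1}$ shows that for a forced pair there is a \emph{unique} $\pi'$-element lying $\Lin$-above any $\pi$-element: if $a\Lin b$ with $a$ of type $\pi$ and $b$ of type $\pi'$ then $b$ is forced to be $a$'s unique $f$-successor, and two distinct such $b$'s yield a cyclic contradiction $a_1\Lin b_1\Lin a_2\Lin b_2\Lin b_1$. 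I mark all these boundary elements as kings; since for each target $1$-type I must royally separate at most $|\Pi(\Sigma)|$ elements, a binary-counter scheme of $O(|\Sigma|)$ fresh unary predicates suffices, keeping $|\Sigma'|$ polynomial in $|\Sigma|$. After the marking, no forced pair is realised increasingly between two non-royal elements, so Condition~\ref{def:normal-second} reduces to the correct, hence silently connectable, case above.

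Finally I would verify that $\str B\models\varphi$ — the fresh predicates do not occur in $\varphi$, each element still has exactly one witness per conjunct, and $\alpha$ holds on every pair by construction — and collect the bounds: exactly two new binary predicates from Lemma~\ref{lem:normalstr4} and polynomially many new unary predicates, giving $|\bar{f'}|=|\bar f|+2$ and $|\Sigma'|$ polynomial in $|\Sigma|$.
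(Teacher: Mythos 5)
Your handling of Conditions~\ref{def:normal-first}, \ref{def:normal-third} and~\ref{def:normal-fourth} of Definition~\ref{def:normal-structure} is fine (the removal of diagonal message edges is even more careful than the paper, which just appeals to the normal form), and you correctly identify forced pairs as the crux. The genuine gap is the blow-up itself. Because you enlarge the universe, you must \emph{define} a 2-type between every new copy and every other element, and each such 2-type must entail $\alpha$ in both directions while keeping every element's out-degree equal to one for each message predicate; your royal marking does not make this possible, because the blow-up \emph{re-creates} exactly the configurations the marking was supposed to eliminate. Concretely, let $m=1$ and let $\alpha$ contain the clause $\left(x \Lin y \wedge \pi(x)\wedge\pi(y)\right)\rightarrow f_1(x,y)$, and let $\str A$ consist of $a_1 \Lin a_2$, both of 1-type $\pi$, with $f_1$ holding in both directions; this is a model of $\varphi$, and $\pi$ is non-royal. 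Your marking turns $a_2$ into a king (it is a $\pi$-element above a $\pi$-element) but leaves $a_1$ non-royal, so $a_1$ is replaced by $k$ consecutive copies. Any two of these copies, and any copy together with $a_2$, must be joined by an increasing 2-type whose both projections satisfy $\pi$ as a formula, hence by a 2-type containing $f_1(x,y)$; so the lowest copy emits at least $k$ edges labelled $f_1$, contradicting $\forall x\exists^{=1}y.(f_1(x,y)\wedge x\neq y)$. The same failure occurs for distinct types when every $\alpha$-consistent increasing $(\pi,\pi')$-type is an \emph{invertible} message type (\eg $\alpha$ forces $f_1(x,y)\wedge f_2(y,x)$): every copy of a non-royal $\pi$-element lying below a $\pi'$-king forces that king to emit one more $f_2$-edge. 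Repairing this requires keeping royal (\ie not duplicating) the \emph{source-side} elements as well, and your sketch gives no argument that such a process terminates in a construction that still models $\varphi$. Two smaller points: the uniqueness claim is false --- with $m$ message predicates a $\pi$-element can have up to $m$ distinct $\pi'$-elements above it, one per predicate, and the displayed cycle $a_1\Lin b_1\Lin a_2\Lin b_2\Lin b_1$ is not forced by anything (``at most $m$'' would suffice for your counting, but the argument needs to be redone); and Lemma~\ref{lem:silent-types-exist} cannot \emph{supply} the connecting types, since it finds silent edges inside an already completed structure --- what you need there is simply that non-forcedness guarantees the existence of an $\alpha$-consistent silent type.

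The paper's proof avoids all of these obligations by never touching the universe or the binary relations: it only interprets fresh \emph{unary} predicates on existing elements of the given model. Its algorithm marks, for every \emph{incorrect} pair $\tuple{\pi,\pi'}$, both the (up to) $2m{+}1$ smallest $\pi$-elements and the (up to) $2m{+}1$ largest $\pi'$-elements as kings --- a two-sided marking, where yours is one-sided --- and then observes that every pair of 1-types still realised increasingly by non-royal elements must be correct, so Condition~\ref{def:normal-second}, which is purely existential, follows from the pigeonhole argument of Lemma~\ref{lem:silent-types-exist} applied to the model itself: among the $(2m+1)^2$ edges between the chosen representatives at most $2(2m+1)m$ are essential, so a silent increasing one already exists. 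Since $\varphi$ does not mention the fresh predicates, $\str B\models\varphi$ is immediate, and no new 2-type ever has to be chosen, so no $\alpha$-consistency or counting issue can arise. In short, the paper makes scarce configurations royal instead of trying to make types abundant; your proposal tries to make them abundant, which is impossible precisely for the forced pairs you yourself single out.
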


\begin{proof}
  Let $\str A$ be a model of $\varphi$. By
  Lemma~\ref{lem:normalstr4} we may assume that
  Condition~\ref{def:normal-fourth} of
  Definition~\ref{def:normal-structure} is satisfied.
  Condition~\ref{def:normal-third} of this definition is also true, as
  $\varphi$ is in normal form.  The following algorithm constructs
  $\str B$ from $\str A$ in a sequence of steps numbered by a natural
  number $j$. In each step the algorithm interprets fresh unary
  predicates and does not change the interpretation of binary
  predicates thus not violating Conditions~\ref{def:normal-third}
  and~\ref{def:normal-fourth}.

  Initially $j=0$, structure $\str B_0$ is $\str A$ where the smallest
  and the largest node is turned into king by interpreting two fresh
  unary predicates, and $\mathit{Pairs} = \{\tuple{\pi,\pi'} |
  \pi$ and $\pi'$ are non-royal 1-types realised in $\str
    B_0\}$.
\begin{enumerate}
\item while there exists a pair $\tuple{\pi,\pi'} \in \mathit{Pairs}$
  which is incorrect \wrt $\str B_j$, execute steps
  (\ref{coloring:second})--(\ref{coloring:third})\label{coloring:first}
\item construct structure $B_{j+1}$ by labelling all nodes of $B_j$
  from $S(\str B_j,\pi) \cup L(\str B_j,\pi')$ by fresh unary
  predicates to turn them into kings of
  $B_{j+1}$, \label{coloring:second}
\item remove pair $\tuple{\pi,\pi'}$ from $\mathit{Pairs}$, increment
  $j$, \label{coloring:third}
\item put $\str B = \str B_j$ and terminate the algorithm
\end{enumerate}
Clearly, the above algorithm terminates after at most
$|\mathit{Pairs}|$ steps, \ie after $O(2^{2|\Sigma|})$ steps.  At each
step the algorithm creates at most $2(2m+1)$ new kings, so the total
number of kings created by the algorithm is $2(2m+1)|\mathit{Pairs}|$,
which is $O(2(2m+1)2^{2|\Sigma|})$. The number of fresh unary
predicates required to distinguish that number of kings is logarithmic
\wrt that latter number. Therefore $|\Sigma'|$ is polynomial in
$|\Sigma|$. Since $\str A$ models $\varphi$, also $\str B$ models~$\varphi$.

We now show that $\str B$ is normal. Condition~\ref{def:normal-first}
in Definition~\ref{def:normal-structure} is satisfied, because it is
satisfied by $\str B_0$ and all kings of $\str B_0$ are left untouched
during the construction of $\str B$ from $\str B_0$. 

In order to show that Condition~\ref{def:normal-second} in
Definition~\ref{def:normal-structure} is also satisfied, take non-royal
elements $e_1, e_2 \in \str B$ such that $e_1 \Lin^{\str B} e_2$. Let
$\pi = \tp{\str B}(e_1)$ and $\pi' = \tp{\str B}(e_2)$. Since $\pi$
and~$\pi'$ are non-royal 1-types realised in $\str B$, they are also
non-royal 1-types realised in $\str A$. Thus, at the beginning of the
algorithm we have $\tuple{\pi,\pi'} \in \mathit{Pairs}$. Moreover,
tuple $\tuple{\pi,\pi'}$ was never selected at
step~(\ref{coloring:first}) of the algorithm. Therefore pair
$\tuple{\pi,\pi'}$ is  correct \wrt $\str B$. By
Lemma~\ref{lem:silent-types-exist} we obtain the desired conclusion.
Thus $\str B$ is a normal structure.
\end{proof}

\subsection{Star types}
\label{sec:startypes}
Given a structure $\str A$ over a signature $\tuple{\Sigma,\bar{f}}$
and an element $a\in \str A$, we want to capture essential 2-types
emitted from $a$ to other elements of $\str A$. For this reason we
introduce \emph{star types}. Intuitively a~star type of an element $a$
in a~structure describes a~small neighbourhood of $a$ and contains
enough information to check whether $a$ has the right number of
witnesses.

\begin{definition}[Star type in $\str A$]\label{def:startype}
  Let $\str A$ be a normal structure over $\tuple{\Sigma,\bar{f}}$,
  and let $a$ be an element of $\str A$.  Let $\Kings = \{\kappa \mid
  \kappa \text{ is a royal type in } \str A \}$.  The \emph{star type} of
  $a$ in $\str A$, denoted $\stp{\str A}(a)$ is a pair
  $\sigma=\tuple{\pi, \mathcal{T}}$ where $\pi=\tpu(a)$ and
  $\mathcal{T}$ is the set of essential types originating in $a$:
  \[\mathcal{T} = \{\mu\in \tau(\Kings,\Sigma, \bar{f})\mid
  \text{$\tpu(a,b)=\mu$ for some $b\in \str A$}\}.\]
  We denote 
  the type $\pi$ by $\pi(\sigma)$.  We say that a~2-type $\mu$
  \emph{occurs} in $\sigma$, written $\mu \in \sigma$, if $\mu \in
  \mathcal{T}$. We write $\sigma - \mu$ for the star-type $\sigma' =
  \tuple{\pi,\mathcal{T}\setminus \{\mu\}}$. When $S$ is a set of
  2-types we write $\sigma \setminus S$ to denote the star type
  $\tuple{\pi,\mathcal{T}\setminus S}$.
\end{definition}

Observe that in the definition above $\sigma$ satisfies the conditions
\begin{enumerate}
\item \label{eq:tp0} for all $\mu_1,\mu_2 \in \sigma$ if $f(x,y)\in \mu_1$ and
  $f(x,y)\in \mu_2$ for some $f\in \bar{f}$ then $\mu_1 =
  \mu_2$, 
\item \label{eq:tp1} $\mu \in \sigma$ implies $\tpf{\mu}=\pi$ for all
  $\mu \in \tau(\Kings,\Sigma, \bar{f})$,
\item \label{eq:tp2} $|\{\mu\in\sigma\mid \otp{2}(\mu)=\kappa\}| = 1 $
  for all $\kappa \in \Kings$ such that $\kappa\neq \pi$,
\item \label{eq:tp3} $|\{\mu\in\sigma\mid \otp{2}(\mu)=\pi\}| = 0 $ if
  $\pi \in \Kings$.
\end{enumerate}
The first of these conditions is obvious, as normal structures
emit precisely one edge $\tau$ with $f(x,y)\in \tau$ for $f\in
\bar{f}$. The second one says that all 2-types originating in $a$ have
the same 1-type of the origin, namely the 1-type of $a$. The third one
says that for all kings $k$ (in $\str A$) the element $a$ is connected
with $k$ by exactly one 2-type (provided that $k\neq a$). The last
condition says that if $a$ is a~king then it is not connected with
itself by any 2-type (recall that 2-types connect different elements).

\begin{definition}\label{def:startype2}        
  A \emph{star type} over the set of 2-types
  $\tau(\Kings,\Sigma,\bar{f})$ is any pair of the form $\tuple{\pi,
  \mathcal{T}}$ satisfying Conditions~\ref{eq:tp0}--\ref{eq:tp3}
above.  A structure $\str A$ is said to realise a star type $\sigma$
if $\stp{\str A}(a) = \sigma$ for some $a\in \str A$.
\end{definition}

For a~given set of star types $\ST$, by $\pi(\ST)$ we denote the set
of 1-types $\{\pi(\sigma) \mid \sigma\in \ST\}$.  The set of 2-types
occurring in star types from $\ST$, that is the set $\{\mu \mid
\exists{\sigma\in \ST} \;\;\mu \in \sigma\}$ is denoted by
$\tau(\ST)$, and the set $\{\tuple{\pi,
  \mathcal{T'}}\mid\text{$\tuple{\pi, \mathcal{T}}\in\ST$ for some
  $\mathcal{T}$ satisfying $\mathcal{T'}\subseteq \mathcal{T}$}\}$ by
$\Partial{\ST}$.  Elements of $\Partial{\ST}$ are called
\emph{partial star types}. A partial star type is said to be
\emph{empty} if it is of the form $\tuple{\pi, \emptyset}$.

\subsection{Frames}
We now introduce finite and small structures called frames. Frames
contain, among others, the information about all 2-types and all star
types that may occur in a~structure. This is enough to check whether
the universal subformula $\alpha$ of a~formula in normal
form~(\ref{nf}) is true and if all elements have the right number of
witnesses. Intuitively, a~correctly guessed frame confirms that
a~formula is locally consistent (see Definition~\ref{def:models}).

\begin{definition}[Frame]\label{def:frame}
  Let $\tuple{\Sigma, \bar{f}}$ be a classified signature, $\Kings$ be
  a set of 1-types over $\Sigma$, let $\ST$ be a set of star types
  over $\tau(\Kings,\Sigma,\bar{f})$ and let $\Xi$ be a set of silent
  2-types over $\tuple{\Sigma,\bar{f}}$.  The tuple
  $\tuple{\Kings,\ST,\Xi,\Sigma,\bar{f}}$ is called \emph{a frame} if
  the following conditions are satisfied
  \begin{enumerate}
  \item \label{frame:invertible} for each 2-type $\tau \in
    \tau(\ST)\cup \Xi$ if $\Succ(x,y)\in \tau$ or $\Succ(y,x)\in \tau$
    then $\tau$ is an invertible essential type,
  \item \label{frame:first} there exists exactly one star type
    $\sigma_\mathit{first}\in \ST$ such that for all $\tau\in
    \sigma_\mathit{first}$ we have $\Succ(y,x)\not\in\tau$,
  \item \label{frame:last} there exists exactly one star type
    $\sigma_\mathit{last}\in \ST$ such that for all $\tau\in
    \sigma_\mathit{last}$ we have $\Succ(x,y)\not\in\tau$,
  \item \label{frame:4} for each $\kappa \in \Kings$ there exists
    exactly one $\sigma \in \ST$ such that
    $\pi(\sigma)=\kappa$, and
  \item \label{frame:44} for each star type $\sigma \in \ST$
    and each 2-type $\mu$, if $\mu \in \sigma$ then
    $\otp{2}(\mu) \in \pi(\ST)$. 
  \end{enumerate}
\end{definition}

Frames are intended to describe local configurations in normal
structures $\str A$.  The set $\Kings$ contains all royal 1-types of
$\str A$, $\ST$ contains all star types of $\str A$ and the set $\Xi$
contains all silent 2-types realised in $\str
A$. Condition~\ref{frame:invertible} says that every node in $\str A$
is connected to its successor and predecessor by invertible essential
types. Conditions~\ref{frame:first} and~\ref{frame:last} say that
there are unique star types for the first and the last node in $\str
A$.  Note that Conditions~\ref{frame:invertible}--\ref{frame:last} are
true in every normal structure.  Condition~\ref{frame:4} says that
each king has exactly one star type.  Condition~\ref{frame:44} ensures
that if a neighbour of a node in a structure has some 1-type $\pi$,
then there exists a star type $\sigma\in \ST$ such that $\pi\in
\pi(\ST)$. The last two conditions hold in every relational
structure.

Intuitively, we want to check finite satisfiability of a~\CTwo formula
$\varphi$ by guessing a~right frame. ``Right'' means here that two
conditions must be satisfied. First, the frame should be locally
consistent with $\varphi$. This means that every 2-type occurring in
the frame entails the subformulas of $\varphi$ of the form $\forall
x\forall y\ldots$, and that the number of witnesses in every star type
is correct. This is formalised in the following definition.  Second,
the frame should be globally consistent in the sense that there exists
a structure that fits this frame -- this is formalised in
Definition ~\ref{def:fits}.

\begin{definition}[$\Frame \models \varphi$]\label{def:models}
  Consider a frame  $\Frame = \tuple{\Kings,\ST,\Xi,\Sigma,\bar{f}}$ and
   a \CTwoOneLinS formula $\varphi$  in normal form~(\ref{nf})
  over  $\tuple{\Sigma, \bar{f}}$. We say that $\Frame$
  \emph{satisfies} $\varphi$, in symbols $\Frame\models \varphi$, if
  \begin{itemize}
  \item for each 2-type $\mu\in \Xi\cup \tau(\ST)$, the formula
    $\alpha$ is a consequence of $\mu$ and of $\mu^{-1}$, that is
    $\models \mu \rightarrow \alpha$ and $\models \mu^{-1} \rightarrow
    \alpha$, where $\mu$ is seen as conjunction of literals, and
  \item for each $\sigma\in \ST$ and $h\in \underline{m}$ we have
$|\{\mu\in\sigma\mid f_h(x,y)\in \mu\}| = 1$.
  \end{itemize}
\end{definition}

\begin{definition}\label{def:fits}
  Let $\tuple{\Kings,\ST,\Xi,\Sigma,\bar{f},c}$ be a frame, and $\str
  A$ a structure over $\tuple{\Sigma, \bar{f}}$. We say that $\str A$
  \emph{fits the frame} $\Frame$ if
  \begin{itemize}
  \item the set of royal 1-types realised in structure $\str A$ is
    $\Kings$, and
  \item the set of all silent types realised in $\str A$ is a
    subset of $\Xi$, and
  \item the set of star types of $\str A$ is a subset of
    $\ST$, in symbols $\stp{\str A}(\str A)
    \subseteq \ST$.
  \end{itemize}
\end{definition}

The following proposition reduces the finite satisfiability problem of
$\CTwoOneLinS$ to the problem of existence of a~structure in $\ClassO$
that fits a~given frame. This is the reduction that splits the problem
into local and global consistency.

\begin{proposition}\label{prop:red}
  Let $\varphi$ be a \CTwoOneLinS formula in normal form over a
  classified signature $\tuple{\Sigma, \overline{f}}$, where
  $\{\Lin,\Succ\} \subseteq \Sigma$. Let $\str A$ be a structure in
  $\ClassO$.
  \begin{enumerate}
  \item If $\str A$ is normal and $\str A\models \varphi$ then there
    exists a frame $\Frame$, such that $\str A$ fits $\Frame$ and
    $\Frame\models \varphi$.
  \item If there exists a frame $\Frame$ such that $\str A$ fits
    $\Frame$ and $\Frame\models \varphi$ then $\str A\models
    \varphi$.
  \end{enumerate}
\end{proposition}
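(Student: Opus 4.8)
The plan is to read off the ``canonical frame'' from $\str A$ and then check the two notions of consistency against one another, handling the universal conjunct of~(\ref{nf}) through $2$-types and the counting conjuncts through star types. For the forward implication I would set $\Frame=\tuple{\Kings,\ST,\Xi,\Sigma,\bar f}$, where $\Kings$ is the set of royal $1$-types realised in $\str A$, $\ST=\{\stp{\str A}(a)\mid a\in\str A\}$, and $\Xi$ is the set of silent $2$-types realised in $\str A$. With these choices the three clauses of Definition~\ref{def:fits} hold by construction (each is an equality, hence in particular an inclusion), so the work is to verify that $\Frame$ is a frame in the sense of Definition~\ref{def:frame} and that $\Frame\models\varphi$ in the sense of Definition~\ref{def:models}.

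That $\Frame$ is a frame I would check clause by clause, drawing on normality of $\str A$. Condition~\ref{frame:invertible} follows from Condition~\ref{def:normal-fourth} of Definition~\ref{def:normal-structure}, since any realised $2$-type mentioning $\Succ$ is the type of a successor pair and hence invertible essential. For Conditions~\ref{frame:first} and~\ref{frame:last} I would take $\sigma_{\mathit{first}}$ and $\sigma_{\mathit{last}}$ to be the star types of the $\Lin^{\str A}$-least and $\Lin^{\str A}$-greatest elements; uniqueness holds because every other element has a predecessor (resp.\ successor) whose connecting $2$-type is invertible essential by Condition~\ref{def:normal-fourth}, hence recorded in the star type and carrying $\Succ(y,x)$ (resp.\ $\Succ(x,y)$). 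Condition~\ref{frame:4} is immediate, a royal $1$-type having a single realisation and therefore a single star type, and Condition~\ref{frame:44} holds because the second $1$-type of any emitted $2$-type is realised in $\str A$. Finally $\Frame\models\varphi$ splits into its two clauses: the $\alpha$-clause holds since $\str A\models\forall x\forall y.(\alpha\vee x=y)$ forces every realised $2$-type $\mu$, and its inverse (which is also realised), to entail $\alpha$; and the counting clause holds since the conjunct $\forall x\exists^{=1}y.(f_h(x,y)\wedge x\neq y)$ supplies each $a$ with exactly one off-diagonal $f_h$-witness, whose $2$-type is the unique message type of $\stp{\str A}(a)$ containing $f_h$.

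For the backward implication I would assume $\str A$ fits $\Frame$ and $\Frame\models\varphi$, and check the two conjuncts of~(\ref{nf}) pointwise. For the universal conjunct, take distinct $e_1,e_2$ and put $\mu=\tp{\str A}(e_1,e_2)$: if $\mu$ is silent then $\mu\in\Xi$; otherwise $\mu$ or $\mu^{-1}$ is essential, hence occurs in $\stp{\str A}(e_1)$ or $\stp{\str A}(e_2)$ and so lies in $\tau(\ST)$. In every case the first clause of Definition~\ref{def:models}, which demands $\models\mu\rightarrow\alpha$ \emph{and} $\models\mu^{-1}\rightarrow\alpha$ for each type of $\Xi\cup\tau(\ST)$, yields $\models\mu\rightarrow\alpha$, whence $\str A\models\alpha(e_1,e_2)$. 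For a counting conjunct I would fix $a$, note $\sigma=\stp{\str A}(a)\in\ST$, and use the second clause of Definition~\ref{def:models} to obtain exactly one $2$-type of $\sigma$ containing $f_h(x,y)$.

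The main obstacle is precisely this last step. A star type records only the \emph{set} of essential $2$-types emitted by $a$, not their multiplicities, so a priori $a$ might emit one and the same $f_h$-type to two distinct elements, keeping $|\{\mu\in\sigma\mid f_h(x,y)\in\mu\}|=1$ yet violating $\exists^{=1}$. The gap is closed by normality: Condition~\ref{def:normal-third} of Definition~\ref{def:normal-structure} gives $a$ exactly one $f_h$-successor, which the counting clause forces to be off-diagonal (a loop would leave $\sigma$ with no $f_h$-type), so that single message type is realised by exactly one element distinct from $a$ and $a$ has precisely one off-diagonal $f_h$-witness. I would therefore run the backward direction under the standing assumption that $\str A$ is normal, which is also what makes the star-type clause of Definition~\ref{def:fits} well defined.
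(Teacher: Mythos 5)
Your proof is correct and follows the same route as the paper's own: in the forward direction you take the canonical frame read off from $\str A$ (royal 1-types, realised star types, realised silent 2-types), and in the backward direction you verify the two conjuncts of the normal form~(\ref{nf}) directly from Definitions~\ref{def:models} and~\ref{def:fits}. The paper dismisses the forward direction as immediate ``once the definitions are unravelled''; your clause-by-clause check (Condition~\ref{frame:invertible} from Condition~\ref{def:normal-fourth} of Definition~\ref{def:normal-structure}, uniqueness in Conditions~\ref{frame:first} and~\ref{frame:last} from the presence of invertible successor types at every non-extremal element, and so on) is exactly what that unravelling amounts to.

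Where you go beyond the paper is the multiplicity obstacle in the counting conjunct, and your concern is legitimate: the paper's proof of statement (2) passes from $|\{\mu\in\sigma\mid f_h(x,y)\in \mu\}| = 1$ to $\str A\models \forall{x}\exists^{=1}{y}.(f_h(x,y)\wedge x\neq y)$ with a bare ``thus'', and since a star type records the \emph{set} of emitted essential 2-types, this step is not valid for an arbitrary structure: the single $f_h$-type in $\sigma$ could be realised from $a$ towards two distinct elements (both non-adjacent to $a$, of the same 1-type, and with the same binary connections to $a$), or the only $f_h$-edge could be a loop at $a$. Your repair --- reading statement (2) under the standing assumption that $\str A$ is normal, using Condition~\ref{def:normal-third} of Definition~\ref{def:normal-structure} to get exactly one $f_h$-successor and the counting clause of Definition~\ref{def:models} to force it off the diagonal --- is sound, and it is faithful to the paper's conventions, since Definition~\ref{def:startype} defines star types only for normal structures, so the ``fits'' relation in statement (2) tacitly presupposes this anyway. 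One downstream caveat: Theorem~\ref{thm:main} applies statement (2) to the structure produced by Proposition~\ref{prop:sat-c2}, so under your reading one should note that the structure built in Lemmas~\ref{lem:essential-substructure-exists} and~\ref{lemma:sat-c2} emits each essential type of each star type exactly once (each ``tab'' is connected exactly once in the jigsaw construction), hence satisfies Condition~\ref{def:normal-third}; alternatively, state (2) for structures satisfying just that condition.
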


\begin{proof}
  For the proof of the first statement, assume that $\str A$ is normal
  and $\str A\models \varphi$. Let $\Kings$ be the set of royal 1-types
  realised in $\str A$, let $\ST$ be the set of star types of $\str A$
  and let $\Xi$ be the set of all silent types realised in
  $\str A$. The facts that the tuple
  $\Frame=\tuple{\Kings,\ST,\Xi,\Sigma,\bar{f},c}$ forms a frame,
  $\Frame\models \varphi$ and $\str A$ fits $\Frame$ are immediate,
  once Definitions~\ref{def:frame}, \ref{def:models} and
  \ref{def:fits} are unravelled.

  For the proof of the second statement, let $\Frame$ be a frame such
  that $\Frame\models \varphi$ and let $\str A$ be a structure such
  that $\str A$ fits $\Frame$. Since $\varphi$ is in normal form,
  it is of the form~(\ref{nf}). Let $\mu$ be any 2-type realised in
  $\str A$. Then either $\mu$ is a silent type or it occurs
  in some star type realised in $\str A$. In any case, by Definition
  \ref{def:fits} we have that $\mu\in \Xi\cup \tau(\ST)$.  By
  Definition \ref{def:models} it follows that $\models \mu \rightarrow
  \alpha$.  So $\str A\models \forall{x}\forall{y}.(\alpha \vee x=y)$.
  Since $\str A$ fits  $ \mathcal{F}$ and $\mathcal{F}\models
  \varphi$, it also follows that for each star type $\sigma$ realised
  in $\str A$ and each $h$ such that $1\leq h \leq m$ we have
  $|\{\mu\in\sigma\mid f_h(x,y)\in \mu\}| = 1$, and thus $\str
  A\models \bigwedge_{h\in \underline{m}}
  \forall{x}\exists^{=1}{y}.(f_h(x,y)\wedge x\neq y)$.  Hence $\str
  A\models \varphi$ as required.
\end{proof}

\subsection{High-level multicounter automata}
In the rest of this section we show how to decide, for a~given frame
$\Frame$, whether there exists a~structure in $\ClassO$ that fits this
frame.  This will be done by a~reduction to the emptiness problem for
multicounter automata.
 
We now introduce a syntactic extension to multicounter automata that
we call \emph{High-level MultiCounter Automata} (\textbf{HMCA}).  The
idea is to specify transitions of an automaton as programs in a
higher-level imperative language with conditionals, loops and arrays,
which leads to clearer exposition of reachability problems. A
transition in a High-Level MCA is a sequence $\Delta$ of actions;
each action in turn may update and test finite-domain variables, and
conduct conditional or loop instructions depending on results of these
tests. A transition may also increment or decrement, but not test the
value of, counters, which are the only infinite-domain variables of
the automaton.

In the following subsections we give formal syntax and semantics of
high-level multicounter automata and we prove that HMCA can be
compiled to multicounter automata. If the reader is familiar with
finite-state machines and how they handle finite information, it
should be more or less clear how to translate HMCA to MCA. In such
a~case we suggest to skip Sections~\ref{sec:hmca_semantics} and
\ref{sec:hmca_compilation} and move directly to
Section~\ref{sec:global}.

\subsubsection{Syntax of HMCA}
Formally, an HMCA is a tuple $H = \tuple{\Var, \Vect, \Type, \Delta,
  \rho_I, P_F, E}$. The set $\Var$ consists of variables $v$ to be
interpreted in the finite domain $\Type(v)$. We may think of $\Var$ as
a~declaration of finite-domain variables of the program.  The set
$\Vect$ corresponds to a~declaration of arrays, it consists of
variables $\overrightarrow{vec}$ to be interpreted as vectors of
natural numbers indexed by elements of some finite set $\mathbb{A}$,
where $\Type(\overrightarrow{vec}) = \mathbb{A}\rightarrow\Nat$. We
will refer to the index set $\mathbb{A}$ as the domain
$\mathrm{Dom}(\overrightarrow{vec})$. The $\Type$ function assigns to
every variable in $\Var\cup \Vect$ its type.  The sequence of actions
$\Delta$ is the actual program built from actions defined below.  The
starting state of $H$ is $\rho_I$, the set of accepting states is
$P_F$. The set $E$ is a subset of $\{\tuple{\vec{v},a} \mid\text{
  $\vec{v}\in\Vect,\; a\in\mathrm{Dom}(\vec{v})$}\}$, and is used in
the acceptance condition explained later.

We now define a set of actions $\alpha$ that constitute transitions in
\textbf{HMCA}. The simplest action is an \emph{assignment} of the form
$v:=\Expr$, where $v$ is a variable of some domain $\mathbb{A} =
\Type(v)$, and $\Expr$ is an expression built from variables from
$\Var$, constants from appropriate domains and operators. An
\emph{operator} is any effectively computable function, \eg $\cup$,
$\cap$, $\setminus$ are operators of domain
$(2^{\mathbb{B}})^2\rightarrow 2^{\mathbb{B}}$ for any domain
$\mathbb{B}$; function $\pi: \ST(\K, \Sigma, \bar{f}) \rightarrow
\Pi(\Sigma)$ from Definition~\ref{def:startype} is also an operator,
provided that our finite domain contains $\ST(\K, \Sigma, \bar{f})$
and $\Pi(\Sigma)$. We silently extend the $\Type$ function to
constants by letting $\Type(a) = \mathbb{A}$ if $a\in \mathbb{A}$, and
to expressions, \eg $\Type(s_1 \cup s_2) = 2^{\mathbb{B}}$ if
$\Type(s_1) =2^{\mathbb{B}}$ and $\Type(s_2) =2^{\mathbb{B}}$. We
require that assignments $v:=\Expr$ are well typed, \ie that $\Type(v)
= \Type(\Expr)$. An \emph{atomic test} is of the form $\Expr_1 =
\Expr_2$ or $\Expr_3 \in \Expr_4$, where $\Expr_1$, $\Expr_2$,
$\Expr_3$, $\Expr_4$ are expressions such that $\Type(\Expr_1) =
\Type(\Expr_2)$ and $\Type(\Expr_4) = 2^{\Type(\Expr_3)}$. A
\emph{test} is an arbitrary Boolean combination of \emph{atomic
  tests}. Notice that tests do not use counters.  A
\emph{non-deterministic assignment} action is of the form \Guess $v\in
\Expr$ \with $\Test$, where $\Type(\Expr) = 2^{\Type(v)}$ and the
variable $v$ may occur in $\Test$. A conditional action is of the form
$\IIf\ \Test\ \IThen\ \alpha^{*}\ \IElse\ \alpha'^{*}\ \IEndIf$ or
$\IIf\ \Test\ \IThen\ \alpha^{*}\ \IEndIf$. A loop action is of the
form $\IWhile\ \Test\ \IDo\ \alpha^{*}\ \IEndWhile$.  An
\emph{incrementing action} (\resp \emph{decrementing action}) is of
the form $\inc(\vec{f}[\Expr])$ (\resp $\dec(\vec{f}[\Expr])$), where
$\Expr$ evaluates to an index of the array $\vec{f}$, that is,
$\vec{f} \in \Vect$, $\Type(\vec{f}) = \mathbb{A}\rightarrow \Nat$ and
$\Type(\Expr) = \mathbb{A}$. The remaining action, \Reject, simply
rejects current computation.

\subsubsection{Semantics of HMCA}
\label{sec:hmca_semantics}
Expressions and tests are evaluated in context of variable valuations.
A \emph{variable valuation} (also called a \emph{state}) is any
function $\rho$ that assigns to every finite-domain variable $v$ a
value $\llbracket v\rrbracket_{\rho} \in \Type({v})$. The semantics of
expressions is defined inductively: $\llbracket v \rrbracket_{\rho} =
\rho(v)$ for $v\in \mathbb{V}_\mathit{fin}$, $\llbracket \Expr_1
\bowtie \Expr_2 \rrbracket_{\rho} = \llbracket
\Expr_1\rrbracket_{\rho} \bowtie \llbracket\Expr_2 \rrbracket_{\rho}$
where $\bowtie\in \{\cup, \cap, \setminus\}$, and $\llbracket
f(\Expr_1,\ldots,\Expr_k)\rrbracket_{\rho} =
f(\llbracket\Expr_1\rrbracket_{\rho},\ldots,\llbracket\Expr_k\rrbracket_{\rho})$
where $f$ is an operator. In a similar way we define semantics of
tests.

A \emph{counter valuation} is any function $\vartheta$ that assigns
(a~sequence of) natural numbers to (arrays of) counters.  A
\emph{configuration} of \textbf{HMCA} $H$ is a pair
$\tuple{\rho,\vartheta}$ where $\rho$ is an variable valuation (\ie a
state) and $\vartheta$ is a counter valuation.  Actions transform
configurations. Most actions work only on variable valuations; the
exceptions are incrementing and decrementing of counters. With the
exception of the decrementing action, the semantics of actions is
self-explanatory; $dec(c)$ decrements the counter $c$ if it is
strictly positive and otherwise (if it is~$0$) it rejects the current
computation.

A \emph{run} of an HMCA $H$ is a sequence of configurations
$\tuple{\rho_1,\vartheta_1},\ldots \tuple{\rho_k,\vartheta_k}$ such
that $\tuple{\rho_{i+1},\vartheta_{i+1}}$ is obtained after executing
transition $\Delta$ in configuration $\tuple{\rho_{i},\vartheta_{i}}$,
for all $i\in \{1,\ldots, k-1\}$. A~single transition
$\sem{\tuple{\rho_i,\vartheta_i}}{\Delta}{\tuple{\rho_{i+1},\vartheta_{i+1}}}$,
is formalised on Figures~\ref{fig:sem-first} and~\ref{fig:sem-second}.
A run is \emph{accepting}, if it starts in an initial configuration
$\tuple{\rho_I,\vartheta_0}$ with $\rho_I$ being initial state and
$\vartheta_0$ assigning $0$ to all counters, and it ends in some
configuration $\tuple{\rho_F, \vartheta_F}$ with $\rho_F$ being
a~final state and $\vartheta_F$ assigning $0$ to all counters
specified in the set $E$ of final counters: $\vartheta_F(\vec{f})(a) =
0$ for every $\tuple{f,a} \in E$. The emptiness problem for high-level
multicounter automata is the question whether a given automaton $H$
has an accepting run.

For a counter valuation $\vartheta$, $\vec{v}\in\Vect$ and a function
$f$ with domain $\Type(\vec{v})$ by $\vartheta[\vec{v}\leftarrow f]$
we mean the valuation identical to $\vartheta$, with the exception
that $\vartheta(\vec{v}) = f$. In a similar way, for a variable
valuation $\rho$, $v\in\Var$ and $a\in\Type(v)$, we define the
variable valuation $\rho[v\leftarrow a]$.

Figures~\ref{fig:sem-first} and~\ref{fig:sem-second} present the
natural semantics of actions in a formal way, separately for actions
that operate on states and counters. The semantics of assignment,
conditional and loop actions is the expected one. \Eg assignment
$v:=\Expr$ updates current state $\rho$ by assigning to $v$ the value
$\llbracket\Expr\rrbracket_{\rho}$; conditional action $\IIf\ \Test\
\IThen\ \alpha^{*}\ \IEndIf$ executes $\alpha^{*}$ provided that
$\llbracket\Test\rrbracket_{\rho}$ is true, and skips to next action
otherwise.  A non-deterministic assignment \Guess $v\in \Expr$ \with
$\Test$ assigns to $v$ an arbitrary value $e\in
\llbracket\Expr\rrbracket_{\rho}$ such that $\Test$ holds when $v$
becomes $e$, \ie when $\llbracket\Test\rrbracket_{\rho[v\leftarrow
  e]}$ is true. Let $\Type(\vec{f}) = \mathbb{A} \rightarrow \Nat$ and
$e = \llbracket\Expr\rrbracket_{\rho}$. Incrementing action
$\inc(\vec{f}[\Expr])$ transforms a configuration
$\tuple{\rho,\vartheta}$ to
$\tuple{\rho,\vartheta[\vec{f}\leftarrow\vec{f'}]}$, where $f'(a) =
\vartheta(f)(a) + 1$ for $a=e$ and $f'(a) = \vartheta(f)(a)$ for
remaining elements $a\in \mathbb{A}$. Decrementing action
$\dec(\vec{f}[\Expr])$ works similarly, but it decrements the value
$\vartheta(f)(e)$, provided that it is positive (the computation
rejects if $\vartheta(f)(e) = 0$).  Finally, action $\Reject$ rejects
the computation unconditionally.
\begin{figure}[t]
\begin{gather*}
  \infer[]{\sem{\rho_1}{v:=\Expr}{\rho_2}}{\rho_2 =
    \rho_1[v\leftarrow \llbracket \Expr\rrbracket_{\rho_1}]}\\
  \infer[]{\sem{\rho_1}{\text{\Guess $v\in \Expr$ \with
        $\Test$}}{\rho_1[v\leftarrow e]}}{e\in \llbracket\Expr
    \rrbracket_{\rho_1} & \llbracket\Test\rrbracket_{\rho_1[v\leftarrow e]} = \true}\\
  \infer[]{\sem{\rho_1}{\IIf\ \Test\ \IThen\ \alpha^{*}\ \IElse\ \alpha'^{*}\ \IEndIf}{\rho_2}}{\llbracket\Test\rrbracket_{\rho_1} = \true & \sem{\rho_1}{\alpha^{*} }{\rho_2}}\\
  \infer[]{\sem{\rho_1}{\IIf\ \Test\ \IThen\ \alpha^{*}\ \IElse\ \alpha'^{*}\ \IEndIf}{\rho_2}}{\llbracket\Test\rrbracket_{\rho_1} = \false & \sem{\rho_1}{\alpha'^{*} }{\rho_2}}\\
  \infer[]{\sem{\rho_1}{\IWhile\ \Test\ \IDo\ \alpha^{*}\
      \IEndWhile}{\rho_2}}{ \llbracket\Test\rrbracket_{\rho_1} = \true
    & \sem{\rho_1}{\alpha^{*}}{\rho} & \sem{\rho}{\IWhile\ \Test\
      \IDo\ \alpha^{*}\ \IEndWhile}{\rho_2}
  }\\
  \infer[]{\sem{\rho_1}{\IWhile\ \Test\ \IDo\ \alpha^{*}\
      \IEndWhile}{\rho_1}}{
    \llbracket\Test\rrbracket_{\rho_1} = \false}
\end{gather*}
\caption{Semantics of actions that operate on state.  For these
  actions $\alpha$ we define
  $\sem{\tuple{\rho_1,\vartheta}}{\alpha}{\tuple{\rho_2,\vartheta}}$ if
  $\sem{\rho_1}{\alpha}{\rho_2}$.}\label{fig:sem-first}
\end{figure}

\begin{figure}
\begin{gather*}
  \infer[]{\sem{\tuple{\rho_1,\vartheta_1}}{\inc(\vec{f}(\Expr))}{\tuple{\rho_2,\vartheta_2}}}{e = \llbracket\Expr\rrbracket_{\rho_1} & \vartheta_2 =
    \vartheta_1[\vec{f}(e)\leftarrow \vec{f}(e)+1]}
\mbox{~~~}
\infer[]{\sem{\tuple{\rho_1,\vartheta_1}}{\alpha;\alpha^*}{\tuple{\rho_2,\vartheta_2}}}{\sem{\tuple{\rho_1,\vartheta_1}}{\alpha}{\tuple{\rho,\vartheta}} & \sem{\tuple{\rho,\vartheta}}{\alpha^{*}}{\tuple{\rho_2,\vartheta_2}}} 
\\
 \infer[]{\sem{\tuple{\rho_1,\vartheta_1}}{\dec(\vec{f}(\Expr))}{\tuple{\rho_2,\vartheta_2}}}{e = 
    \llbracket\Expr\rrbracket_{\rho_1} &
    \vartheta_1(\vec{f})(e) > 0 & \vartheta_2 =
    \vartheta_1[\vec{f}(e)\leftarrow \vec{f}(e)-1]}
\mbox{~~~}
\infer[]{\sem{\tuple{\rho_1,\vartheta_1}}{\epsilon}{\tuple{\rho_1,\vartheta_1}}}{ }
\end{gather*}
\caption{Semantics of incrementing and decrementing actions and semantics of actions composition.}
\label{fig:sem-second}
\end{figure}

\subsubsection{Compilation of HMCA}
\label{sec:hmca_compilation}
Intuitively, a~compilation of an HMCA to an~MCA consists in hiding the
control structures and finite-domain variables (including tests for
zero on finite-domain variables) in states of the constructed
MCA. Formally, we have the following proposition.

\begin{proposition}\label{prop:compile}
  Emptiness problem for \textbf{HMCA} is reducible to emptiness
  problem of multicounter automata, and is therefore decidable.
\end{proposition}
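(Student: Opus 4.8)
The plan is to hide all finite-domain information of an HMCA inside the states of the target MCA, letting the counters of the MCA represent directly the entries of the arrays in $\Vect$. The only unbounded data an HMCA carries is its counter valuation; the variable valuation $\rho$ ranges over the finite product $\prod_{v\in\Var}\Type(v)$, and the control flow is governed by the fixed program $\Delta$. I would introduce one MCA counter for each pair $\tuple{\vec f,a}$ with $\vec f\in\Vect$ and $a\in\mathrm{Dom}(\vec f)$; since every array has a finite index domain this is a finite set of counters, and an array entry $\vartheta(\vec f)(a)$ corresponds exactly to the MCA counter $\tuple{\vec f,a}$.

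Next I would replace the big-step transition $\Delta$ by a small-step simulation. A control state of the MCA is a pair consisting of a variable valuation $\rho$ and a \emph{continuation}, \ie the fragment of $\Delta$ that still remains to be executed, represented as a stack of syntactic subprograms of $\Delta$. State-only actions become \emph{skip} transitions of the MCA that update $\rho$ and advance the continuation: an assignment $v:=\Expr$ updates $\rho$ deterministically; a \Guess $v\in\Expr$ \with $\Test$ action branches nondeterministically to each admissible successor $\rho[v\leftarrow e]$; and the tests guarding $\IIf\ \Test\ \IThen\ \ldots$ and $\IWhile\ \Test\ \IDo\ \ldots$ are evaluated on $\rho$ (they never read counters) to select the appropriate continuation. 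An action $\inc(\vec f[\Expr])$ is compiled by first evaluating $\Expr$ on the current $\rho$ (which is fixed by the state) to an index $a$, and then emitting an MCA increment of the counter $\tuple{\vec f,a}$; $\dec(\vec f[\Expr])$ is compiled analogously to an MCA decrement. The \Reject\ action is compiled to a state with no outgoing transition. Crucially the decrement semantics agree: the HMCA $\dec$ rejects exactly when the counter is $0$, and the MCA $\dec$ is simply inapplicable at $0$, so blocked computations coincide in the two models.

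To reproduce the outer behaviour of an HMCA run, which applies $\Delta$ repeatedly, I would take $\tuple{\rho_I,\Delta}$ as the MCA initial state and add a skip edge from every ``empty continuation'' state $\tuple{\rho,\varepsilon}$ back to $\tuple{\rho,\Delta}$, so that reaching the end of the program begins a fresh transition with the updated valuation. The MCA accepting states are exactly the transition-boundary states $\tuple{\rho,\Delta}$ with $\rho\in P_F$, and the set of counters required to be zero for acceptance is $\{\tuple{\vec f,a}\mid \tuple{\vec f,a}\in E\}$; since $E$ need not list every counter, the construction yields a (not necessarily reduced) MCA, whose emptiness is still decidable as recalled above. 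With these choices, an accepting run of the HMCA and an accepting run of the compiled MCA determine each other, which gives the reduction.

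The main obstacle will be the finiteness of the control state space, on which the whole argument rests: I must check that the continuations reachable from $\Delta$ by the small-step relation form a \emph{finite} set. The delicate case is loops, since unfolding $\IWhile\ \Test\ \IDo\ \alpha^{*}\ \IEndWhile$ replaces it by $\alpha^{*}$ followed by the same loop; the point is that the loop term reappears only as a suffix drawn from the syntactic subterms of $\Delta$, so the continuation stack is built from finitely many fragments and never grows unboundedly in variety. Once the control state space is seen to be finite, correctness of the simulation is a routine induction on the derivations of $\sem{\tuple{\rho_i,\vartheta_i}}{\Delta}{\tuple{\rho_{i+1},\vartheta_{i+1}}}$ given in Figures~\ref{fig:sem-first} and~\ref{fig:sem-second}, matching each inference rule with the corresponding block of MCA transitions.
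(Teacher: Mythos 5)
Your proposal is correct and takes essentially the same route as the paper: both reductions put one MCA counter per array entry $\tuple{\vec f,a}$, encode a pair (program point, variable valuation) as an MCA control state, turn state-only actions into skip transitions, map $E$ to the MCA's partial zero-test set $R$, make \Reject non-accepting (the paper via a self-loop, you via a dead-end state), and finish with the same induction on the step relation. The only presentational difference is that you formalize program points as continuations (and must argue their finiteness, which your suffix argument correctly does), whereas the paper uses line numbers with a control-flow graph, making finiteness immediate.
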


\begin{proof}
Let $H = \tuple{\Var, \Vect, \Type, \Delta, \rho_I, P_F, E}$ be an
HMCA. We define an MCA $M_H = \tuple{Q, C, R, \delta, q_I, F}$ such
that $H$ is non-empty if and only if $M_H$ is non-empty. Assume that
$\Delta$ consists of $k$ actions. Let $v_1,\ldots v_l$ be an arbitrary
enumeration of $\Var$ and $P = \Type(v_1)\times\ldots\times
\Type(v_l)$ be the set of all variable valuations.  Let $Q =
\{1,\ldots,k\}\times P $. Every state $q\in Q$ is a pair
$\tuple{i,\rho}$, which is intended to capture the number of an action
to be executed and values of variables $v_1,\ldots,v_k$ in the state
just before execution of the action. We use a shortcut $q(v_i)$ to
denote the $i$-th component of $k$-tuple $\rho$ and $q(1)$ to denote
its first component.  The set of counters $C$ of MCA $M_H$ is defined
as $C = \{c_{\vec{f}(e)} \mid\text{ $\vec{f} \in \Vect$ and $e\in
  \Type(\vec{f})$}\}$. The starting state $q_I$ of $M_H$ is
$\tuple{1,\rho_I}$. The set of final states $F = \{\tuple{1,\rho_F}
\mid \rho_F \in P_F\}$. The set $R = \{c_{\vec{f}(e)} \mid
\tuple{\vec{f},e}\in E\}$.  

Assume that each action $\alpha$ in
$\Delta$ is assigned a unique number $i\in \Nat$, interpreted as
line number in which $\alpha$ occurs. Below we will write $\alpha_i$
to denote action $\alpha$ occurring in line $i$. A \emph{control flow
  graph} (CFG) of $\Delta$ is a graph $(\{i\in\Nat\mid \alpha_i\in
\Delta\},\{\rightarrow, \xrightarrow{\true},
\xrightarrow{\false}\})$. Edges of CFG are defined as follows. For
every action $\alpha_i$, with the exception of loop and conditional
action, let $j$ be the line number of next action to be executed. If
no such action exists then we put $j=1$ denoting that $\Delta$ may
be reexecuted. We put $i \rightarrow j$. For loop or conditional
actions let $j_{\true}$ and $j_{\false}$ be numbers of actions to be
executed provided the action's test succeeds, \resp fails. As in
previous case, if $j_{\true}$ (\resp $j_{\false}$) are undefined we
put $j_{\true} = 1$ (\resp $j_{\false} = 1$). We put $i
\xrightarrow{\true} j_{\true}$ and $i \xrightarrow{\false}
j_{\false}$. This completes the definition of CFG for $\Delta$.

  For every action $\alpha_i$ of $\Delta$ and every valuation
  $\rho$ we put the following transitions to $\delta$.

\begin{enumerate} 
\item $\alpha_i$ is $v:=\Expr$: put
  $\tuple{\tuple{i,\rho},skip,\tuple{j,\rho[v \leftarrow
      \llbracket\Expr\rrbracket_{\rho}]}}$, where $i\rightarrow j$;\label{translation:first}
\item $\alpha_i$ is \Guess $v\in \Expr$ \with $\Test$: for every every
  $e \in \llbracket\Expr\rrbracket_{\rho}$ such that
  $\llbracket\Test\rrbracket_{\rho[v\leftarrow e]}$ is true put
  $\tuple{\tuple{i,\rho},skip,\tuple{j,\rho[v \leftarrow e]}}$, where
  $i\rightarrow j$;
\item $\alpha_i$ is $\inc(\vec{f}(\Expr))$: let $e =
  \llbracket\Expr\rrbracket_{\rho}$. Put
  $\tuple{\tuple{i,\rho},inc(c_{f(e)}),\tuple{j,\rho}}$, where
  $i\rightarrow j$;
\item $\alpha_i$ is $\dec(\vec{f}(\Expr))$: let $e =
  \llbracket\Expr\rrbracket_{\rho}$. Put
  $\tuple{\tuple{i,\rho},dec(c_{f(e)}),\tuple{j,\rho}}$, where
  $i\rightarrow j$;\label{translation:fourth}
\item $\alpha_i$ is $\Reject$:  put
  $\tuple{\tuple{i,\rho},skip,\tuple{i,\rho}}$;\label{translation:reject}
\item $\alpha_i$ is $\IIf\ \Test\ \IThen\ \alpha^{*}_{i_{\true}}\ \IElse\
  \alpha'^{*}_{i_{\false}}\ \IEndIf$: if $\llbracket\Test\rrbracket_{\rho}$ is true
 then put $\tuple{\tuple{i,\rho},skip,\tuple{i_{\true},\rho}}$, otherwise
put $\tuple{\tuple{i,\rho},skip,\tuple{i_{\false},\rho}}$, where
  $i\xrightarrow{\true} j_{\true}$ and $i\xrightarrow{\false} j_{\false}$;   
\item $\alpha_i$ is $\IIf\ \Test\ \IThen\ \alpha^{*}_{i_{\true}} \IEndIf$
or $\alpha_i$ is $\IWhile\ \Test\ \IDo\ \alpha^{*}_{i_{\true}}\IEndWhile$:
  if $\llbracket\Test\rrbracket_{\rho}$ is true then put
  $\tuple{\tuple{i,\rho},skip,\tuple{i_{\true},\rho}}$, otherwise put
  $\tuple{\tuple{i,\rho},skip,\tuple{j,\rho}}$, where
  $i\xrightarrow{\true} j_{\true}$ and $i\xrightarrow{\false} j_{\false}$. 
\end{enumerate}
Transitions defined above correspond to semantics of actions from
Figures~\ref{fig:sem-first} and~\ref{fig:sem-second}. The only
exception is for $\Reject$, which does not have any computable
effect. Instead, it is modelled as a transition of $M_H$ that loops,
and therefore cannot lead to any accepting configuration.

For a counter valuation $\vartheta$ define
$\mathit{vectorise}(\vartheta)$ as vector $\vec{n}_{\vartheta}$
indexed by the set $\{c_{\vec{f}(e)}\mid\text{ $\vec{f}\in \Vect$ and
  $e\in \mathrm{Dom}(\vec{f})$}\}$ such that
$\vec{n}_{\vartheta}(c_{\vec{f}(e)}) = \vartheta(\vec{f})(e)$.

The statement to be proven is: for every sequence of action
$\alpha^{*} \subseteq \Delta$ starting in a line $i$ and finishing in
a line $j$, and every configurations $\tuple{\rho,\vartheta}$ and
$\tuple{\rho',\vartheta'}$ of HMCA $H$ we have
$\sem{\tuple{\rho,\vartheta}}{\alpha^{*}}{\tuple{\rho',\vartheta'}}$
if and only if MCA $M_H$ transits from configuration
$\tuple{\tuple{i,\rho},\vec{n}_{\vartheta}}$ to
$\tuple{\tuple{j,\rho'},\vec{n}_{\vartheta'}}$. The proof proceeds by
a standard induction on the structure of $\alpha^*$.

From the inductive statement we conclude that
$\sem{\tuple{\rho_I,\vartheta_I}}{\Delta}{\tuple{\rho_F,\vartheta_F}}$
if and only if $\tuple{\tuple{1,\rho_I},\vec{n}_{\vartheta_I}}$
transits to $\tuple{\tuple{1,\rho_F},\vec{n}_{\vartheta_F}}$. Here
$\tuple{\rho_I,\vartheta_I}$ is the starting configuration of $H$ and
$\tuple{\rho_F,\vartheta_F}$ is an accepting configuration of $H$.
\Ie $\rho_I$ is the starting state of $H$,
$\vartheta_I(c_{\vec{f}(e)}) = \vec{0}$ for all $\vec{f} \in \Vect$
and $e\in \Type(\vec{f})$, $\rho_F\in P_F$ and $\vartheta_F$ is any
counter valuation satisfying $\vartheta_F(c_{\vec{f}(e)}) = 0$ for every
$\tuple{\vec{f},e}\in E$.  Therefore existence of an accepting run of
HMCA $H$ is equivalent to existence of an accepting run of MCA $M_H$.
\end{proof}

\subsection{Global consistency}\label{sec:global}
In this section we check global consistency of a~formula by checking
for a~given (locally consistent with the formula) frame $\Frame=
\tuple{\Kings,\ST,\Xi,\Sigma,\bar{f}}$ whether there exists a normal
structure that fits $\Frame$.  Intuitively, this is like solving
a~jigsaw puzzle: the frame gives us a~set of shapes (star types) and
we have to decide if it is possible to put pieces of these shapes
together to build a~complete picture (a~structure). Here a~piece of
a~given shape has some number of tabs (essential types) of two kinds:
tabs corresponding to invertible essential types must be connected to
a~matching tab in a~matching piece; tabs corresponding to
non-invertible essential types must be simply connected to a~matching
piece.

Figure~\ref{fig:hmca} shows a~high-level multicounter automaton
$H_{\Frame}$ that solves this
problem. The automaton guesses one by one the sequence of elements of
the structure as they appear in the order $\Lin$. In each iteration of
the transition $\Delta$ one element of the structure is guessed.

During the construction we have to make sure that several conditions
are satisfied. One of them is that all royal types from $\Kings$ are
used exactly once. For this reason we keep track of the set of nodes
visited (\ie guessed) so far; their 1-types are stored in variable
$\Visit$. The variable is updated in line~\ref{MFrame:updateForbidden}
and used in line~\ref{MFrame:guessAllowed} and in the acceptance
condition. Another condition is that the constructed structure is
normal, in particular it satisfies Condition~\ref{def:normal-second}
in Definition~\ref{def:normal-structure}. Therefore every time the
automaton guesses (the star type of) a~new element in line
\ref{MFrame:guessAllowed}, it guesses it from the set
\[
\begin{array}{rl}
\lefteqn{\Allow(\Visit) = 
\{\sigma \in \ST\mid \pi(\sigma)\in \Kings\setminus \Visit\}\;\cup} \\
&\{\sigma \in \ST \mid \forall \pi\in \Visit \left(\pi\not\in \Kings \Rightarrow 
  \exists\tau\in\Xi. \tpf{\tau} = \pi \wedge (x\Lin{y})\in \tau \wedge
  \tps{\tau} = \pi(\sigma) \right)\}.
\end{array}
\]
This way we formally forbid guessing a~royal type more than once or
violating Condition~\ref{def:normal-second} in
Definition~\ref{def:normal-structure}.  Note that here $\Allow:
2^{\Partial{\ST}}\to 2^{\Partial{\ST}}$ is a unary operator in the
language of HMCA.

Intuitively, during solving our jigsaw puzzle, we have to represent
somehow the border of the partial picture constructed so far. This is
stored in the vector $\UPending$. It is indexed with partial shapes
(formally, partial star types) because we are not interested in parts
of pieces that are already connected to the picture; we are only
interested in parts of pieces that belong to the border. The cut
vector says for each partial shape how many pieces of this shape
belong to the border. For technical reasons this vector also counts
empty shapes (which correspond to pieces that are fully connected to
the picture); these counters are then ignored in the acceptance
condition of the constructed automaton.

Formally, we define a difference type of a node \wrt to another node
in a structure.  In the analogy with jigsaw puzzle, the difference
type of $e_1$ \wrt $e$ describes the tabs of the piece $e_1$ that
belong to the border of the constructed picture just before adding the
piece~$e$.  Figure~\ref{fig:difftype} shows an example of a difference
type.
\begin{figure}
  \begin{center}
  \includegraphics{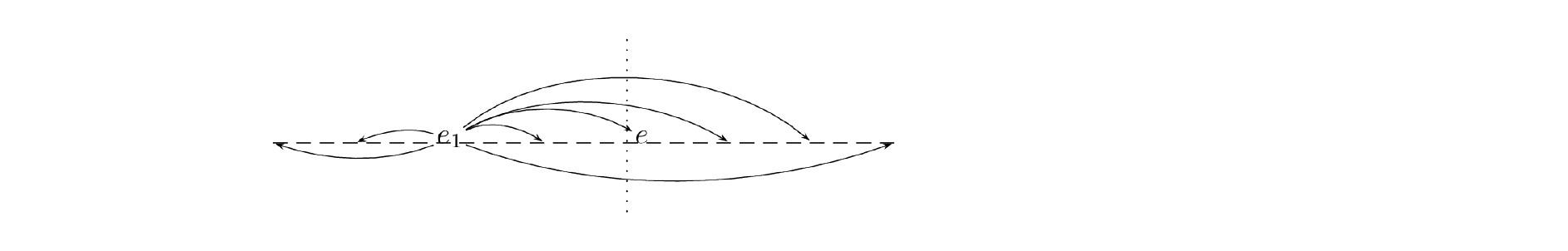}
  \end{center}
  \caption{Difference type of $e_1$ w.r.t. $e$ contains the arrows that cross the dotted line.}
  \label{fig:difftype}
\end{figure}

\begin{definition}[Difference type of $e'$ \wrt $e$ in a structure
  $\str A$]
  Let $\str A\in \ClassO$ and $e',e\in \str A$ be elements satisfying
  $e' <^{\str A} e$. Let $\sigma$ be the star type of $e'$ and let
  $\{\tau_i\}^k_{i=1}$ be essential types emitted from $e'$ and
  accepted by nodes $\Lin^{\str A}$ than $e$. The partial star type $\sigma
  \setminus \{\tau_i\}^k_{i=1}$ is called the \emph{difference type of
    $e'$ \wrt $e$ in $\str A$}.
\end{definition}

\begin{figure}[p]
\small
\begin{algorithmic}[1]
  \renewcommand{\algorithmicrequire}{\textbf{Initial
      Configuration}\xspace}
  \renewcommand{\algorithmicendif}{\textbf{endif}\xspace}
  \renewcommand{\algorithmicendwhile}{\textbf{endwhile}\xspace}
  \REQUIRE Initial state is $\rho_I$ such that $\rho_I(\sigma_c) =
  \sigma_{\textit{first}}$,  $\rho_I(\Visit) =
  \emptyset$ and the value of $\rho_I$ on remaining variables is
  arbitrary (but fixed). Initial counter valuation assigns $0$ to all
  counters.
  \renewcommand{\algorithmicrequire}{\textbf{Accepting
      Configurations}\xspace} \REQUIRE The set of accepting states
  $P_F$ consists of all states $\rho_F$ satisfying $\Kings \subseteq
  \rho_F(\Visit)$. Accepting counter valuations are defined by the set
  $E = \{\tuple{\UPending,\sigma}\mid\text{ $\sigma\in\Partial{\ST}$
    is non-empty}\}$.
  \renewcommand{\algorithmicrequire}{\textbf{Transition $\Delta$}\xspace}
  \REQUIRE

\IF{$\sigma_c = \bot$}\label{MFrame:rejectTest} 
\STATE \Reject\label{MFrame:reject}
\ENDIF
\STATE $\Visit := \Visit\cup \{\pi(\sigma_c)\}$\label{MFrame:updateForbidden}

\STATE $\sigma := \UpArrow{\sigma_c}$\label{MFrame:uparrow}
\WHILE {$\sigma \neq \emptyset$}\label{MFrame:matchWithLess-firstLine}
\STATE $\tau := \mathit{choose}(\sigma)$\label{MFrame:choose}
\STATE $\sigma := \sigma - \tau$

 \IF {$\tau$ is an invertible essential type}
 \STATE \Guess $\sigma_u \in \Partial{\ST}$ \with $\tau^{-1}\in \DownArrow{\sigma_u}$\label{MFrame:matchWithLess-firstGuess}
\ELSE 
\STATE \Guess $\sigma_u \in \Partial{\ST}$ \with $\pi(\sigma_u) = \tps{\tau}$\label{MFrame:matchWithLess-secondGuess}
\ENDIF

\STATE $\dec(\UPending[\DownArrow{\sigma_u}])$\label{MFrame:matchWithLess-firstDecrement}

\STATE $\inc(\MPending[\DownArrow{\sigma_u} - \tau^{-1}])$\label{MFrame:matchWithLess-firstIncrement}

\ENDWHILE\label{MFrame:matchWithLess-lastLine}

\STATE \Guess $\mathit{anotherIteration} \in \{\TRUE, \FALSE\}$
\WHILE{$\mathit{anotherIteration}$}\label{MFrame:matchLessWithCurrent-firstLine}
\STATE \Guess $\sigma_g
\in \Partial{\ST}$\label{MFrame:matchLessWithCurrent-guessSigmaG}
\STATE \Guess $\tau\in \DownArrow{\sigma_g}$ \with $\tps{\tau} =
\pi(\sigma_c)$ \AND ($\tau$ is non-invertible essential
type)\label{MFrame:matchLessWithCurrent-guessTau}

\STATE $\dec(\UPending[\DownArrow{\sigma_g}])$\label{MFrame:matchLessWithCurrent-dec} 
\STATE $\inc(\MPending[\DownArrow{\sigma_g} - \tau])$\label{MFrame:matchLessWithCurrent-inc}
\STATE \Guess $\mathit{anotherIteration} \in \{\TRUE, \FALSE\}$
\ENDWHILE\label{MFrame:matchLessWithCurrent-lastLine}

\STATE $\inc(\UPending[\DownArrow{\sigma_c}])$\label{MFrame:Inc}
\STATE $\tau_{\Succ} := \tau_{\Succ}({\sigma_c})$\label{MFrame:guessNext-firstLine} 
\IF {$\tau_{\Succ} = \bot$} 
\STATE $\sigma_c := \bot$
\ELSE
\STATE \Guess $\sigma_c\in \Allow(\Visit)$  \with $(\tau_{\Succ})^{-1} \in \sigma_c$ \label{MFrame:guessAllowed}
\ENDIF\label{MFrame:guessNext-lastLine}
\STATE \Guess $\mathit{anotherIteration} \in \{\TRUE, \FALSE\}$\label{MFrame:lastloop-firstline}
 \WHILE {$\mathit{anotherIteration}$}
\STATE \Guess $\sigma_g \in \Partial{\ST}$
\STATE $\dec(\MPending[\DownArrow{\sigma_g}])$ 
\STATE $\inc(\UPending[\DownArrow{\sigma_g}])$
\STATE \Guess $\mathit{anotherIteration} \in \{\TRUE, \FALSE\}$
\ENDWHILE\label{MFrame:lastloop-lastline}

\end{algorithmic}
\caption{A high-level multicounter automaton $H_{\Frame}$
  corresponding to a~frame $\Frame$.}
\label{fig:hmca}
\end{figure}

\begin{definition}
  Let $A\in \ClassO$ and $e\in \str A$.  The \emph{cut} at point $e$
  in $\str A$ is the vector $\UPending_{e}$ of natural numbers indexed
  by star types from $\Partial{\ST}$ such that
  \[\UPending_{e}[\sigma] = |\{e_1\in\str A\mid\text{difference type
    of $e_1$ \wrt $e$ in $\str A$ is $\sigma$}\}|.\]
\end{definition} \bigskip

\noindent Final states of the automaton are states where all kings from $\Kings$
are visited. Intuitively, final counter valuations are valuations
where the border of the picture is empty, \ie the counters
corresponding to non-empty shapes must be zeroed. Note that the
remaining counters, corresponding to empty shapes, store the numbers
of internal pieces of the picture with that particular shape (the
partial shapes of these pieces are empty because all tabs are already
connected to other pieces) and contain non-zero values.

Intuitively, one iteration of the transition $\Delta$ works as
follows. Before the transition starts we take a~piece $e$ of shape
$\sigma_c$; at the end of the iteration the piece is connected to the
picture constructed so far and the shape of the next piece is guessed.
The new border of the picture is computed in two loops in
lines~\ref{MFrame:matchWithLess-firstLine}--\ref{MFrame:matchLessWithCurrent-lastLine}. Before
these loops, in line~\ref{MFrame:uparrow}, we compute the tabs of $e$ that
must be connected to the picture.
In the loop in
lines~\ref{MFrame:matchWithLess-firstLine}--\ref{MFrame:matchWithLess-lastLine}
we take one by one these tabs $\tau$ and connect them to the picture:
first the shape $\sigma_u$ of a~matching piece~$e'$ in the picture is
guessed and then the border is updated: in
line~\ref{MFrame:matchWithLess-firstDecrement} the shape of $e'$ is
removed from the border and in
line~\ref{MFrame:matchWithLess-firstIncrement} the updated shape is
added to the new border. In order to avoid confusion between the old
and the new border, the new one is stored in a~separate vector
$\MPending$.  The loop in
lines~\ref{MFrame:matchLessWithCurrent-firstLine}--\ref{MFrame:matchLessWithCurrent-lastLine}
connects in a~similar way the tabs of the second kind (non-invertible
essential types) from the picture to the piece $e$.  Then remaining
tabs of~$e$ are added to the border in line~\ref{MFrame:Inc}.
Finally, the shape of the next piece is guessed and the new border is
added to the old one in
lines~\ref{MFrame:lastloop-firstline}--\ref{MFrame:lastloop-lastline}.

Formally, the set of finite-domain variables of the automaton $H_{\Frame}$ is
$\{\sigma_c,\Visit,\sigma,\tau,\sigma_u,$ $\sigma_g,\tau_{\Succ},
\mathit{anotherIteration}\}$ and there are two arrays of counters
$\UPending$ and $\MPending$.  The $\Type$ function is defined as
follows.  The type of $\sigma_c$ is $\ST \cup \{\bot\}$ (that is,
formally, $\Type(\sigma_{c}) = \ST \cup \{\bot\}$); variables
$\sigma$, $\sigma_{u}$ and $\sigma_{g}$ are of type $\Partial{\ST}$;
variable $\tau$ has type $\tau(\ST)$ and $\tau_{\Succ}$ has type
$\tau(\ST) \cup \{\bot\}$. The type of $\Visit$ is $2^{\pi(ST)}$ (that
is, this variable ranges over sets of 1-types). Finally, the type of
vectors of counters $\UPending$ and $\MPending$ is $\Partial{\ST}
\rightarrow \Nat$.

For a star type $\sigma$ define $\UpArrow{\sigma} = \{\tau\in
\sigma\mid (y<x) \in\tau\}$ and $\DownArrow{\sigma} = \{\tau\in
\sigma\mid (x<y) \in\tau\}$. Intuitively $\UpArrow{\sigma}$ (\resp
$\DownArrow{\sigma}$) denotes the set of tabs that are connected to
(\resp remain in the border of) the picture when adding a piece of
shape $\sigma$. For a star type $\sigma$ from $\ST$ define
$\tau_{\Succ}(\sigma)$ as the only 2-type $\tau$ such that
$\Succ(x,y)\in\tau$ (intuitively, this is the tab that connects a
piece of shape $\sigma$ to the next piece in the picture), or the
special value $\bot$ if $\sigma$ is the star type of the greatest element
in a structure. We also define $\mathit{choose}(\sigma)$ in line
~\ref{MFrame:choose} to be an arbitrary 2-type $\tau$ such that
$\tau\in \sigma$.

In each iteration of the transition $\Delta$ the automaton guesses one
element $e$ of the structure and assigns star type $\sigma_c$ to
it. Intuitively, at the beginning of an iteration the $\UPending$
vector stores the cut at $e$.  Technically, it is the sum of
$\UPending$ and $\MPending$ that stores the cut, but we may assume
that the $\MPending$ vector is zeroed (see Lemma~\ref{lem:emptyVec}
below). The $\MPending$ vector stores the information about changes in
cuts between the current and the next element.  The value of
$\UPending$ is updated in two loops in
lines~\ref{MFrame:matchWithLess-firstLine}--\ref{MFrame:matchLessWithCurrent-lastLine}. During
the computation some counters from $\UPending$ are decremented, and
some counters from $\MPending$ --- incremented. Decrementation of a
counter corresponds to establishing a 2-type between $e$ and some $e'$
smaller than $e$ (this is done in the loop in
lines~\ref{MFrame:matchWithLess-firstLine}--\ref{MFrame:matchWithLess-lastLine}),
or between $e'$ and $e$ (loop in
lines~\ref{MFrame:matchLessWithCurrent-firstLine}--\ref{MFrame:matchLessWithCurrent-lastLine}). In
order not to establish multiple 2-types between the same pair of
nodes, we remove the difference type of $e'$ \wrt $e$ from $\UPending$
and store it in $\MPending$. When the second loop
(lines~\ref{MFrame:matchLessWithCurrent-firstLine}--\ref{MFrame:matchLessWithCurrent-lastLine})
finishes its execution the initial value of $\UPending$ vector for
next node is the sum of the values of updated vectors $\UPending$ and
$\MPending$ in line~\ref{MFrame:matchLessWithCurrent-lastLine}.

In
lines~\ref{MFrame:guessNext-firstLine}--\ref{MFrame:guessNext-lastLine}
we guess the star type of the next node, or the special value $\bot$
in case we are already at the largest node of the structure.  Finally,
the loop in
lines~\ref{MFrame:lastloop-firstline}--\ref{MFrame:lastloop-lastline}
may move the content of vector $\MPending$ to $\UPending$. We may
assume (by Lemma~\ref{lem:emptyVec}) that the entire content is
actually moved.  Formally, the correspondence between a~frame $\Frame$
and HMCA $H_{\Frame}$ is captured by the following proposition.
\begin{proposition}\label{prop:sat-c2}
  Let $\Frame = \tuple{\Kings,\ST,\Xi,\Sigma,\bar{f}}$ be a frame.
  The automaton $H_{\Frame}$ is non-empty if and only if there exists
  a structure $\str M\in \ClassO$ that fits $\Frame$.
\end{proposition}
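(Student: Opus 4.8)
The plan is to prove both implications by maintaining a single invariant that links an accepting run of $H_\Frame$ to the ordering of a structure fitting $\Frame$: \emph{at the beginning of each iteration of the transition $\Delta$, the counter vector $\UPending$ records the cut of the partial structure built so far}, that is, $\UPending[\sigma]$ counts the already-placed elements whose difference type with respect to the element currently being added equals $\sigma$. Throughout I would invoke Lemma~\ref{lem:emptyVec} to assume that $\MPending$ is zeroed at the boundary of every iteration, so that $\UPending$ alone carries the cut. First I would record how one iteration, processing an element $e$ of star type $\sigma_c$, transforms the cut: the matched pieces are removed (with their difference types decreased by the consumed tab) from $\UPending$ and parked in $\MPending$ in the two loops of lines~\ref{MFrame:matchWithLess-firstLine}--\ref{MFrame:matchLessWithCurrent-lastLine}; the forward tabs $\DownArrow{\sigma_c}$ of $e$ itself are added in line~\ref{MFrame:Inc}; and the final loop moves $\MPending$ back into $\UPending$. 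A short calculation shows that this is exactly the passage from the cut at $e$ to the cut at its successor, since the difference type of any placed element $e_1$ with respect to the successor equals its difference type with respect to $e$ minus the tabs from $e_1$ to $e$, while the contribution of $e$ itself is $\DownArrow{\sigma_c}$.

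For the implication from structures to automata, let $\str M$ be a (normal) structure fitting $\Frame$ and enumerate its elements $a_1 <^{\str M} \cdots <^{\str M} a_k$ in the order $\Lin^{\str M}$. I would build an accepting run in which iteration $i$ begins with $\sigma_c = \stp{\str M}(a_i)$ and resolves every nondeterministic choice according to $\str M$: each backward tab $\tau\in\UpArrow{\sigma_c}$ of $a_i$ is matched, in the first loop, to the actual partner $a_j$ ($j<i$) to which $a_i$ emits $\tau$, guessing for $\sigma_u$ the difference type of $a_j$ with respect to $a_i$; each non-invertible forward tab emitted by some earlier $a_j$ into $a_i$ is matched in the second loop; and the $\mathit{anotherIteration}$ flags are set to run each loop the required number of times. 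By the invariant, proven by induction on $i$, every decrement removes a partner genuinely present in the border, so no $\dec$ fails, and the guard $\Allow(\Visit)$ of line~\ref{MFrame:guessAllowed} is met precisely because of normality, Condition~\ref{def:normal-second} of Definition~\ref{def:normal-structure} together with the fact that all realised silent types lie in $\Xi$. Since $\str M$ realises exactly the royal types $\Kings$, after the last iteration we have $\Kings\subseteq\Visit$, and since every essential tab has been matched the non-empty partial star types carry value $0$ in $\UPending$; thus the final configuration satisfies the acceptance condition given by $E$, and $H_\Frame$ is non-empty.

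For the converse, take an accepting run of $H_\Frame$ and read off the sequence $\sigma_1,\ldots,\sigma_k$ of successive values of $\sigma_c$ guessed before the run ends with $\sigma_c=\bot$. I would define a structure $\str M$ on elements $a_1,\ldots,a_k$, letting $\Lin^{\str M}$ be this enumeration, $\pi(\sigma_i)$ be the $1$-type of $a_i$, and $\Succ^{\str M}$ be induced by the values $\tau_{\Succ}(\sigma_i)$. The essential $2$-types are dictated by the matchings: each decrement of $\UPending[\DownArrow{\sigma_u}]$ selects a concrete earlier element whose current difference type is $\DownArrow{\sigma_u}$, and the guessed tab fixes the $2$-type joining it to $a_i$; the bookkeeping (remove the piece from $\UPending$, reinsert it with its difference type decreased by the consumed tab) guarantees that each tab is consumed exactly once and that no pair of elements receives two distinct essential $2$-types. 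For every remaining pair of non-royal elements not joined by an essential type I would insert a silent $2$-type from $\Xi$, whose availability is exactly what the $\Allow$ guard of line~\ref{MFrame:guessAllowed} enforces (for each visited non-royal $1$-type $\pi$ it guarantees a silent $\tau\in\Xi$ with $\tpf{\tau}=\pi$, $(x\Lin y)\in\tau$ and $\tps{\tau}=\pi(\sigma_c)$). Verifying that $\str M\in\ClassO$ fits $\Frame$ in the sense of Definition~\ref{def:fits} is then routine: its star types are the $\sigma_i\in\ST$, its silent types lie in $\Xi$ by construction, and its royal $1$-types are exactly $\Kings$, since acceptance forces $\Kings\subseteq\Visit$ while $\Allow$ and Condition~\ref{frame:4} keep each king unique and realised once.

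The main obstacle is the converse direction, and specifically the faithfulness of the counter bookkeeping: one must prove that the decrement/increment pattern on $\UPending$ and $\MPending$ corresponds to a genuine, consistent assignment of $2$-types to a concrete set of elements, namely that every required witness tab is matched exactly once (the acceptance condition zeroing the non-empty shapes), that no ordered pair is assigned conflicting $2$-types, and that the invertible and non-invertible essential types are handled in the correct loop. Carrying the invariant ``$\UPending$ equals the cut'' through both inner loops, while keeping these matchings mutually coherent, is the technical heart of the argument.
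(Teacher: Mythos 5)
Your proposal is correct and takes essentially the same route as the paper: the paper likewise factors the argument through Lemma~\ref{lem:emptyVec} (zeroing $\MPending$ at iteration boundaries), the invariant identifying $\UPending$ with the cut, a run-from-structure construction that resolves the automaton's guesses according to a normal fitting structure (Lemma~\ref{lemma:sat-c1}), and a structure-from-run construction that first assembles the partial structure of essential types (Lemma~\ref{lem:essential-substructure-exists}) and then completes it with silent types from $\Xi$ via the $\Allow$ guard (Lemma~\ref{lemma:sat-c2}). Your parenthetical restriction to \emph{normal} fitting structures in the structures-to-automata direction matches the hypothesis the paper itself imposes in Lemma~\ref{lemma:sat-c1}.
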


To prove Proposition~\ref{prop:sat-c2}, we first show (in
Lemma~\ref{lemma:sat-c1}) that if a~structure fits a~frame then the
frame induces a non-empty HMCA. Then (in Lemma~\ref{lemma:sat-c2}) we
show that if there exists a~frame for which the induced HMCA is
non-empty, then we can construct a~structure that fits the frame.

\begin{lemma}\label{lemma:sat-c1}
  Let $\Frame = \tuple{\Kings,\ST,\Xi,\Sigma,\bar{f}}$ be a frame. If there
  exists a normal structure $\str M\in \ClassO$ that
  fits $\Frame$ then $H_{\Frame}$ is non-empty.
\end{lemma}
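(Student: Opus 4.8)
The plan is to read off an accepting run of $H_{\Frame}$ directly from $\str M$, letting the automaton \emph{trace} $\str M$ element by element along $\Lin^{\str M}$. Write $e_1 <^{\str M} \cdots <^{\str M} e_n$ for the elements of $\str M$ and arrange that the $i$-th execution of the transition $\Delta$ processes $e_i$: before that execution the variable $\sigma_c$ holds $\stp{\str M}(e_i)$, and every nondeterministic action ($\mathit{choose}$ and each \Guess) is resolved according to the genuine $2$-types realised in $\str M$. The backbone of the argument is the \emph{invariant} that immediately before the $i$-th execution we have $\UPending = \UPending_{e_i}$ (the cut of $\str M$ at $e_i$), $\MPending = 0$, and $\Visit = \{\tp{\str M}(e_j) \mid j < i\}$. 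For $i = 1$ this matches the initial configuration: nothing precedes $e_1$ so $\UPending_{e_1} = \vec{0}$, and since $e_1$ is the $\Lin^{\str M}$-least element --- a king by Condition~\ref{def:normal-first} of Definition~\ref{def:normal-structure} --- its star type has no $\Succ(y,x)$ tab and hence coincides with $\sigma_{\mathit{first}} = \rho_I(\sigma_c)$ by Condition~\ref{frame:first} of Definition~\ref{def:frame}.

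Next I would verify that one execution preserves the invariant. In line~\ref{MFrame:uparrow} the set $\UpArrow{\sigma_c}$ is exactly the collection of essential tabs emitted by $e_i$ to smaller elements, and the loop of lines~\ref{MFrame:matchWithLess-firstLine}--\ref{MFrame:matchWithLess-lastLine} visits each such $\tau$ once; I resolve its \Guess by taking $\sigma_u$ to be the difference type of the \emph{true} partner $e' <^{\str M} e_i$ with $\tp{\str M}(e_i,e') = \tau$. This $\sigma_u$ lies in $\Partial{\ST}$ because $\str M$ fits $\Frame$, it satisfies the \with-guard (it contains the forward tab $\tau^{-1}$ into $e_i$ when $\tau$ is invertible, and otherwise has the right target $1$-type), and the decrement $\dec(\UPending[\DownArrow{\sigma_u}])$ is sound since $e'$ is genuinely counted by $\UPending_{e_i}$. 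The loop of lines~\ref{MFrame:matchLessWithCurrent-firstLine}--\ref{MFrame:matchLessWithCurrent-lastLine} handles, symmetrically, the non-invertible essential tabs pointing \emph{into} $e_i$ from smaller elements (invertible incoming tabs need no separate treatment, as they already occurred as backward tabs of $e_i$). The key observation is that, since between any two elements there is a unique $2$-type, each $e' <^{\str M} e_i$ is touched by at most one of the two loops, and exactly when it carries an essential connection to $e_i$ that alters its difference type; consequently the number of decrements at any index never exceeds the corresponding entry of $\UPending_{e_i}$, so no counter underflows and neither the $\dec$ action nor the test in line~\ref{MFrame:rejectTest} ever fires \Reject during iterations $1, \dots, n$. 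Line~\ref{MFrame:Inc} then records $e_i$ itself with index $\DownArrow{\sigma_c}$ --- its difference type with respect to $e_{i+1}$ --- and running the final loop (lines~\ref{MFrame:lastloop-firstline}--\ref{MFrame:lastloop-lastline}) just enough times to drain $\MPending$ into $\UPending$, which is legitimate by Lemma~\ref{lem:emptyVec}, restores $\UPending = \UPending_{e_{i+1}}$ and $\MPending = 0$.

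It remains to resolve the guess of the next element and to check acceptance. In line~\ref{MFrame:guessAllowed} I set $\sigma_c := \stp{\str M}(e_{i+1})$; this is legal because the successor tab of $e_i$ is an invertible essential type (Condition~\ref{def:normal-fourth}), so its inverse lies in $\stp{\str M}(e_{i+1})$, and because $\stp{\str M}(e_{i+1}) \in \Allow(\Visit)$: if $e_{i+1}$ is a king its royal $1$-type is unvisited (unique realisation), and otherwise, for each visited non-royal $1$-type, Condition~\ref{def:normal-second} of Definition~\ref{def:normal-structure} furnishes the required correctly oriented silent type, which belongs to $\Xi$ since $\str M$ fits $\Frame$. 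Finally, at $i = n$ we have $\stp{\str M}(e_n) = \sigma_{\mathit{last}}$ with $\DownArrow{\sigma_c} = \emptyset$, so after the $n$-th execution every element's difference type is empty, giving $\UPending[\sigma] = 0$ for all non-empty $\sigma \in \Partial{\ST}$, while $\Visit = \{\tp{\str M}(e_j) \mid j \le n\} \supseteq \Kings$. Hence the configuration reached after the $n$-th execution lies in an accepting state of $P_F$ and zeroes exactly the counters named in $E$, so $H_{\Frame}$ is non-empty.

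I expect the main obstacle to be the bookkeeping of the cut-invariant across the two variable-length loops: one must argue, using the uniqueness of the $2$-type between any pair of elements, that the four kinds of connections (invertible, outgoing non-invertible, incoming non-invertible, and silent) are each accounted for exactly once and in the correct vector, so that the decrements never underflow and $\UPending_{e_{i+1}}$ is reconstructed precisely.
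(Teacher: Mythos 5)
Your proposal is correct and follows essentially the same strategy as the paper's proof: reading off an accepting run by tracing $\str M$ in $\Lin$-order, maintaining the invariant that $\UPending$ holds the cut at the current element with $\MPending$ drained to zero, resolving each \Guess by the actual difference types in $\str M$, and invoking the normality conditions for the $\Allow(\Visit)$ membership and the final acceptance check. Your explicit appeal to the uniqueness of the $2$-type between any pair of elements (to show each smaller element is touched by at most one loop iteration, so no counter underflows) is a point the paper leaves more implicit, but it is the same underlying argument.
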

\begin{proof}
  We will show that $H_{\Frame}$ is non-empty by presenting its
  accepting run. Let $e_1,\ldots,e_k$ be nodes of $\str M$ sequenced
  in order $\Lin^{\str M}$. Define  $\Visit_i = \{\tp{\str M}(e_j)\mid j<i\}$,
  for $i\in \underline{k}$. Recall that  $\UPending_{e_i}$ is the cut at point
  $e_i$ in $\str A$. The accepting run of $H_{\Frame}$ is
  \[\tuple{\rho_1,\vartheta_1},\ldots, \tuple{\rho_k,\vartheta_k},
  \tuple{\rho_{k+1},\vartheta_{k+1}}\] where, for $i\in
  \underline{k}$, $\rho_i(\sigma_c) = \stp{\str A}(e_i)$,
   $\rho_i(\Visit) = \Visit_i$ and the value of
  $\rho_i$ on other variables is arbitrary. Moreover
  $\vartheta_i(\UPending) = \UPending_{e_i}$ and
  $\vartheta_i(\MPending) = \vec{0}$.

  Define the last configuration of the above sequence, \ie
  $\tuple{\rho_{k+1},\vartheta_{k+1}}$, as $\rho_{k+1}(\sigma_c) =
  \bot$,  $\rho_{k+1}(\Visit) =
  \{\tp{\str M}(e_j)\mid j\leq k\}$, $\vartheta_{k+1}(\MPending) =
  \vec{0}$ and $\vartheta_{k+1}(\UPending) = \UPending_{\mathrm{Acc}}$, 
where
\[ \UPending_{\mathrm{Acc}}[\sigma] = \begin{cases} |\{e_i\mid\text{
    $1\leq i\leq k$ and $\tp{\str M}(e_1) = \pi(\sigma)$}\}|
  & \text{if $\sigma$ is empty} \\
  0 & \text{otherwise.}
\end{cases}\] By inspecting Figure~\ref{fig:hmca} we may observe that
$\tuple{\rho_{k+1},\vartheta_{k+1}}$ is indeed an accepting
configuration of $H_{\Frame}$. Next, we show that
$\tuple{\rho_1,\vartheta_1}$ is the initial configuration of
$H_{\Frame}$. Indeed, observe that $\sigma_{\mathit{first}}$ is $\stp{\str
  A}(e_1)$, and $\Visit_1 = \emptyset$. Moreover,
$\UPending_{e_1} = \vec{0}$.

To show that the specified run is indeed accepting, assume that MCA
$H_{\Frame}$ is in state $\tuple{\rho_{i},\vartheta_{i}}$, where $i\in
\underline{k}$. We will show that it may transit to state
$\tuple{\rho_{i+1},\vartheta_{i+1}}$.

Clearly, star type $\stp{\str M}(e_i)$ is different from the special
value $\bot$. Thus the test in line~\ref{MFrame:rejectTest} fails and
computation does not reject in line~\ref{MFrame:reject}.  Then, in
line~\ref{MFrame:updateForbidden} the set of visited 1-types is
updated accordingly. This way we obtain $\Visit_{i+1}$.  Next, in
lines~\ref{MFrame:matchWithLess-firstLine}--\ref{MFrame:matchLessWithCurrent-lastLine}
the value of variable $\UPending$ is transformed from
$\vartheta_{i}(\UPending)$ to $\vartheta_{i+1}(\UPending)$ in two
loops.
  Now we explain both loops in more detail.

  The loop in lines~\ref{MFrame:matchWithLess-firstLine}
  --\ref{MFrame:matchWithLess-lastLine} is responsible for verifying
  satisfaction of requirements imposed by $\UpArrow{(\stp{\str
      M}(e_i))}$, \ie for checking that essential 2-types that $e_i$
  wants to emit to smaller nodes can be accepted. The loop handles
  each 2-type $\tau \in \UpArrow{(\stp{\str M}(e_i))}$ in a single
  iteration. Let $e'$ be the unique element of $\str M$ such that
  $\tp{\str M}(e_i,e') = \tau$. There are two cases. First, if $\tau$
  is an invertible essential type then the star type $\stp{\str
    M}(e')$ contains $\tau^{-1}$.  Let $\sigma_u$ be the difference
  type of $e'$ \wrt $e_i$ in $\str M$. Since $e' \Lin^{\str M} e_i$,
  the star type $\sigma_u$ also contains $\tau^{-1}$. Thus $\sigma_u$
  can be guessed in line~\ref{MFrame:matchWithLess-firstGuess}.  The
  second case is when $\tau$ is a non-invertible essential type.  Then
  no information about connection with $e_i$ is stored in $\stp{\str
    M}(e')$.  In line~\ref{MFrame:matchWithLess-secondGuess} we guess
  $\sigma_u$, the difference type of $e'$ \wrt $e_i$ in $\str M$. In
  both cases, by definition of $\sigma_u$ we have
  $\UPending_{e_i}[\sigma_u] > 0$. Thus decrementing
  $\UPending[\sigma_u]$ in
  line~\ref{MFrame:matchWithLess-firstDecrement} will not reject. To
  remember that some edges emitted by $e'$ must be accepted by nodes
  greater than $e_i$, we then increment $\MPending[\sigma_u -
  \tau^{-1}]$. Note that when $\tau$ is a non-invertible essential
  type then $\sigma_u - \tau^{-1} = \sigma_u$.

  The loop in
  lines~\ref{MFrame:matchLessWithCurrent-firstLine}--\ref{MFrame:matchLessWithCurrent-lastLine}
  is responsible for updating the arrays $\UPending$ and $\MPending$
  to reflect connections of nodes $e'$ smaller than $e_i$ to $e_i$ by
  non-invertible essential types.  In each iteration the transition
  guesses a difference type of some node $e'$ \wrt $e_i$ in $\str M$
  and assigns it to $\sigma_g$.  Then it guesses a 2-type $\tau$ that
  $e'$ wants to emit to $e_i$
  (lines~\ref{MFrame:matchLessWithCurrent-guessSigmaG}
  and~\ref{MFrame:matchLessWithCurrent-guessTau}). To reflect that the
  connection indeed exists, $\UPending[\sigma_g]$ is decremented, type
  $\tau$ removed from $\sigma_g$ and vector $\MPending[\sigma_g -
  \tau]$ is incremented in lines~\ref{MFrame:matchLessWithCurrent-dec}
  and~\ref{MFrame:matchLessWithCurrent-inc}.  Finally, when the loop
  is over we increment $\UPending[\DownArrow{\sigma_c}]$ to reflect
  that the edges emitted by $e_i$ to greater nodes have to be matched
  later on, provided that $i< k$. When $i=k$, node $e_i$ is the last
  node of the structure, and it must accept all essential types not
  accepted by smaller nodes.

  In
  lines~\ref{MFrame:guessNext-firstLine}--\ref{MFrame:guessNext-lastLine}
  we guess the star type of the next node, or the special value $\bot$
  in case $i=k$. The next value of $\sigma_c$ must be such that it
  accepts invertible essential type $\tau_{\Succ}$ emitted by $e_i$.
  Note that $\Succ(x,y)\in\tau_{\Succ}$, which verifies that $e_{i+1}$
  is indeed the successor of $e_i$ \wrt~$\Lin^{\str M}$. Moreover,
  $\stp{\str M}(e_{i+1}) \in \Allow(\Visit_{i+1})$ as either $e_{i+1}$
  is a king in $\str M$ or it is not a king and can be connected to
  every smaller non-royal node by a silent 2-type. The latter property holds
  because $\str M$ is normal. Thus $\stp{\str M}(e_{i+1})$ can be
  guessed as the new $\sigma_c$.

  Finally, the loop in
  lines~\ref{MFrame:lastloop-firstline}--\ref{MFrame:lastloop-lastline}
  may move the content of vector $\MPending$ to $\UPending$. We may
  assume (see Lemma~\ref{lem:emptyVec} below) that the entire content
  is actually moved. Then, vector $\UPending$ obtains the value of
  $\UPending_{e_{i+1}}$.
\end{proof}

The proof of the other direction, that is the construction of
a~structure from a~run of $H_{\Frame}$, is more complicated and we
need some additional lemmas. First, observe that at the end of the
transition of $H_{\Frame}$
(lines~\ref{MFrame:lastloop-firstline}--\ref{MFrame:lastloop-lastline})
there is a loop that non-deterministically chooses a partial star-type
$\sigma_g$, decrements $\MPending[\sigma_g]$ and increments
$\UPending[\sigma_g]$. \Wlg we may assume that this procedure erases
the entire vector $\MPending$, and thus $\MPending_i = \vec{0}$ for
all $i\in \{1,\ldots k\}$. Thus we have the following lemma.
\begin{lemma}\label{lem:emptyVec}
  Let $\Frame$ be a frame such that multicounter automaton
  $H_{\Frame}$ is non-empty.  Then there exists an accepting run
  $\tuple{\rho_1,\vartheta_1},\ldots, \tuple{\rho_k,\vartheta_k}$ of
  $H_{\Frame}$ such that $\vartheta_i(\MPending) = \vec{0}$ for every
  $i\in \underline{k}$.
\end{lemma}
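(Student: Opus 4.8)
The plan is to start from an arbitrary accepting run of $H_{\Frame}$ and modify only the final loops of its transitions so that each of them empties $\MPending$. The key observation is that the loop in lines~\ref{MFrame:lastloop-firstline}--\ref{MFrame:lastloop-lastline} acts as a \emph{transfer}: one iteration guesses a partial star type $\sigma_g$ and moves a single unit from $\MPending$ to $\UPending$ at the \emph{same} index $\DownArrow{\sigma_g}$, so it preserves the sum $\UPending[\rho]+\MPending[\rho]$ for every $\rho\in\Partial{\ST}$. Moreover every increment of $\MPending$ performed earlier in a transition (lines~\ref{MFrame:matchWithLess-firstIncrement} and~\ref{MFrame:matchLessWithCurrent-inc}) writes to an index of the form $\DownArrow{\sigma}$, and such a ``forward-only'' index $\rho$ satisfies $\DownArrow{\rho}=\rho$; hence every index holding positive $\MPending$ can be selected in the final loop by guessing $\sigma_g=\rho$. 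Iterating the loop therefore lets us move the \emph{entire} content of $\MPending$ into $\UPending$, and each such decrement acts on a strictly positive entry, so it never rejects.

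First I would define the new run to use exactly the guesses of the original run in lines~\ref{MFrame:updateForbidden}--\ref{MFrame:guessNext-lastLine} of every transition, while performing the total transfer in each final loop. Writing $\vartheta_i$ for the original and $\vartheta'_i$ for the new counter valuations, I would prove by induction on $i$ the invariant
\[
\vartheta'_i(\MPending)=\vec{0},\qquad
\vartheta'_i(\UPending)=\vartheta_i(\UPending)+\vartheta_i(\MPending).
\]
The base case holds because the initial valuation is $\vec{0}$. For the step, note that the two matching loops and line~\ref{MFrame:Inc} apply identical increments and decrements to $\UPending$ and $\MPending$ in both runs; since the new $\UPending$ dominates the old one pointwise (their difference is the non-negative vector $\vartheta_i(\MPending)$, and the same decrements are applied), every $\dec(\UPending[\cdot])$ that succeeded originally still succeeds, so the new run is well defined. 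These identical updates leave the per-index sums $\UPending+\MPending$ equal in the two runs up to line~\ref{MFrame:Inc}; the final loop does not change these sums, so after emptying $\MPending$ the new $\UPending$ equals the common sum, which re-establishes the invariant at $i+1$.

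The hard part is to check that the new run is still \emph{accepting}. The only nontrivial condition is $\vartheta'_k(\UPending)[\rho]=0$ for every non-empty $\rho$, and by the invariant this equals $\vartheta_k(\UPending)[\rho]+\vartheta_k(\MPending)[\rho]$. Since the original run accepts, its $\UPending$ already vanishes at non-empty indices, so the whole difficulty reduces to showing that the conserved total border $\UPending+\MPending$ carries no mass on a non-empty index at the final configuration --- equivalently, that the accepting run leaves no essential tab dangling. Here I would use that the greatest element contributes nothing, as $\DownArrow{\sigma_{\mathit{last}}}=\emptyset$, and that every \emph{invertible} essential tab is necessarily matched, because the loop in lines~\ref{MFrame:matchWithLess-firstLine}--\ref{MFrame:matchWithLess-lastLine} exhausts the backward tabs $\UpArrow{\sigma_c}$ of every guessed element. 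The residual non-empty mass can thus only come from non-invertible forward tabs, which are matched through the optional loop in lines~\ref{MFrame:matchLessWithCurrent-firstLine}--\ref{MFrame:matchLessWithCurrent-lastLine}; I expect this to be the main obstacle, resolved by assuming the accepting run performs these optional matchings greedily, so that no non-empty partial star type survives in the border and the transformed run indeed accepts.
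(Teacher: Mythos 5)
Your construction of the modified run is exactly the paper's: keep the original run's nondeterministic choices up to line~\ref{MFrame:guessNext-lastLine} of each transition, use the final loop (lines~\ref{MFrame:lastloop-firstline}--\ref{MFrame:lastloop-lastline}) to flush $\MPending$ completely into $\UPending$, and maintain the invariant that the new $\UPending$ equals the old $\UPending+\MPending$ pointwise while the new $\MPending$ is $\vec{0}$. Your executability argument (the first part of a transition never decrements $\MPending$, and decrements of $\UPending$ that succeed on the smaller vector succeed on the pointwise larger one) is also the paper's, and your observation that every index incremented in $\MPending$ (lines~\ref{MFrame:matchWithLess-firstIncrement} and~\ref{MFrame:matchLessWithCurrent-inc}) consists of forward types only, so that such an index $\rho$ satisfies $\DownArrow{\rho}=\rho$ and can be selected in the final loop, is a correct detail the paper glosses over.

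The gap is your last paragraph, the acceptance step. The paper closes it in one line: it reads off from the acceptance condition that the original run ends with $\vartheta_k(\MPending)=\vec{0}$, whence by the invariant the transformed run's final $\UPending$ vanishes on all non-empty indices. You instead try to \emph{prove} that the given accepting run leaves no $\MPending$-mass on non-empty indices, and neither of your two arguments works. First, ``assuming the accepting run performs these optional matchings greedily'' is not available to you: the run is given, and replacing its choices by greedy ones produces a different run whose executability and acceptance would have to be re-established; worse, greedy matching can be outright impossible, since a non-invertible forward tab $\tau$ parked in the border can only ever be consumed by a later element of 1-type $\tps{\tau}$, and no such element need occur in the run. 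Second, your claim that invertible tabs are necessarily matched is false: the loop in lines~\ref{MFrame:matchWithLess-firstLine}--\ref{MFrame:matchWithLess-lastLine} guarantees that every element's \emph{backward} tabs are matched against the border, not that every border piece's forward invertible tab is eventually consumed; such a tab dangles whenever no later element carries its inverse among its backward tabs. Indeed, the property you are after simply does not follow from non-emptiness under the literal definition of the accepting valuations: $E$ mentions only $\UPending$, and one can write down a frame (say, whose unique non-royal star type emits a non-invertible message tab towards its own 1-type) in which every run parks a non-empty partial star type in $\MPending$ forever and yet still meets the stated acceptance condition. So your suspicion that this step is the crux is justified and even exposes an imprecision in the paper; but the only way to finish is the paper's: treat acceptance as also requiring $\MPending$ to vanish on non-empty indices (equivalently, include the pairs $\tuple{\MPending,\sigma}$ for non-empty $\sigma$ in $E$), after which the step you were stuck on is immediate from your invariant.
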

\begin{proof}
  Assume that $H_{\Frame}$ is non-empty and let
  $\tuple{\rho'_1,\vartheta'_1},\ldots, \tuple{\rho'_k,\vartheta'_k}$
  be its accepting run. For $i\in \underline{k}$ let $\rho_i =
  \rho'_i$, and $\vartheta_i(\UPending) = \vartheta'_i(\UPending) +
  \vartheta'_i(\MPending)$, $\vartheta_i(\MPending) = \vec{0}$. We now
  argue that the run $\tuple{\rho_1,\vartheta_1},\ldots,
  \tuple{\rho_k,\vartheta_k}$ is accepting. Consider a configuration
  $\tuple{\rho'_i,\vartheta'_i}$, for $i\in \underline{k-1}$. By
  assumption $H_{\Frame}$ may transit from
  $\tuple{\rho'_i,\vartheta'_i}$ to
  $\tuple{\rho'_{i+1},\vartheta'_{i+1}}$. We will show that
  $H_{\Frame}$ may transit from $\tuple{\rho_i,\vartheta_i}$ to
  $\tuple{\rho_{i+1},\vartheta_{i+1}}$. Execution of $H_{\Frame}$'s
  transition can be broken into two parts. First,
  lines~\ref{MFrame:rejectTest}--\ref{MFrame:guessNext-lastLine} and
  second,
  lines~\ref{MFrame:lastloop-firstline}--\ref{MFrame:lastloop-lastline}.
  
  Execution of the first part on configuration
  $\tuple{\rho'_i,\vartheta'_i}$ may be mimicked on configuration
  $\tuple{\rho_i,\vartheta_i}$ for two reasons. First, the
  execution doesn't depend on value of vector $\MPending$ (\ie the
  first part of the execution does not use decreasing actions on
  $\MPending$). Second we have $\vartheta_i(\UPending) \geq
  \vartheta'_i(\UPending)$, and since the execution does not reject on
  smaller values of $\UPending$, it won't reject on higher values of
  $\UPending$.

  In the second part of the execution
  (lines~\ref{MFrame:lastloop-firstline}--\ref{MFrame:lastloop-lastline})
  we may conduct non-deterministic guesses in such a way that the
  entire vector $\MPending$ is reset. Thus $H_{\Frame}$ may transit
  from $\tuple{\rho_i,\vartheta_i}$ to
  $\tuple{\rho_{i+1},\vartheta_{i+1}}$. Since the selection of $i$ was
  arbitrary, we conclude that all transitions in sequence
  $\tuple{\rho_1,\vartheta_1},\ldots, \tuple{\rho_k,\vartheta_k}$ can
  be conducted. Since $\tuple{\rho'_1,\vartheta'_1}$ is the starting
  configuration of $H_{\Frame}$, we have $\vartheta'_1(\MPending) =
  \vec{0}$. Furthermore, $\vartheta_1(\UPending) =
  \vartheta'_1(\UPending)$, $\vartheta_1(\MPending) = \vec{0}$ and
  $\rho_1 = \rho'_1$. Therefore $\tuple{\rho_1,\vartheta_1} =
  \tuple{\rho'_1,\vartheta'_1}$ and $\tuple{\rho_1,\vartheta_1}$ is
  the starting state of $H_{\Frame}$. Moreover, as
  $\tuple{\rho'_k,\vartheta'_k}$ is an accepting state, we have
  $\vartheta'_k(\MPending) = \vec{0}$, which leads to a~conclusion that
  $\tuple{\rho_k,\vartheta_k}$ is also an accepting state. Therefore
  the specified run is accepting, and it satisfies requirements
  imposed by the lemma.
\end{proof}

By a~\emph{partial structure} we mean a~(usually not complete) graph
whose nodes are labelled with 1-types and edges are labelled with
2-types such that whenever an edge $\tuple{e_1,e_2}$ is labelled with
$\tau$ then $e_1$ is labelled with $\tpf{\tau}$ and $e_2$ is labelled
with $\tps{\tau}$. Intuitively, partial structures are graphs that can
be extended to (full) relational structures by adding some edges.
In the following lemma we construct a partial structure from a run of
the automaton $H_\Frame$. Each iteration of the transition $\Delta$
corresponds to an extension of the structure constructed so far with
edges connecting the element guessed in the iteration. Here we are
interested only in edges that are labelled with essential types.

\begin{lemma}\label{lem:essential-substructure-exists}
  Let $\tuple{\rho_1,\vartheta_1},\ldots, \tuple{\rho_k,\vartheta_k},
  \tuple{\rho_{k+1},\vartheta_{k+1}}$ be an accepting run of
  $H_{\Frame}$ such that $\vartheta_i(\MPending) = \vec{0}$ for every
  $i\in \underline{k+1}$. There exists a partial structure $\str
  E$ with nodes $e_1,\ldots,e_k$ such that
\begin{itemize}
\item $\vartheta_i(\UPending)$ is the cut at point $e_i$ in $\str E$,
  for $i\in \underline{k}$,
\item the star type of $e_i$ in $\str E$ (\ie $\stp{\str E}(e_i)$) is
  $\rho_i(\sigma_c)$ for $i\in \underline{k}$, and
\item $\str E\models\Succ(e_i,e_{i+1})$, for
  $i\in \underline{k-1}$.
\end{itemize}
\end{lemma}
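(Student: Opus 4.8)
The plan is to build $\str E$ incrementally, scanning the given accepting run from its first configuration to its last and mirroring its nondeterministic guesses. Throughout the construction I will maintain a correspondence between the positive entries of the counter vector $\UPending$ and the nodes already placed: at the start of iteration $i$ (i.e.\ in configuration $\tuple{\rho_i,\vartheta_i}$) I will have placed $e_1,\ldots,e_{i-1}$ together with all essential edges among them, and the invariant will be
\[(\ast_i)\qquad \vartheta_i(\UPending)[\sigma] \;=\; \bigl|\{\, j<i \mid \text{the difference type of } e_j \text{ \wrt } e_i \text{ in } \str E \text{ equals } \sigma \,\}\bigr| \quad\text{for all } \sigma\in\Partial{\ST}.\]
In words, $\vartheta_i(\UPending)$ is exactly the cut at $e_i$. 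The node $e_i$ is created with $1$-type $\pi(\rho_i(\sigma_c))$; the hypothesis $\vartheta_i(\MPending)=\vec{0}$ guarantees that the whole border sits in $\UPending$, so $(\ast_i)$ determines the cut unambiguously. The base case $(\ast_1)$ is immediate, since $\vartheta_1(\UPending)=\vec{0}$ and $e_1$ has no smaller neighbours.

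For the inductive step I would read off the edges incident to $e_i$ from the two inner loops. Line~\ref{MFrame:uparrow} computes $\UpArrow{\rho_i(\sigma_c)}$, the essential tabs that $e_i$ emits downwards in the order, and the loop of lines~\ref{MFrame:matchWithLess-firstLine}--\ref{MFrame:matchWithLess-lastLine} processes them one by one. When it handles a tab $\tau$ and guesses $\sigma_u$, the invariant $(\ast_i)$ ensures that the positive counter $\UPending[\DownArrow{\sigma_u}]$ witnesses an as-yet-unmatched smaller node $e'$ whose difference type is $\DownArrow{\sigma_u}$ (containing $\tau^{-1}$ when $\tau$ is invertible, or with $\pi(\sigma_u)=\tps{\tau}$ when $\tau$ is non-invertible); I add the edge $\tau(e_i,e')$ and record $e'$'s updated difference type $\DownArrow{\sigma_u}-\tau^{-1}$ by the increment of $\MPending$. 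The loop of lines~\ref{MFrame:matchLessWithCurrent-firstLine}--\ref{MFrame:matchLessWithCurrent-lastLine} is handled symmetrically, matching a non-invertible tab that a smaller node emits up to $e_i$. Because both loops only decrement $\UPending$ while the freshly updated nodes are parked in $\MPending$, no smaller node is matched twice within one iteration, so at most one $2$-type is ever placed between $e_i$ and any $e'$; this is what keeps $\str E$ a consistent partial structure. The increment of line~\ref{MFrame:Inc} inserts $e_i$ into the border with its full set of downward tabs $\DownArrow{\rho_i(\sigma_c)}$, and since the hypothesis lets me assume (Lemma~\ref{lem:emptyVec}) that lines~\ref{MFrame:lastloop-firstline}--\ref{MFrame:lastloop-lastline} flush all of $\MPending$ into $\UPending$, a routine comparison of the counter updates against the change of difference types \wrt $e_{i+1}$ yields $(\ast_{i+1})$.

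It then remains to verify the three asserted properties. Property one is exactly $(\ast_i)$. For the star types, the first loop consumes \emph{every} upward tab of $e_i$, so all of these are realised; the downward tabs are realised precisely when larger nodes later connect to $e_i$, and the acceptance condition, which demands that $\vartheta_{k+1}(\UPending)$ vanish on all non-empty partial star types, forces (through the invariant at the last configuration, where each counter records a node's still-unconsumed tabs) every downward tab to be consumed, whence $\stp{\str E}(e_i)=\rho_i(\sigma_c)$. For the successor edges I would argue, jointly with the construction, that at the start of iteration $i{+}1$ the node $e_i$ is the \emph{unique} placed node still carrying an unconsumed downward $\Succ$-tab: its successor tab $\tau_{\Succ}(\rho_i(\sigma_c))$ has just been dropped into the border, while by the third property for all earlier indices every $e_j$ with $j<i$ had its successor tab consumed when $e_{j+1}$ was attached. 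Since the guess of line~\ref{MFrame:guessAllowed} forces $(\tau_{\Succ})^{-1}\in\rho_{i+1}(\sigma_c)$, the first loop of iteration $i{+}1$ must match this predecessor tab, and the only positive $\UPending$-counter whose difference type contains the matching $\tau_{\Succ}$ is the one for $e_i$; therefore the edge placed is $\Succ(e_i,e_{i+1})$, giving the third property.

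The main obstacle is the simultaneous bookkeeping in the inductive step: one must realise the abstract operations on $\UPending$ and $\MPending$ as operations on honest, distinct nodes, so that no ordered pair receives two $2$-types and every tab is consumed exactly once. The delicate point here is the uniqueness argument for the successor tab, since the automaton never names which smaller node a tab is matched to; consistency of the successor chain must be extracted from the invariant together with the inductively maintained third property rather than read off directly.
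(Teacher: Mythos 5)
Your proposal is correct and takes essentially the same route as the paper's own proof: an induction over the given run that mirrors the two matching loops, maintains the identification of $\UPending$ with the cut (the paper phrases this as its two invariants, with $\MPending$ playing the same double-matching-prevention role you describe), and uses the acceptance condition on $\vartheta_{k+1}(\UPending)$ to conclude that every downward tab is eventually consumed, so star types are realised exactly. If anything, your explicit uniqueness argument for matching the $\Succ$-tab to $e_i$ is spelled out more carefully than in the paper, which only remarks that the guess in line~\ref{MFrame:guessAllowed} ensures the nodes are guessed in order.
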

\begin{proof}
  We say that a node $e_i$ \emph{wants} to emit an edge of type $\tau$
  if $\tau\in \rho_i(\sigma_c)$ and in the partial structure
  being constructed no edge $\tau$ emitted from $e_i$ is yet present.
  We will construct the partial structure $\str E$ in $k$ steps. We
  start with an empty partial structure $\str E_0$. At the beginning
  of each step $i$, for $i\in \underline{k}$, we have a partial structure
  $\str E_{i-1}$ with nodes $e_1,\ldots e_{i-1}$ and some essential
  types established between these nodes. In step $i$ we extend $\str
  E_{i-1}$ with node $e_i$ to form $\str E_i$. During the construction
  we will maintain the following invariants:
\begin{itemize}
\item During step $i$ all essential types $\tau_{\mathit{pctr}}$ emitted
  from $e_i$ to nodes $e_j$, where $j<i$, are established. These
  essential types are defined by $\UpArrow{(\rho_i(\sigma_c))}$.
\item For any essential type $\tau_{\mathit{brdr}}$, just before step $i$,
  vector $\vartheta_i(\UPending)$ captures the number of edges
  $\tau_{\mathit{brdr}}$ that ``want'' to be emitted from nodes $e_j$,
  where $j<i$, and accepted by nodes $e_{j'}$, where $j' \geq i$. The
  number is $\sum_{\{\sigma\mid \tau_{\mathit{brdr}} \in
    \sigma\}}\left(\vartheta_i(\UPending)\right)(\sigma)$.  Some of
  these edges will be accepted by $e_i$, and some by nodes with  larger
  indices.
\end{itemize}
The first invariant trivially holds before step $1$. Since
$\vartheta_1(\UPending) = \vec{0}$, as $\tuple{\rho_1,\vartheta_1}$ is
the starting configuration of $H_{\Frame}$, the second invariant also
holds. Now, assuming that both invariants hold and that $\str E_{i-1}$
is constructed, we construct the structure $\str E_{i}$. This is done
in one transition of $H_{\Frame}$, and after the transition the
invariants are preserved.

The loop in
lines~\ref{MFrame:matchWithLess-firstLine}--\ref{MFrame:matchWithLess-lastLine}
is used for preserving the first invariant, that is, for ensuring that
all essential types $\tau_{\mathit{pctr}}\in
\UpArrow{(\rho_i(\sigma_c))}$ are established. In the loop each such
type $\tau_{\mathit{pctr}}$ is considered one by one.

There are two cases. If $\tau_{\mathit{pctr}}$ is an invertible
essential type then there must be a node~$e_j$ with $j<i$, that wants
to emit an edge $(\tau_{\mathit{pctr}})^{-1}$. Therefore, there must
be a partial star-type $\sigma_u$ such that
$(\tau_{\mathit{pctr}})^{-1} \in \sigma_u$ and
$\vartheta_{i}(\UPending)(\sigma_u) > 0$. We guess this type in
line~\ref{MFrame:matchWithLess-firstGuess}.  In the second case, if
$\tau_{\mathit{pctr}}$ is a non-invertible essential type then this
fact is not reflected in the star type of a node that will accept
it. Therefore, we only need to make sure that a node $e_j$, with $j<i$
and with 1-type $\tps{\tau_{\mathit{pctr}}}$ exists. Again,
$\vartheta_i(\UPending)$ contains sufficient data to verify this.  In
line~\ref{MFrame:matchWithLess-secondGuess} a star-type $\sigma_u$ of
such a node $e_j$ is guessed.  Next, in both cases,
line~\ref{MFrame:matchWithLess-firstDecrement} verifies that the guess
is correct and reflects that an edge between $e_i$ and $e_j$ is
established. The partial star type $\sigma_u -
(\tau_{\mathit{pctr}})^{-1}$ may be non-empty, \ie there may be other
essential types that want to be emitted from $e_j$. We store this
information by increasing $\vartheta_{i}(\MPending)(\sigma_u -
(\tau_{\mathit{pctr}})^{-1})$ in
line~\ref{MFrame:matchWithLess-firstIncrement}. Note that we use a
separate vector $\MPending$ as we must avoid assigning more than one
essential type between the same pair of nodes $e_i$ and~$e_j$. For
this reason we index $\UPending$ and $\MPending$ by partial star types
instead of essential types.

To preserve the second invariant, we look at loops in
lines~\ref{MFrame:matchLessWithCurrent-firstLine}--\ref{MFrame:matchLessWithCurrent-lastLine}
and~\ref{MFrame:lastloop-firstline}--\ref{MFrame:lastloop-lastline}.
The loop in
lines~\ref{MFrame:matchLessWithCurrent-firstLine}--\ref{MFrame:matchLessWithCurrent-lastLine}
is used  for verifying that some essential types that nodes $e_j$
(for $j<i$) want to emit to larger nodes, are accepted by $e_i$. Note that
the number of iterations made by the loop is not determined, unlike
the case of the loop considered previously. In
lines~\ref{MFrame:matchLessWithCurrent-guessSigmaG}
and~\ref{MFrame:matchLessWithCurrent-guessTau} a partial star-type
$\sigma_g$ and an essential type $\tau_{\mathit{brdr}} \in \sigma_g$ are
guessed. Note that $\tau_{\mathit{brdr}}$ cannot be invertible, as such
types were considered in the previous loop. After verifying the
correctness of the guess by decrementing an appropriate counter, we
store the information about remaining essential type that $e_j$ wants to
emit by incrementing $\vartheta_{i}(\MPending)(\sigma_g -
(\tau_{\mathit{brdr}}))$.

When the execution reaches line~\ref{MFrame:Inc}, all essential types
between $e_i$ and smaller nodes are established. Since $\rho_i(\sigma_c)$
is
$\UpArrow{(\rho_i(\sigma_c))} \cup \DownArrow{(\rho_i(\sigma_c))}$,
node $e_i$ still wants to emit essential types from
$\DownArrow{(\rho_i(\sigma_c))}$. To reflect this the action in
line~\ref{MFrame:Inc} increases an appropriate counter.

In
lines~\ref{MFrame:guessNext-firstLine}--\ref{MFrame:guessNext-lastLine}
we guess the star type of the next node, or the special value $\bot$
in case $i=k$. If $i<k$ then the next value of $\sigma_c$ must be such
that it accepts invertible essential type $\tau_{\Succ}$ emitted by
$e_i$. This ensures that nodes $e_1,\ldots,e_k$ are guessed in order.

Finally, the loop in
lines~\ref{MFrame:lastloop-firstline}--\ref{MFrame:lastloop-lastline}
may move the content of vector $\MPending$ to $\UPending$, and  by
assumption of the lemma it indeed moves the entire content of
$\MPending$ to $\UPending$, ensuring that the second invariant is
satisfied just before step $i+1$.

Put $\str E = \str E_k$ and consider the partial structure $\str E$.
Since $\tuple{\rho_{k+1},\vartheta_{k+1}}$ is an accepting
configuration, we have $\vartheta_{k+1}(\UPending)(\sigma) = 0$, for
every non-empty star type $\sigma$. Thus $\sum_{\{\sigma\mid
  \tau_{\mathit{brdr}} \in
  \sigma\}}\left(\vartheta_{k+1}(\UPending)\right)(\sigma) = 0$, for
every essential type $\tau_{\mathit{brdr}}$. By the second invariant we
conclude that for all $i\in \underline{k}$, all essential types
$\tau\in\DownArrow{( \rho_i(\sigma_c))}$ are emitted from $e_i$ and
accepted by some $e_j$, where $j>i$ and $j\leq k$. This means that
$\vartheta_i(\UPending)$ is the cut at point $e_i$ in $\str E$ (first
condition of the lemma). As guaranteed by first invariant, in step $i$
we emitted from $e_i$ all types $\tau\in\UpArrow{(\rho_i(\sigma_c))}$.
Since $\rho_i(\sigma_c)$ is $\UpArrow{(\rho_i(\sigma_c))} \cup
\DownArrow{(\rho_i(\sigma_c))}$, the node $e_i$ in $\str E_k$ does not
want to emit any essential type. This means that $e_i$ realises the star
type $\stp{\str E}(e_i) = \rho_i(\sigma_c)$ (second condition of the
lemma). Finally, as already observed, in
lines~\ref{MFrame:guessNext-firstLine}--\ref{MFrame:guessNext-lastLine}
the transition of $H_{\Frame}$ ensured that nodes of $\str E$ are
guessed in order (third condition of the lemma).
\end{proof}

In previous lemma we constructed a partial structure with all edges
labelled with essential types. This structure is still not a complete
graph because edges labelled with silent types are missing. In the
following lemma we add these edges.

\begin{lemma}\label{lemma:sat-c2}
  Let $\Frame = \tuple{\Kings,\ST,\Xi,\Sigma,\bar{f}}$ be a frame.
  If $H_{\Frame}$ is non-empty then there exists a structure $\str
  M\in \ClassO$ that fits $\Frame$.
\end{lemma}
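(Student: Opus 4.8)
The plan is to convert an accepting run of $H_{\Frame}$ into a structure in two stages: first recover all essential edges, then fill in the silent ones. First I would invoke Lemma~\ref{lem:emptyVec} to obtain an accepting run $\tuple{\rho_1,\vartheta_1},\ldots,\tuple{\rho_{k+1},\vartheta_{k+1}}$ with $\vartheta_i(\MPending)=\vec{0}$ throughout, and then Lemma~\ref{lem:essential-substructure-exists} to obtain a partial structure $\str{E}$ on nodes $e_1,\ldots,e_k$ such that $\stp{\str{E}}(e_i)=\rho_i(\sigma_c)$, such that $\Succ(e_i,e_{i+1})$ holds, and such that $\vartheta_i(\UPending)$ is the cut at $e_i$. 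Because the run is accepting, $\vartheta_{k+1}(\UPending)$ vanishes on every non-empty partial star type, so every essential edge demanded by a star type is actually matched; hence $\str{E}$ already carries all essential edges and is missing only edges between pairs that carry no essential type in either direction.

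Next I would complete $\str{E}$ to a full relational structure $\str{M}$. Any pair one of whose endpoints carries a royal $1$-type is already joined: the $2$-type pointing at a king has its target $1$-type in $\Kings$, hence is essential and sits in the emitter's star type, so it is realised in $\str{E}$. Therefore each still-unconnected pair $\{e_j,e_i\}$ with $e_j$ preceding $e_i$ consists of two non-royal nodes. For every such pair I would reuse the guard of line~\ref{MFrame:guessAllowed}: since the star type of $e_i$ was guessed from $\Allow(\Visit)$ and $\tp{\str{M}}(e_j)$ is a non-royal $1$-type in $\Visit$, the definition of $\Allow$ furnishes a silent $\tau\in\Xi$ with $\tpf{\tau}=\tp{\str{M}}(e_j)$, $\tps{\tau}=\tp{\str{M}}(e_i)$ and $(x\Lin y)\in\tau$; I set $\tp{\str{M}}(e_j,e_i)=\tau$ and $\tp{\str{M}}(e_i,e_j)=\tau^{-1}$. (Closing $\Xi$ under inverse is harmless, since Definition~\ref{def:models} already demands $\models\mu^{-1}\rightarrow\alpha$ for every $\mu\in\Xi$, and by Condition~\ref{frame:invertible} silent types contain no $\Succ$.) I would then check $\str{M}\in\ClassO$: the essential edges are oriented by the $\UpArrow{\sigma}/\DownArrow{\sigma}$ bookkeeping of Lemma~\ref{lem:essential-substructure-exists} and the inserted silent edges by the clause $(x\Lin y)\in\tau$, so every edge from a smaller to a larger index satisfies $x\Lin y$. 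Thus $\Lin^{\str{M}}=\{\tuple{e_i,e_j}\mid i<j\}$ is a strict linear order, and since only consecutive nodes carry $\Succ$, the relation $\Succ^{\str{M}}$ is its successor.

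It then remains to verify that $\str{M}$ fits $\Frame$ in the sense of Definition~\ref{def:fits}. Adding silent edges creates no new essential type, so the star type of each $e_i$ is unchanged and equals $\rho_i(\sigma_c)\in\ST$; the only silent types realised are the inserted ones, all drawn from $\Xi$; and the realised royal $1$-types are contained in $\Kings$, because the first clause of $\Allow$ lets a royal type be guessed at most once, while the acceptance condition $\Kings\subseteq\rho_F(\Visit)$ forces every member of $\Kings$ to be guessed at least once.

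The step I expect to be the main obstacle is the reverse containment: showing that \emph{no} non-royal $1$-type of the frame accidentally becomes royal in $\str{M}$, \ie that every realised $1$-type outside $\Kings$ occurs at least twice. The natural remedy is to duplicate any non-royal node $e_i$ that would otherwise be unique, inserting the copy adjacent to $e_i$ in the order and joining it to every other non-royal node by the silent types that $\Allow$ (equivalently, Condition~\ref{def:normal-second} of Definition~\ref{def:normal-structure}) guarantees. The delicate point is the essential edges of the copy: edges to kings and non-invertible essential edges can be routed to the existing targets, but each invertible message edge must be matched to a fresh copy of its partner, so one has to duplicate whole matched components at once and argue that this terminates while preserving the star-type set, the $\Succ$-chain, and the exact identity of $\Kings$. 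This bookkeeping, rather than any single clever idea, is where the real work lies.
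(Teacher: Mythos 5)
Your first two stages coincide with the paper's own proof: invoke Lemma~\ref{lem:emptyVec}, then Lemma~\ref{lem:essential-substructure-exists}, then complete the partial structure with silent types from $\Xi$, justified by the guard of line~\ref{MFrame:guessAllowed}; your side remarks there (pairs touching a king are already joined by an essential edge, silent types contain no $\Succ$ by Condition~\ref{frame:invertible}, every type in \Kings is realised exactly once by the acceptance condition together with \Allow) are all sound. The gap is your last paragraph. You leave the ``no accidental kings'' step as the remaining work, and the repair you sketch --- duplicating a uniquely realised non-royal node --- cannot be carried out. A copy of $e_i$ must be inserted somewhere into the linear order, and by Condition~\ref{frame:invertible} its $\Succ$-edges to its two new neighbours must be invertible essential types recorded in those neighbours' star types; in general no insertion point admits this. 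Concretely, take a frame with two royal 1-types for the two endpoints and one non-royal 1-type $\pi_0$ between them, where the $\Succ$-types of the three star types chain them in a fixed order (the edges into $\pi_0$ made essential by a message predicate). Then $H_{\Frame}$ has an accepting run, yet every structure whose star types lie in $\ST$ has exactly three elements and realises $\pi_0$ exactly once, so $\pi_0$ is royal in it. Hence the property you set out to prove --- that every realised 1-type outside \Kings occurs at least twice --- is simply false for some frames, and no amount of bookkeeping rescues the duplication.

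What the paper does instead (implicitly; its own proof just asserts that fitting ``is clear from the construction'') is never to need the equality clause of Definition~\ref{def:fits} in this direction. Look at how the lemma is consumed: in the second statement of Proposition~\ref{prop:red} the only facts used are that for every realised 2-type $\mu$, either $\mu$ or $\mu^{-1}$ lies in $\Xi\cup\tau(\ST)$ (which by Definition~\ref{def:models} settles the universal conjunct), and that every element's star type, computed with respect to the frame's set \Kings rather than with respect to the royal types of $\str M$ itself, lies in $\ST$, while the inserted silent types carry no message predicates, so the counting conjuncts are unaffected. Your constructed $\str M$ has all of these properties, exactly by your stage-three argument; the equality of royal-type sets is needed only in the converse direction (the first statement of Proposition~\ref{prop:red}), where the frame is extracted from a normal model and the equality holds by construction. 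So the correct resolution of the difficulty you spotted is to read ``fits'' with the frame's \Kings and with its first clause weakened to the containment $\Kings\subseteq\{\kappa \mid \kappa \text{ is royal in } \str M\}$ --- which you already established --- not to perform surgery on the model. Your instinct that the literal Definition~\ref{def:fits} is not met is a fair reading of an imprecision in the paper, but the fix lies in how the definition is used downstream, and your proposed fix is a dead end.
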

\begin{proof}
  Assume that $H_{\Frame}$ is non-empty. By Lemma~\ref{lem:emptyVec}
  there exists an accepting run of $H_{\Frame}$:
  $\tuple{\rho_1,\vartheta_1},\ldots,
  \tuple{\rho_k,\vartheta_k},\tuple{\rho_{k+1},\vartheta_{k+1}}$ where
  $\vartheta_i(\MPending) = \vec{0}$ for every $i\in \underline{k+1}$.
  By Lemma~\ref{lem:essential-substructure-exists} there exists a
  partial structure $\str E$ whose nodes sequenced in order are
  $e_1,\ldots e_k$ and $\stp{\str E}(e_i) = \rho_i(\sigma_c)$, for
  every $i\in \underline{k}$. Note that $\Type(\sigma_{c}) = \ST$ and
  therefore $\stp{\str E}(e_i)\in \ST$. Moreover, every royal type
  from $K$ is realised in $\str E$, as the acceptance condition of
  $H_{\Frame}$ requires that $\rho_{k+1}(\Visit)$ contains the set of
  kings. We now show that we may supplement~$\str E$ to a full
  relational structure $\str M\in \ClassO$ by silent types from
  $\Xi$. This will be enough to show that $\str M$ is a structure that
  fits $\Frame$.

  Take any pair of elements $\tuple{e_i,e_j}$ such that no type is
  established for this pair in $\str E$. This means that neither $e_i$
  nor $e_j$ is a king. \Wlg assume that $i<j$. Let $\pi_i = \tp{\str
    E}(e_i)$ and $\pi_j = \tp{\str E}(e_j)$. Consider the execution of
  $H_{\Frame}$ in step $j-1$. In line~\ref{MFrame:guessAllowed} of the
  transition the star type of $e_j$ is guessed in such a way that all
  already visited non-royal 1-types can be connected by a silent type
  from $\Xi$ to the 1-type $\pi(\sigma_c)$, \ie to $\pi_j$
  (cf. Condition~\ref{def:normal-second} in
  Definition~\ref{def:normal-structure}). Since $\pi_i$ is one of such
  visited types, we may find a type $\tau\in \Xi$ such that
  $\tpf{\tau} = \pi_i$, $\tps{\tau} = \pi_j$ and $(x<y)\in \tau$. We
  then establish the type $\tau$ between~$e_i$ and $e_j$, and we
  continue the procedure for remaining pairs of nodes. In this way we
  construct the structure $\str M$. It is clear from the construction
  that $\str M$ fits $\Frame$ and that $\str M\in \ClassO$.
\end{proof}

\subsection{Main theorem}

\begin{theorem}\label{thm:main}
  The finite satisfiability problem for \CTwoOneLinS is decidable.
\end{theorem}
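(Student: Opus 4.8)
The plan is to assemble the machinery developed throughout this section into a single decision procedure and to argue that every step is effective and that the overall search space is finite. First I would dispose of a degenerate case: since the normal-form transformation only preserves satisfiability over structures of cardinality greater than~$1$, the algorithm begins by directly checking whether the input formula $\psi$ has a one-element model, which is trivial. Assuming no such model exists, I convert $\psi$ into an equisatisfiable normal-form formula $\varphi$ over a classified signature $\tuple{\Sigma,\bar{f}}$, incurring the exponential blowup noted earlier.

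Next I would chain the earlier results into one equivalence. By Lemma~\ref{lem:normalstr}, $\varphi$ is finitely satisfiable if and only if it has a finite normal model over some polynomially larger signature $\tuple{\Sigma',\bar{f'}}$ with $|\bar{f'}| = |\bar{f}|+2$ (the converse direction being immediate, as a normal model is in particular a model, and the extra predicates of $\Sigma'$ do not occur in $\varphi$). Combining Proposition~\ref{prop:red} with Lemma~\ref{lemma:sat-c1} and Lemma~\ref{lemma:sat-c2}, I would establish the central claim: $\varphi$ has a finite normal model if and only if there exists a frame $\Frame$ over $\tuple{\Sigma',\bar{f'}}$ such that $\Frame \models \varphi$ and the high-level automaton $H_{\Frame}$ is non-empty. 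For the forward direction, a normal model $\str A$ yields a frame $\Frame$ with $\str A$ fitting $\Frame$ and $\Frame \models \varphi$ by Proposition~\ref{prop:red}(1), and then $H_{\Frame}$ is non-empty by Lemma~\ref{lemma:sat-c1}; for the backward direction, a non-empty $H_{\Frame}$ yields by Lemma~\ref{lemma:sat-c2} a structure $\str M \in \ClassO$ fitting $\Frame$, whence $\str M \models \varphi$ by Proposition~\ref{prop:red}(2).

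Finally I would argue effectiveness. Over the fixed signature $\tuple{\Sigma',\bar{f'}}$ there are only finitely many $1$-types, $2$-types and star types, hence finitely many frames, so the algorithm can enumerate all candidate frames $\Frame$. For each, the condition $\Frame \models \varphi$ is a purely local check in the sense of Definition~\ref{def:models}, decidable in finite time, and the non-emptiness of $H_{\Frame}$ is decidable by Proposition~\ref{prop:compile}, which compiles $H_{\Frame}$ into an ordinary multicounter automaton whose emptiness is decidable. The procedure accepts precisely when the one-element check succeeds or when some enumerated frame satisfies both conditions.

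The main obstacle is not any new construction but keeping the bookkeeping of the equivalence chain airtight: in particular, tracking which direction requires a \emph{normal} fitting structure (Lemma~\ref{lemma:sat-c1}) versus merely a structure in $\ClassO$ (Lemma~\ref{lemma:sat-c2}), and confirming that the signature extension from Lemma~\ref{lem:normalstr} is computable so that the frame enumeration remains finite. Since no elementary bound is claimed, no complexity analysis beyond termination is required.
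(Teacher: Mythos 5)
Your proposal is correct and follows essentially the same route as the paper: reduce to normal form, use Lemma~\ref{lem:normalstr} to pass to normal models, apply Proposition~\ref{prop:red} to translate between models and frames, and decide each frame via the HMCA (you invoke Lemmas~\ref{lemma:sat-c1} and~\ref{lemma:sat-c2} directly where the paper cites their combination, Proposition~\ref{prop:sat-c2}), with decidability following from the finiteness of the frame space and Proposition~\ref{prop:compile}. Your explicit treatment of the one-element degenerate case and of the signature extension in Lemma~\ref{lem:normalstr} is bookkeeping the paper glosses over, and it is handled correctly.
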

\begin{proof}
  We may assume that the input \CTwoOneLinS formula $\varphi$ is in
  normal form (otherwise it can be brought to the normal form).  A
  non-deterministic decision procedure for the finite satisfiability
  problem guesses a frame $\Frame$ such that $\Frame\models\varphi$
  and checks if HMCA $H_{\Frame}$ is non-empty.  If so, then by
  Proposition~\ref{prop:sat-c2} we obtain a structure $\str M\in
  \ClassO$ that fits $\Frame$. Because $\str M$ fits  $\Frame$
  and $\Frame\models\varphi$, by Proposition~\ref{prop:red} we
  conclude that $\str M\models \varphi$. This shows that $\varphi$ is
  finitely satisfiable, so our procedure is sound. On the other hand,
  if $\varphi$ is finitely satisfiable then, by
  Lemma~\ref{lem:normalstr}, it has a model $\str M$ which is a normal
  structure. Again, by Proposition~\ref{prop:red} there exists a frame
  $\Frame$ such that $\str M$ fits $\Frame$ and
  $\Frame\models\varphi$. By Proposition~\ref{prop:sat-c2} we conclude
  that $H_{\Frame}$ is non-empty, so the procedure is complete.

  Note that the size of $\Frame$ is at most doubly exponential in the
  size of the formula's signature $\tuple{\Sigma,\bar{f}}$, so there
  are finitely many frames that can be guessed. Furthermore, the
  emptiness problem of HMCA $H_{\Frame}$ is decidable, as stated in
  Proposition~\ref{prop:compile}.
\end{proof}

The decidability result for \CTwoOneLin can be applied to decide 
\CTwo with one acyclic relation. Every acyclic relation can be extended to a
linear order by topological ordering, and conversely, every relation
contained in a linear order is acyclic. Thus we have the following corollary.
\begin{corollary}
  The finite satisfiability problem for \CTwoOnePrec, where $\prec$ is
  interpreted as an acyclic relation, is decidable.
\end{corollary}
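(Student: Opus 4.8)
The plan is to reduce finite satisfiability of \CTwoOnePrec, with $\prec$ interpreted as an acyclic relation, to finite satisfiability of \CTwoOneLin, which is decidable. Decidability of the latter is immediate from Theorem~\ref{thm:main}: every \CTwoOneLin formula is in particular a \CTwoOneLinS formula that happens not to mention $\Succ$, and for such a formula having a finite model in which $\Lin$ is a linear order is equivalent to having one in which, additionally, $\Succ$ is interpreted as the induced successor (the formula places no constraint on $\Succ$, so it can always be added). Hence finite satisfiability of \CTwoOneLin is a special case of the problem decided in Theorem~\ref{thm:main}.

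The reduction rests on the classical fact that a binary relation over a \emph{finite} set is acyclic if and only if it is contained in some strict linear order: one direction is topological sorting, and the other is immediate, since any relation contained in a strict linear order inherits irreflexivity and the absence of cycles. Given an input \CTwoOnePrec formula $\varphi$ in whose signature $\prec$ is the distinguished acyclic symbol, I would form the \CTwoOneLin formula
\[
\varphi' \;=\; \varphi \wedge \forall x\forall y.\bigl(\prec(x,y)\rightarrow x\Lin y\bigr),
\]
where $\Lin$ is now taken as the distinguished linear-order symbol and $\prec$ is demoted to an ordinary, non-distinguished binary predicate. The added conjunct is expressible in \CTwoOneLin and asserts exactly that $\prec$ is a subrelation of $\Lin$.

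First I would prove the forward direction: if $\varphi$ has a finite model $\str A$ with $\prec^{\str A}$ acyclic, then by the topological-sorting fact there is a strict linear order on the universe of $\str A$ extending $\prec^{\str A}$; interpreting $\Lin$ as that order expands $\str A$ to a structure in \ClassO which satisfies both $\varphi$ and the new conjunct, so $\str A\models\varphi'$. Conversely, any finite $\str B\in\ClassO$ with $\str B\models\varphi'$ satisfies $\prec^{\str B}\subseteq\Lin^{\str B}$, whence $\prec^{\str B}$ is acyclic, and the reduct of $\str B$ forgetting $\Lin$ is a finite model of $\varphi$ in which $\prec$ is acyclic. Thus $\varphi$ is finitely satisfiable over acyclic structures if and only if $\varphi'$ is finitely satisfiable in \CTwoOneLin, and the latter is decidable by the previous paragraph.

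Since the reduction is a single syntactic conjunction, there is no real combinatorial difficulty; the only point demanding care—and hence the main obstacle—is verifying the two-way equivalence cleanly. In particular I must check that demoting $\prec$ to a non-distinguished binary symbol while introducing $\Lin$ leaves every other constraint of $\varphi$ untouched (it does, because $\varphi$ never refers to $\Lin$ and the interpretation of $\prec$ remains free apart from being contained in $\Lin$), and that the topological-sorting step genuinely yields a linear order, so that the expanded structure lies in \ClassO. Both verifications are routine, and the corollary follows.
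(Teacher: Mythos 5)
Your proposal is correct and matches the paper's own argument: the paper uses exactly the same reduction, conjoining $\varphi$ with $\forall x\forall y\, (x\prec y \rightarrow x\Lin y)$ and invoking topological sorting in one direction and inherited acyclicity in the other. Your additional remark that decidability of \CTwoOneLin follows from Theorem~\ref{thm:main} by treating $\Succ$ as an unconstrained symbol interpretable as the induced successor is a detail the paper leaves implicit, but it is the intended justification.
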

\begin{proof}
  A~formula $\varphi$ is finitely satisfiable in \CTwoOnePrec if and
  only if the formula $\varphi \wedge \forall x\forall y \;\; x\prec y
  \Rightarrow x\Lin y$ is finitely satisfiable in \CTwoOneLinPlus.
\end{proof}

\section{Conclusion}
We have shown several complexity results for finite satisfiability of
two-variable logics with counting quantifiers and linear orders. In
particular we proved \NEXPTIME-completeness of the problem for
\CTwoLinSucc, VAS-completeness for \CTwoOneLin and undecidability for
\CTwoLinUndecid. The results for \CTwoOneLin extend to \CTwo with one
acyclic relation. There are still some unsolved cases, including
\CTwoLinUnknownA and \CTwoLinUnknownB.

There are lots of open problems in the area.  One of them is general
satisfiability. None of the logics considered here has the finite model
property. Our techniques rely on finiteness of the underlying
structure, so they cannot be directly applied to general
satisfiability on possibly infinite structures.  Among possible
directions for future work one can choose combination of \CTwo with
other interpreted binary relations like preorders~\cite{ManuelZ13} or
transitive relations~\cite{szwastT13}. Another possibility is to
consider \CTwo with closure operations on some relations, like
equivalence closure~\cite{KieroMPT-lics12} or deterministic transitive
closure~\cite{CharatonikKM14}.

\bibliographystyle{plain}
\bibliography{c2lin}

\end{document}